\newcolumntype{M}{>{\centering\arraybackslash}m{\dimexpr.25\linewidth-2\tabcolsep}}
\begin{document}


\title{\Large
Bounds on the Size of Sound Monotone Switching\\Networks Accepting Permutation
Sets of Directed Trees
}

\author{\normalsize
Joshua Brakensiek and Aaron Potechin$^{a}$\\
{\normalsize ${}^a$Massachusetts Institute of Technology}
}

\date{} 

\maketitle

\begin{abstract}

In this paper, we prove almost tight bounds on the size of sound monotone switching networks accepting permutations sets of directed trees.  This roughly corresponds to proving almost tight bounds bounds on the monotone memory efficiency of the directed ST-connectivity problem for the special case in which the input graph is guaranteed to have no path from $s$ to $t$ or be isomorphic to a specific directed tree.
\end{abstract}


\newtheorem{thm}{Theorem}[section]
\newtheorem{prop}[thm]{Proposition}
\newtheorem{lem}[thm]{Lemma}
\newtheorem{cor}[thm]{Corollary}
\newtheorem*{sublem}{Sublemma}
 
\theoremstyle{definition}
\newtheorem{df}{Definition}[section]
\newtheorem{conj}{Conjecture}[section]

\theoremstyle{remark}
\newtheorem*{note}{Note}
\newtheorem*{rem}{Remark}
\newtheorem*{case}{Case}
\newtheorem*{subcase}{Subcase}
\section{Introduction}

One long-standing open problem in computational complexity theory is the
minimal space complexity of the ST-connectivity problem.   The
ST-connectivity problem is formulated as follows: given a directed graph $G$ with starting and ending vertices $s$ and $t$, is there a directed path from $s$
to $t$?  This inquiry is not difficult to answer; the challenge
is to answer this question with the minimal amount of space
necessary.  In a celebrated result by Savitch \cite{Savitch1}, it was
shown that one can answer this problem on an $n$-vertex graph with
$O((\log n)^2)$ space.  Reingold
\cite{Reingold1} showed that if $G$ is an undirected graph, then only
$O(\log n)$ space is required.

One type of computation which gives insight into this problem is
monotone computation.  This kind of computation operates by making deductions from the existence of edges;
it does not make deductions from the absence of certain edges.  We
analyze the ST-connectivity problem using a structure called a \emph{monotone switching network}.  Defined more precisely
in Section \ref{subsec:def}, a monotone switching network is an undirected graph
with labeled edges based on queries that a program may make about the
existence of edges in the input graph.  We can think of the vertices of this network as
representing possible memory states of a program.

Finding bounds on the size of the monotone switching network roughly corresponds to finding 
bounds on the amount of space needed to compute ST-connectivity in a
monotone computation model.  Potechin~\cite{Potechin1} has shown
that in the general case, a monotone switching network needs a size of
$n^{\Theta(\log n)}$, which corresponds to needing $\Theta((\log n)^2)$
space.  However, finding lower bounds on the size of monotone
switching networks does not give us lower bounds on the amount of space
needed but does tell us the limits of monotone computation.
 To obtain general lower bounds on the amount of memory
needed, one must analyze a broader  class of switching networks, the
non-monotone switching networks~\cite{Potechin1}.

We determine bounds on the sizes of monotone switching networks
for special cases of the ST-connectivity problem.  In these special
cases, we assume that the input graph is isomorphic to a given graph via
permutation of the vertices.  For example the results in Theorem \ref{thm:out_tree} concern the case where
every vertex in the given graph has a unique path from $s$ to itself.

In our main result, Theorem \ref{thm:all_tree}, bounds are found in the case of a general
tree.  If we define $m(\sigma(G))$ to be the size of the monotone
switching network and $\ell$ to be the length of the path from $s$ to $t$, we found upper and lower bounds $B_1$ and $B_2$, respectively, on $m(\sigma(G))$ such that $$\log (B_1/B_2)\le O(\log \log \ell).$$  The previous best bounds satisfied $\log (B_1/B_2)\le O(\log \ell)$~\cite{Potechin1}.

\subsection{Outline}

In Section \ref{subsec:def}, we formally define monotone switching
networks and related terminology.  In Section \ref{sec:prev_results}, we
summarize previous work with monotone switching networks.  In Section \ref{subsec:tech}, we present techniques for bounding the sizes of certain classes of monotone switching networks which are
crucial in obtaining the results in this paper.   Section \ref{sec:main_results} provides proof of the main
result, bounding the size of sound monotone switching networks in the
case of general directed trees.

\section{Preliminary Definitions}\label{subsec:def}

To discuss monotone switching networks and their properties,
the following terminology was introduced by Potechin~\cite{Potechin1},
which we also use.

\begin{df}
Given a set of vertices $V\cup \{s,t\}$, define a \textbf{\emph{monotone switching
network for directed connectivity}} as an undirected
graph $G'$ on the set of vertices $V'\cup \{s',t'\}$.  Each edge between two vertices of $G'$ is given
a label of the form $a\to b$ where $a,b\in V\cup \{s,t\}$.
\end{df}

\begin{note}
For succinctness, we refer to monotone switching networks for directed connectivity as \emph{monotone switching networks}. 
\end{note}

An example of a monotone switching network is depicted in Figure \ref{fig:msn}.

\begin{figure}
\begin{center}
\includegraphics[height=2.2in, bb=0 0 120 160]{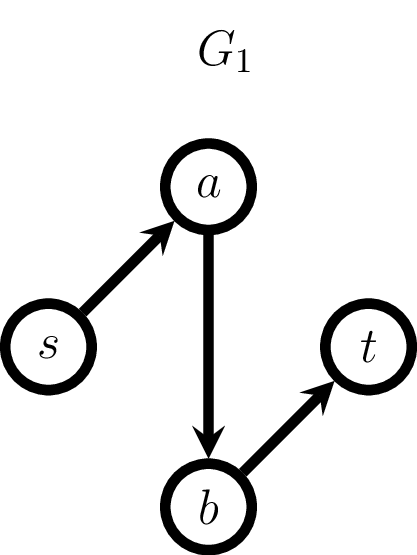} 
\hfill
\includegraphics[height=2.2in, bb=0 0 120 160]{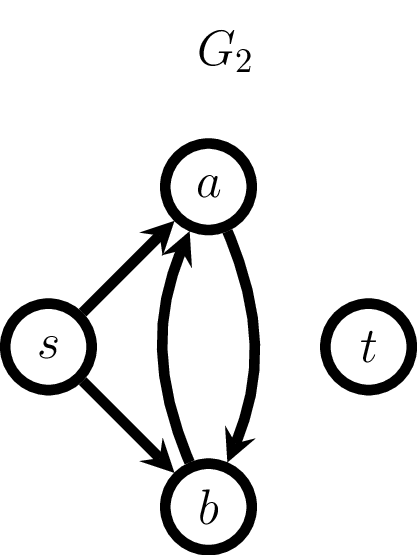}
\hfill
\includegraphics[height=2.2in, bb=0 0 212 252]{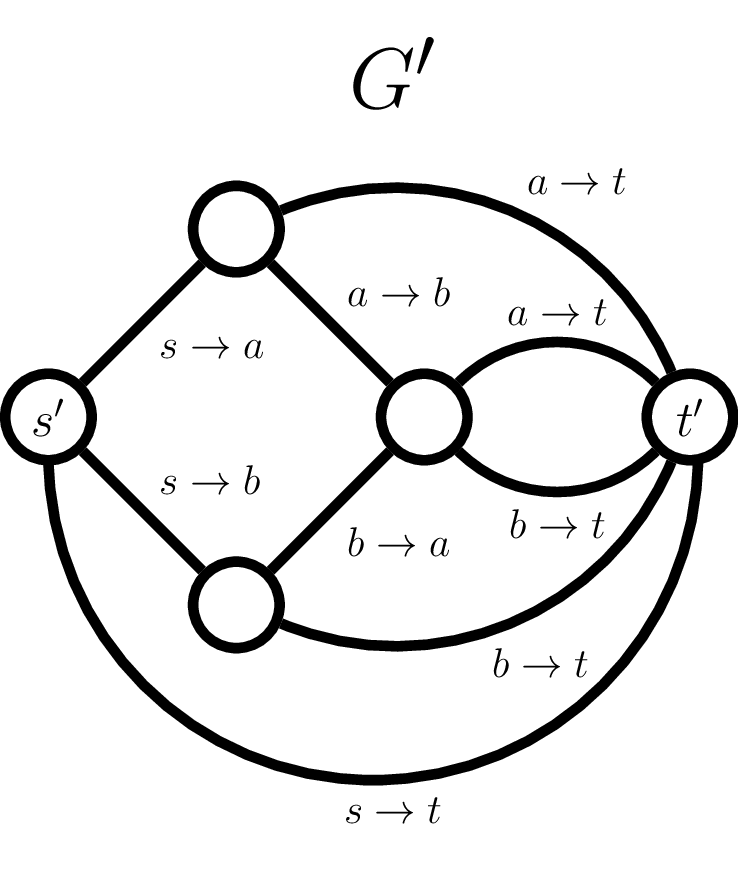}
\end{center}
\caption{Two input graphs $G_1$ and $G_2$  and a monotone switching
  network $G'$.  Notice that the monotone switching network accepts
  $G_1$ because $G'$ has a path from $s'$ to $t'$ with labels $s\to a$, $a\to b$, and
  $b\to t$, each of which is in $G_1$. Conversely, $G'$ rejects
  $G_2$ because none of the edges to $t'$ have labels which are in
  $G_2$.  More generally, $G'$ is both complete and sound.} 
\label{fig:msn}
\end{figure}

\begin{df}
Define the \textbf{\emph{size}} of a monotone switching network $G'$ as the number of
vertices of $G'$.
\end{df}

We wish to analyze how $G'$ relates to various graphs $G$ on the set of
vertices $V\cup \{s,t\}$.  Definition \ref{df:accept} quantifies this.

\begin{df}\label{df:accept}
  Given a  directed graph $G$ on $V\cup \{s,t\}$, called the \textbf{\emph{input
    graph}}, say that a monotone
switching network $G'$ \textbf{\emph{accepts}} $G$ if and only if there exists a path
from $s'$ to $t'$ in $G'$ such that the label of each edge of the path
corresponds to an edge of $G$.  For example, the label $a\to b$
corresponds to the directed edge from $a$ to $b$ in $G$.  If $G'$ does not accept $G$, then $G'$
\textbf{\emph{rejects}} $G$.   
\end{df}
For an example, see Figure \ref{fig:msn}. We analyze monotone switching networks based on which graphs
they accept and reject.

\begin{df}\label{df:complete}
A monotone switching network $G'$ is \textbf{\emph{complete}} if it
accepts any input graph $G$ for which there is a path from $s$ to $t$.
\end{df}

\begin{df}\label{df:sound}
A monotone switching network $G'$ is \textbf{\emph{sound}} if it rejects any input graph
$G$ for which there is no path from $s$ to $t$.
\end{df}
Unless explicitly stated, we assume that all monotone switching networks
under consideration are sound, which means that the
computations which the monotone switching network simulates involve sound
logical reasoning. On the other hand, almost none of the monotone
switching networks under consideration are complete.  In other words, the monotone switching networks may find the existence of a path from $s$ to $t$ for some input graphs but not others.

\begin{df}
Given a set $I$ of input graphs on $V\cup \{s,t\}$, where for each graph
$G\in I$ there is a path from $s$ to $t$, define $m(I)$ to be
the smallest possible size of a sound monotone switching network which
accepts all the elements of $I$.
\end{df}    

\begin{figure}
\begin{center}
\includegraphics[width=1.6in, bb=0 0 120 157]{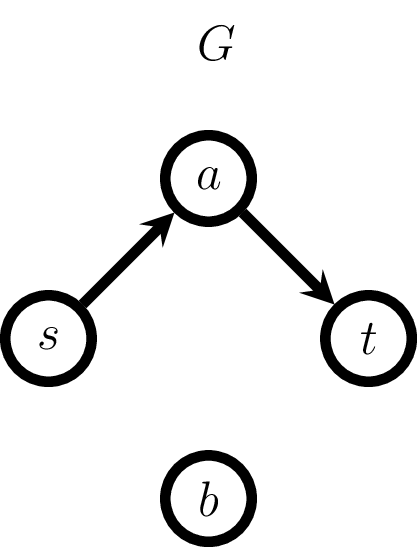}
\hfill 
\includegraphics[width=1.6in, bb=0 0 120 157]{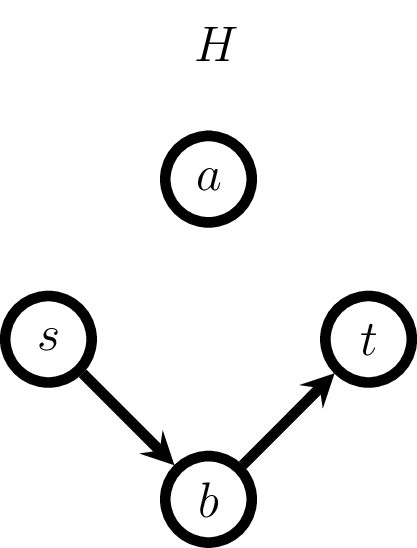}
\hfill
\includegraphics[width=1.6in, bb=0 0 120 158]{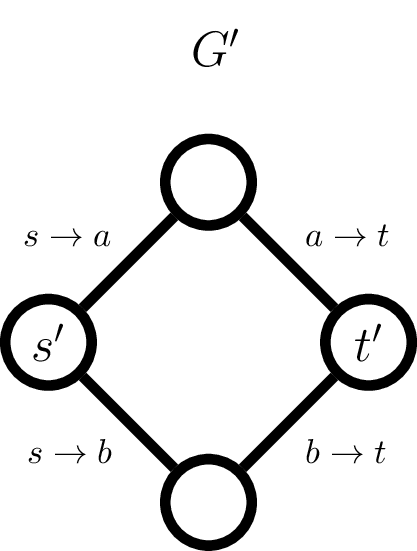}
\end{center}
\caption{By definition of $\sigma$, we have that $\sigma(G)=\{G,H\}$.  Notice that the monotone switching
  network $G'$ accepts both graphs and is sound.  Thus, $m(\sigma(G))\le 4$.
In fact, $m(\sigma(G))=4$.}
\label{fig:sigma}
\end{figure}

In Sections \ref{sec:prev_results} and \ref{sec:main_results}, we find bounds on the value of $m(I)$ for specific sets of graphs
$I$.  The sets of graphs we primarily investigate are permutation sets.

\begin{df}
Let $G$ be a directed graph on the set of vertices $V\cup \{s,t\}$.  For
any subset $W$ of $V\cup \{s,t\}$, define $\sigma_{W}(G)$ to be the set
of graphs which are all possible permutations of the labels of the vertices $V\cup
\{s,t\}$ that fix all vertices in $W$.  Let $\sigma(G)=\sigma_{\{s,t\}}(G)$.  These sets
are called \textbf{\emph{permutation sets}}.
\end{df}
 
We bound the value of $m(\sigma(G))$ for various graphs
$G$.  An example is given in Figure \ref{fig:sigma}.  To aid in finding these bounds, the
results listed in Section \ref{sec:prev_results} are used.

\section{Previous Results} \label{sec:prev_results}

The results discussed in this section, concerning the value of $m(I)$ for various sets of
input graphs $I$, were discovered by Potechin \cite{Potechin1, Potechin2}. We assume that for each $G\in I$, its vertices are taken from the set of
$n$ vertices $V\cup \{s,t\}$.  Let $\mathcal P$ be the set of directed
graphs with $n$ vertices such that there is a path from $s$ to $t$.  We then have the
following theorem about $\mathcal P$.

\begin{thm}[Potechin~\cite{Potechin1}]\label{thm:all}
We have that $$m(\mathcal P)=n^{\Theta(\lg n)},$$ where $\lg n$ stands
for $\log_2(n)$.
\end{thm}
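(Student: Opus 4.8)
The plan is to prove matching upper and lower bounds separately, since the theorem is a two-sided estimate. For the upper bound $m(\mathcal{P}) \le n^{O(\lg n)}$, I would construct an explicit sound (and in fact complete) monotone switching network of that size. The natural idea is a divide-and-conquer construction mimicking Savitch's algorithm: the memory states of the network correspond to ``certificates'' of partial connectivity information. Concretely, one builds a network whose vertices encode, at recursion depth $k$, assertions of the form ``there is a path of length at most $2^k$ from $u$ to $v$'' for pairs $(u,v) \in (V\cup\{s,t\})^2$; the network branches on an intermediate vertex $w$ and recursively verifies reachability $u \rightsquigarrow w$ and $w \rightsquigarrow v$. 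Since the recursion has depth $\lg n$ and each level's state is described by boundedly many vertices from an $n$-element set, the total number of states is $n^{O(\lg n)}$. One must check that this network is sound — every accepting $s'$-$t'$ path forces an actual $s$-$t$ path in the input graph — which follows because each edge label records a genuine edge query and the recursive structure only ever concatenates verified subpaths.

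For the lower bound $m(\mathcal{P}) \ge n^{\Omega(\lg n)}$, the heart of the matter is a reduction to a hard family of instances. The standard approach is to exhibit a distribution over ``yes'' instances (graphs with an $s$-$t$ path, e.g. long paths or layered graphs with a hidden path) together with ``no'' instances close to them, and argue that any sound network accepting all the yes-instances must have many vertices in order to distinguish them. The key technical device is a potential/progress argument: one assigns to each vertex of a hypothetical small network $G'$ a ``cut'' or a Fourier-analytic weight, and shows that soundness forces these weights to behave sub-multiplicatively along paths while completeness forces a large total, yielding the bound on the number of vertices. This is essentially the Potechin Fourier-analytic / reversible-pebbling framework.

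The step I expect to be the main obstacle is the lower bound, specifically controlling the interaction between soundness and the combinatorial structure of the hard instances. Unlike the upper bound, which is a direct construction, the lower bound requires a clever choice of the hard graph family (plausibly the ``pyramid'' or ``GEN''-type graphs, or a recursively defined tree-like gadget) and a matching invariant — typically showing that along any $s'$-to-$t'$ path in the network, some monovariant can change by only a bounded factor per step, so $\Omega(\lg n)$ ``scales'' of progress require $n^{\Omega(1)}$ vertices each, multiplying to $n^{\Omega(\lg n)}$. Making the monovariant simultaneously respect soundness (so it cannot be cheated by spurious edges) and force the exponential blowup is where the real work lies; everything else is bookkeeping. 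Since this is a previously published result of Potechin that we are only quoting, I would present the upper-bound construction in full and cite \cite{Potechin1} for the matching lower bound, remarking that the lower bound is the substantially harder direction.
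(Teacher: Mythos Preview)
The paper does not prove this theorem at all: Theorem~\ref{thm:all} appears in Section~\ref{sec:prev_results} (``Previous Results'') and is simply quoted from \cite{Potechin1} with no argument given. So there is nothing to compare your proposal against --- the paper's ``proof'' is a bare citation.

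You seem to have realized this by the end of your proposal, where you say you would cite \cite{Potechin1} for the lower bound. That is in fact all the paper does for \emph{both} directions. Your sketch of the upper bound (a Savitch-style recursive switching network) and of the lower-bound machinery (Fourier-analytic progress measure over a hard family) is an accurate summary of how the result is obtained in the cited work, but for the purposes of this paper none of that exposition is needed: the correct move is simply to state the theorem and attribute it, as the authors do.
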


\begin{note} The bound we get for $m(\mathcal P)$ uses big $\Theta$
notation in the \emph{exponent}, instead of as a constant factor as these are the best bounds currently known. These
bounds are tight enough for our purposes because they
heuristically correspond to an algorithm using $O(\lg (m(\mathcal
P)))=O((\lg n)^2)$ memory, which is accurate to a
constant factor.
\end{note}

Let $\ell$ be a positive integer less than $n$.  Consider $\mathcal P_\ell$, the set of directed graphs such that there
is a path from $s$ to $t$ with length $\ell$.  Length is defined to be
the number of  edges along the path. Theorem \ref{thm:all_len} gives a bound for this subset of
$\mathcal P$.

\begin{thm}[Potechin~\cite{Potechin1}]\label{thm:all_len}
$$m(\mathcal P_{\ell})=n^{\Theta(\lg \ell)}.$$
\end{thm}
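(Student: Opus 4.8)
The plan is to establish the two directions separately, since $m(\mathcal P_\ell)=n^{\Theta(\lg\ell)}$ packages an upper bound $m(\mathcal P_\ell)\le n^{O(\lg\ell)}$ and a lower bound $m(\mathcal P_\ell)\ge n^{\Omega(\lg\ell)}$. The case $\ell=n$ recovers Theorem~\ref{thm:all}, so neither direction follows by a black-box appeal to Theorem~\ref{thm:all}; both arguments must be sensitive to the length parameter.

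\textbf{Upper bound.} I would construct, by induction on $k$, a sound monotone switching network $\mathcal N_k$ that accepts every input graph with an $s$--$t$ path of length at most $2^k$; then $\mathcal N_{\lceil\lg\ell\rceil}$ gives the bound. This is the monotone analogue of Savitch's algorithm~\cite{Savitch1}. The base case $\mathcal N_0$ has vertices $s',t'$ joined by one edge labelled $s\to t$ (plus a handling of $s=t$), and is trivially sound and complete for length-$\le 1$ paths. For the inductive step, given $\mathcal N_{k-1}$, for each $m\in V$ take two relabelled copies of $\mathcal N_{k-1}$---one testing ``$s\leadsto m$ in $\le 2^{k-1}$ steps,'' one testing ``$m\leadsto t$ in $\le 2^{k-1}$ steps''---identify the output of the first with the input of the second, and attach the free ends to fresh vertices $s'$ and $t'$. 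Because the two halves are glued only at the shared midpoint $m$, any $s'$--$t'$ path in the network must run entirely through one matched pair, hence by inductive soundness certifies a genuine $s$--$m$ walk followed by a genuine $m$--$t$ walk in $G$; completeness follows by bisecting a witnessing path. The recursion $|\mathcal N_k|\le n^{O(1)}|\mathcal N_{k-1}|$ with $|\mathcal N_0|=n^{O(1)}$ then gives $|\mathcal N_{\lceil\lg\ell\rceil}|=n^{O(\lg\ell)}$.

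\textbf{Lower bound.} Here I would use Potechin's Fourier-analytic ``reachability'' framework~\cite{Potechin1,Potechin2}. Fix a sound monotone switching network $G'$ accepting all of $\mathcal P_\ell$ and consider the input graphs consisting of the lines $P_v=(s\to v_1\to\cdots\to v_{\ell-1}\to t)$ over all sequences $v$ of distinct vertices of $V$, together with all one-edge deletions $P_v\setminus e$ (which $G'$ must reject). To each network vertex one attaches the function recording, for each $P_v$, which sub-intervals of the line are ``traversable from $s'$ to that vertex,'' and soundness forces this data to splice consistently. The heart is a fooling lemma: if $v,v'$ agree outside the dyadic recursion coordinates $v_{\ell/2},v_{\ell/4},v_{\ell/8},\dots$ but differ on one of them, then the network vertex reached ``just after certifying the corresponding half/quarter/\dots'' must differ, for otherwise a splice would certify an $s$--$t$ path in some $P_w\setminus e$. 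Since there are $\Omega(\lg\ell)$ such coordinates, each ranging over $\Omega(n)$ values across the support, $G'$ must have $n^{\Omega(\lg\ell)}$ vertices.

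\textbf{Main obstacle.} The upper bound is essentially bookkeeping once the recursive gluing is arranged so that no network path can interleave distinct midpoints; the real work is the lower bound, and within it the fooling/progress lemma---showing that traversing a single edge of $G'$ can merge at most one pair of adjacent certified sub-intervals, so that assembling the whole line from singletons genuinely requires $\Omega(\lg\ell)$ independent merges, and, most delicately, that the fresh vertex revealed at each of these scales forces $\Omega(n)$ distinct network vertices rather than merely $\Omega(\ell)$. That last point, where soundness against the one-edge-deleted lines is essential, is exactly what makes the base of the exponent $n$ rather than $\ell$.
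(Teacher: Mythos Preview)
This theorem is not proved in the present paper; it appears in Section~\ref{sec:prev_results} (``Previous Results'') as a statement quoted from~\cite{Potechin1} and used thereafter only as a black box. There is consequently no proof in the paper against which to compare your proposal.

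For what it is worth, your outline does track the techniques of the cited work. The Savitch-style recursive midpoint construction is the standard route to the upper bound; your soundness justification, however, is not quite right as stated---in an undirected switching network an $s'$--$t'$ path \emph{can} wander through several $(A_m,B_m)$ gadgets via the shared terminal $s'$, so ``must run entirely through one matched pair'' is false literally. The clean argument instead shows that each intermediate node $c_m$ (the glue point of $A_m$ and $B_m$) is reachable from $s'$ using only labels of $G$ exactly when $G$ has an $s$--$m$ path of length $\le 2^{k-1}$, and then that $t'$ is reachable from $c_m$ only when $G$ has an $m$--$t$ path; this is an invariant on network vertices, not a structural claim about individual accepting paths. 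On the lower-bound side, your ``fooling lemma'' narrative captures the intuition but is considerably looser than Potechin's actual argument, which assigns to each network node a function on partitions of $V\cup\{s,t\}$, expands in a Fourier-type basis indexed by what he calls $s$--$t$ reachability diagrams, and uses orthogonality plus norm bounds to force $n^{\Omega(\lg\ell)}$ nodes; the splicing/interval-merging picture you describe does not by itself yield the base $n$ rather than $\ell$ without that analytic machinery.
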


Notice the similarity between Theorem \ref{thm:all_len} and Theorem \ref{thm:many_paths}.

\begin{thm}[Potechin~\cite{Potechin}]\label{thm:many_paths}
Let $G$ be a graph such that every path from $s$ to $t$ is of length
$\ell$, and every vertex besides $s$ and $t$ is on exactly one such
path.  Then $$m(\sigma(G))=n^{\Theta(\lg \ell)}.$$ 
\end{thm}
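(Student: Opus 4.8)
The plan is to pin down $m(\sigma(G))$ between the two bounds of Theorem~\ref{thm:all_len} by comparing the family $\sigma(G)$ with the full family $\mathcal P_\ell$. The hypotheses on $G$ force $G$ to be a union of $k=(n-2)/(\ell-1)$ internally vertex-disjoint directed paths $P_1,\dots,P_k$ from $s$ to $t$, each of length $\ell$; in particular $G$ contains an $s$--$t$ path, so $m(\sigma(G))$ is well defined. Any permutation of $V\cup\{s,t\}$ fixing $s$ and $t$ again turns $G$ into such a union of $k$ disjoint $\ell$-paths, so every element of $\sigma(G)$ lies in $\mathcal P_\ell$, i.e. $\sigma(G)\subseteq \mathcal P_\ell$. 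Since a sound monotone switching network accepting every graph in a set also accepts every graph in any subset, $m$ is monotone under inclusion, and hence $m(\sigma(G))\le m(\mathcal P_\ell)=n^{O(\lg\ell)}$ by Theorem~\ref{thm:all_len}. This settles the upper bound.

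The lower bound is the substantive direction, since the inclusion above points the wrong way. The main tool I would use is the reformulation of soundness: a network $G'$ is sound iff for every $s'$--$t'$ walk $W$ in $G'$ the set $L(W)$ of edge-labels along $W$ contains a directed $s$--$t$ path. Consequently, if a sound $G'$ accepts some $H\in\sigma(G)$ via a walk $W$, then $L(W)$ contains an $s$--$t$ path lying in $E(H)=E(P_1)\cup\cdots\cup E(P_k)$, and since the only $s$--$t$ paths available there are the $P_i$ themselves, $L(W)$ in fact certifies one full path $P_i$. This suggests an elementary restriction argument: fix all but one of the disjoint paths to be specific paths $P_2,\dots,P_k$ on $(k-1)(\ell-1)$ of the internal vertices, leaving $\ell-1$ free vertices $U$, and use $G'$ to build (by hard-wiring the fixed path-edges as always-present and pruning the rest, taking the usual care to preserve soundness) a sound network on $U\cup\{s,t\}$ that accepts every $\ell$-path on those $\ell+1$ vertices. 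By Theorem~\ref{thm:all_len} this forces $|V'(G')|\ge (\ell+1)^{\Omega(\lg\ell)}=\ell^{\Omega(\lg\ell)}$.

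The trouble is that this restriction already throws away the combinatorial freedom---the many possible placements of the relevant path among $n$ vertices---that is responsible for the $n$-exponent, so it only yields the theorem when $k=O(1)$. To recover the sharp bound $n^{\Omega(\lg\ell)}$ I would instead revisit the proof of the lower bound half of Theorem~\ref{thm:all_len} and observe that its hard instances may be taken to be unions of $k$ internally disjoint $\ell$-paths covering all $n$ vertices, i.e. precisely the elements of $\sigma(G)$: the reachability/Fourier-dimension estimates there count partial segments of a path being certified by the network's states, and by the soundness observation above an accepting walk on such a multi-path input must certify one of the $k$ paths in full, so none of those estimates degrade when the single-path-with-isolated-vertices distribution is replaced by a distribution supported on $\sigma(G)$. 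I expect this last point---verifying that restricting to the highly symmetric family $\sigma(G)$, in which $k-1$ extra disjoint paths are always present, does not weaken the dimension lower bound---to be the main obstacle; granting it, $m(\sigma(G))\ge n^{\Omega(\lg\ell)}$, which together with the upper bound gives $m(\sigma(G))=n^{\Theta(\lg\ell)}$.
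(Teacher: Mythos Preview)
The paper does not prove this theorem; it is quoted in Section~3 as a result of Potechin and is used as a black box throughout. So there is no ``paper's own proof'' to compare against. I therefore evaluate your proposal on its own merits.

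Your upper bound is fine: $\sigma(G)\subseteq\mathcal P_\ell$ and Theorem~\ref{thm:all_len} give $m(\sigma(G))\le n^{O(\lg\ell)}$ immediately.

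Your lower bound, however, is not a proof but a plan with an explicitly acknowledged hole. You correctly note that the easy restriction (freeze $k-1$ of the paths, leave one free) only yields $\ell^{\Omega(\lg\ell)}$, which is too weak when $k$ is large. You then propose to ``revisit the proof of the lower bound half of Theorem~\ref{thm:all_len}'' and argue that its hard distribution can be taken inside $\sigma(G)$. But you do not carry this out, and you yourself flag the crucial step---showing that the presence of $k-1$ additional disjoint $s$--$t$ paths does not help the network---as ``the main obstacle.'' That is exactly the content of the theorem: a priori, the extra paths could make the instances easier, since by Corollary~\ref{thm:add2} adding edges can only decrease $m(\sigma(\cdot))$, so the inequality $m(\sigma(H))\ge m(\sigma(G))$ for $H$ a single path goes the wrong way. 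Until you actually verify that the Fourier/knowledge-set dimension counts from Potechin's argument survive when the input always contains $k$ full $s$--$t$ paths (so that any accepting walk certifies one of them, as you observe), you have not proved the lower bound. As written, the proposal reduces to ``the cited proof probably still works,'' which is not an independent argument.
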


\begin{figure}
\begin{center}
\includegraphics[width=3in, bb=0 0 334 238]{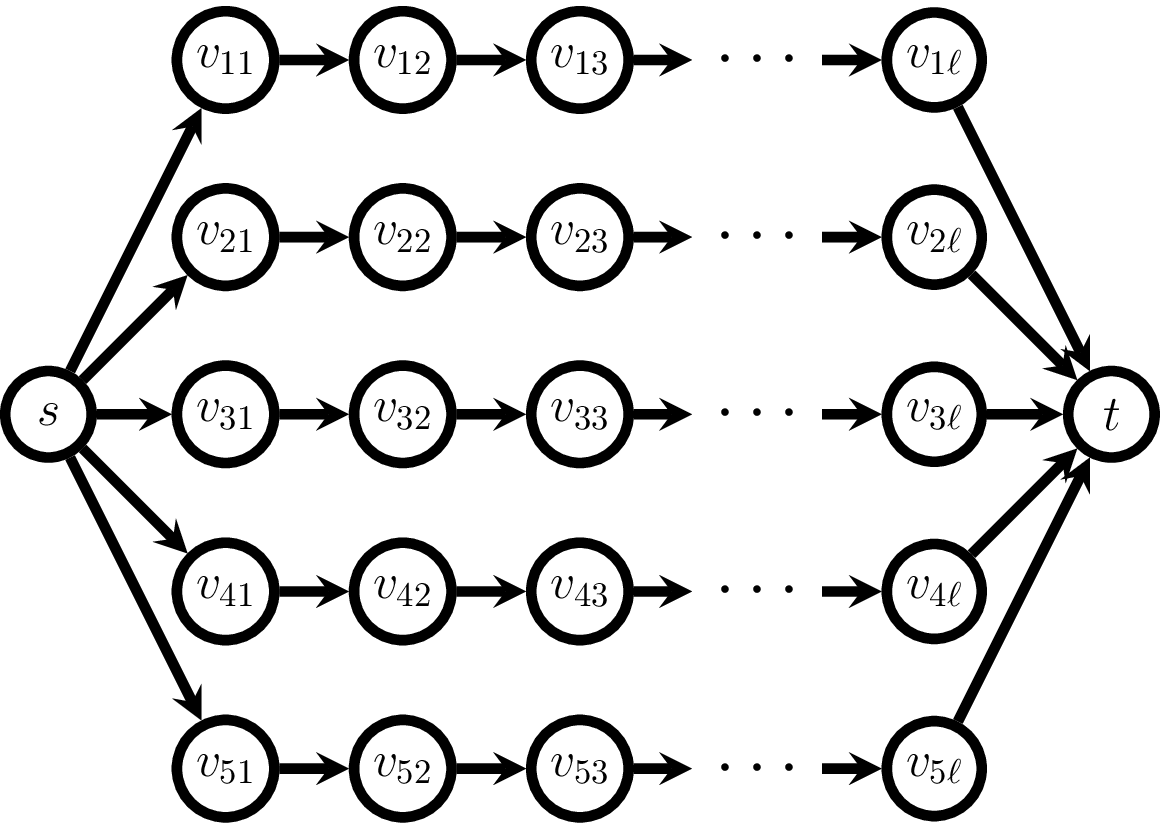}
\caption{Example input graph $G$ for Theorem \ref{thm:many_paths}.} 
\label{fig:many_paths}
\end{center}
\end{figure}

Figure \ref{fig:many_paths} has an example of $G$. The asymptotic results of Theorem \ref{thm:all_len} and Theorem
\ref{thm:many_paths} are identical, although $\sigma(G)$ is only a
small subset of $\mathcal P_{\ell}$.  In some sense, much of the
work done by the monotone switching network to accept the elements of
$\mathcal P_\ell$ is used to accept the elements of $m(\sigma(G))$.  In
contrast, Theorem \ref{thm:bbox} shows that 
some subsets of $\mathcal P_{\ell}$ can be accepted by much smaller
sound monotone switching networks.  First we need the following definition.

\begin{df}\label{def:lolli}
A vertex $v$ of a directed graph $G$ is a \textbf{\emph{lollipop}} if $s \to v$ or $v \to t$ is an edge of $G$. 
\end{df}

Theorem \ref{thm:bbox} tells us that lollipops hardly increase the asymptotic value of $m(\sigma(G))$.

\begin{thm}[Potechin~\cite{Potechin3, Potechin2}]\label{thm:bbox}
For any $n$, $k$, and $\ell$, there is a sound monotone switching network of size at most
$$n^{O(1)}k^{O(\lg \ell)}$$
which accepts all input graphs $G$ such that $G$ has $n$ vertices, there is a path of length $\ell$ from $s$ to $t$, and at most $k$ vertices are not lollipops.
\end{thm}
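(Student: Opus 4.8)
The plan is to build the switching network in two stages: a "core" network that handles the non-lollipop vertices, and a "lollipop layer" that routes the many lollipop vertices through that core cheaply. First I would invoke Theorem \ref{thm:all_len} (or rather the construction behind it) restricted to $k+2$ vertices: if we had only the $k$ non-lollipop vertices together with $s$ and $t$, a sound monotone switching network of size $(k+2)^{O(\lg \ell)} = k^{O(\lg \ell)}$ suffices to certify a path of length $\ell$ among them. The subtlety is that the actual path of length $\ell$ in $G$ passes through lollipop vertices as well, so a path in $G$ is an alternating concatenation of edges within the $k$-vertex core and "hops" of the form $s \to v$, $v \to w$ (not both lollipop in general), or $v \to t$ involving lollipops. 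So the core network is not used directly on $G$; instead its vertex set and edge labels are re-interpreted.

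The key idea for the lollipop layer is that a lollipop $v$ is only useful in a path because $s\to v$ or $v\to t$ is present, and these give "shortcuts." I would introduce, for each state $u$ of the core network, a small gadget that allows the network to move from $s'$ to $u$ whenever some lollipop $v$ satisfies $s\to v$ and $v \to (\text{first core vertex})$, and symmetrically near $t'$. Because there are at most $k$ non-lollipop vertices, the edges of the real path of length $\ell$ that are "between two core vertices" number at most $O(k)$, so the core only needs to certify paths of length $O(k)$ — but crucially the length parameter that controls the size is still governed by $\ell$ through the depth of the recursion used in Theorem \ref{thm:all_len}'s construction (the $\lg \ell$ comes from a divide-and-conquer on path position, and we keep positions in $\{0,1,\dots,\ell\}$). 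That is why the bound is $k^{O(\lg \ell)}$ and not $k^{O(\lg k)}$: the positions along the length-$\ell$ path must still be tracked, but the branching factor at each level is only $k$ (the number of core vertices) rather than $n$, with the lollipop hops absorbed by a polynomial-size $n^{O(1)}$ layer of gadgets that does not recurse.

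To prove soundness I would argue as in Potechin's framework: assign to each vertex of the network a "Fourier-type" certificate or a cut-based potential showing that if the input graph $G$ has no $s$–$t$ path, then $s'$ and $t'$ lie in different components of the subnetwork induced by edges whose labels are present in $G$. The lollipop gadgets are sound because $s\to v$ together with $v\to t$ already is an $s$–$t$ path, so in the no-path case at most one of the two defining edges of any lollipop gadget is present, and the gadget cannot be traversed. Completeness on the promised family is a direct path-chasing argument: given a length-$\ell$ path in $G$ with at most $k$ non-lollipops, decompose it into core segments and lollipop hops and trace the corresponding $s'$–$t'$ walk. The main obstacle I anticipate is the bookkeeping in the recursive core construction — ensuring that the "reinterpretation" of core states as partial progress along a length-$\ell$ path interacts correctly with the lollipop hops at segment boundaries, so that the size stays $n^{O(1)} k^{O(\lg\ell)}$ and soundness is preserved at the seams; this is where I expect most of the technical work to lie, and I would handle it by making the gadgets act only at designated "position-aligned" core states rather than arbitrary ones.
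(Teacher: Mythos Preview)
The paper does not contain a proof of Theorem~\ref{thm:bbox}; it is stated as a result of Potechin and attributed to the external references \cite{Potechin3, Potechin2}, so there is no in-paper argument to compare your proposal against.

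That said, a brief assessment of your sketch on its own merits: the one genuinely correct sentence is ``the branching factor at each level is only $k$ (the number of core vertices) rather than $n$.'' In the Savitch-style divide-and-conquer construction underlying Theorem~\ref{thm:all_len}, one guesses a midpoint at each of $O(\lg \ell)$ levels; the point is that if the guessed midpoint is a lollipop $v$ with $s\to v$ present, then the ``first half'' subproblem is already solved (and symmetrically for $v\to t$), so only the $k$ non-lollipop choices force a real recursion. This yields $k^{O(\lg\ell)}$ with an $n^{O(1)}$ overhead per level for scanning over lollipop midpoints. Your surrounding explanation, however, is muddled: the claim that ``the core only needs to certify paths of length $O(k)$'' is false (the recursion depth is governed by $\ell$, not by the number of core edges), and your description of the lollipop layer as gadgets attached only near $s'$ and $t'$ is wrong --- lollipop midpoints must be handled at \emph{every} level of the recursion, not just at the endpoints. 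If you were to write this up, organize it around the recursive midpoint construction and make the case split (midpoint is a lollipop vs.\ not) explicit at each level; the two-stage ``core plus outer layer'' picture you propose does not match how the $n^{O(1)}$ and $k^{O(\lg\ell)}$ factors actually arise.
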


\begin{cor}[Potechin~\cite{Potechin3, Potechin2}]\label{cor:bbox}
Let $G$ be a graph with $n$ vertices for which there is a path of length $\ell$ from $s$ to $t$, and all but $k$ of the vertices are lollipops.  Then,
$$m(\sigma(G))= n^{O(1)}k^{O(\lg \ell)}.$$
\end{cor}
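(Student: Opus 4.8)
The plan is to obtain the corollary directly from Theorem \ref{thm:bbox}. Fix $G$ as in the hypothesis: it has $n$ vertices, a directed path $s = v_0 \to v_1 \to \cdots \to v_\ell = t$ of length $\ell$, and at most $k$ vertices that are not lollipops. I would first check that membership in the class of input graphs handled by Theorem \ref{thm:bbox} is invariant under permutations of $V \cup \{s,t\}$ that fix $s$ and $t$. Let $G''$ be any element of $\sigma(G)$, obtained from $G$ by such a permutation $\pi$. The number of vertices is unchanged; the image $\pi(v_0) \to \cdots \to \pi(v_\ell)$ is again a directed path of length $\ell$ from $s$ to $t$, using $\pi(s) = s$ and $\pi(t) = t$; and since $\pi$ fixes $s$ and $t$ and is a bijection, $s \to v$ or $v \to t$ is an edge of $G$ exactly when $s \to \pi(v)$ or $\pi(v) \to t$ is an edge of $G''$, so $v$ is a lollipop of $G$ iff $\pi(v)$ is a lollipop of $G''$; hence $G''$ also has at most $k$ non-lollipop vertices.

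It follows that $\sigma(G)$ is contained in the family of all $n$-vertex input graphs having a length-$\ell$ path from $s$ to $t$ and at most $k$ non-lollipops. By Theorem \ref{thm:bbox} there is a single sound monotone switching network of size $n^{O(1)}k^{O(\lg \ell)}$ accepting this entire family, so in particular it accepts every graph in $\sigma(G)$, and therefore $m(\sigma(G)) \le n^{O(1)}k^{O(\lg \ell)}$. One should read $k$ as $\max(k,1)$ here, to accommodate the degenerate case in which every vertex of $G$ is a lollipop.

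Because the upper bound falls straight out of Theorem \ref{thm:bbox} once the permutation-invariance observations above are in place, I do not expect a real obstacle on that side; the substantive work was already done in proving Theorem \ref{thm:bbox}. For the matching lower bound that would make the stated equality literal, the natural approach is to locate inside $\sigma(G)$ a sub-family isomorphic to a blown-up version of the hard instances underlying Theorem \ref{thm:all_len} or Theorem \ref{thm:many_paths}, now carried by the $k$ non-lollipop vertices together with the length-$\ell$ path structure, which would force size at least $k^{\Omega(\lg \ell)}$. I expect producing such an embedding without extra assumptions on how the non-lollipop vertices are distributed along the path to be the one genuinely delicate point, and in the parameter regimes relevant to the rest of the paper it follows from the earlier results.
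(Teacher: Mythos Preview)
Your derivation of the upper bound from Theorem~\ref{thm:bbox} is correct and is exactly the intended argument: the paper states this as an immediate corollary without a separate proof, and your permutation-invariance check is precisely what is needed to see that every element of $\sigma(G)$ lies in the family handled by Theorem~\ref{thm:bbox}. The final paragraph about a matching lower bound is unnecessary here---the statement uses $=$ with only $O$-notation, which in this paper's conventions denotes an upper bound, and the corollary is used only as an upper bound throughout.
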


\begin{figure}
\begin{center}
\includegraphics[width=3in, bb=0 0 334 235]{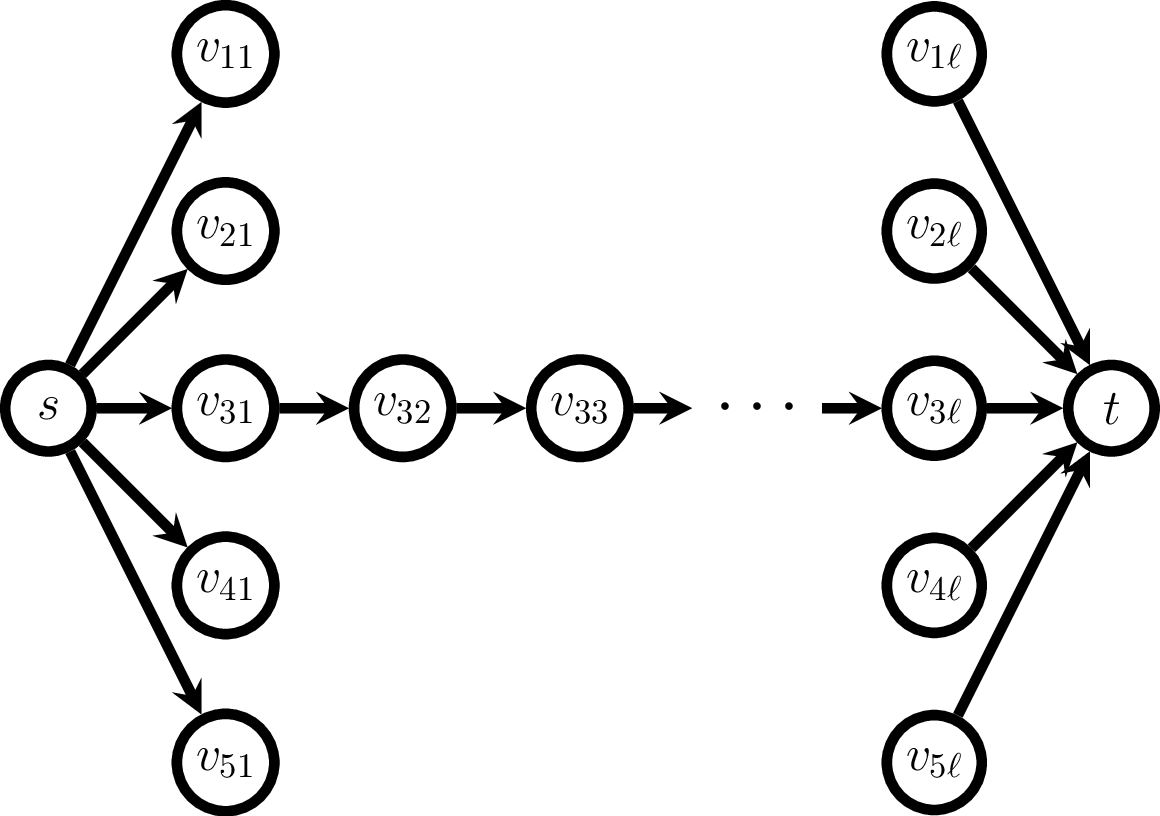}
\caption{Example input graph $G$ for Corollary \ref{cor:bbox} with many lollipops.} 
\label{fig:min_dif}
\end{center}
\end{figure}

Figure \ref{fig:min_dif} depicts an example of $G$ with a single path
from $s$ to $t$ in which all other vertices are lollipops.  When $k$ is asymptotically smaller than $n$, the value of $m(\sigma(G))$ is asymptotically much smaller for the graphs described in Corollary \ref{cor:bbox} than the graphs described in Theorem \ref{thm:many_paths}, which means that it is much easier to find the existence of a path from $s$ to $t$ for the graphs described in Corollary \ref{cor:bbox} when using monotone computation.  We use Theorem \ref{thm:bbox} when
constructing upper bounds in the proofs of Theorems \ref{thm:out_tree} and \ref{thm:all_tree}.

\section{Techniques for Bounding $m(\sigma(G))$}\label{subsec:tech}

In this section, we provide results which aid in the process of determining $m(\sigma(G))$ for arbitrary graphs $G$.

\begin{prop} \label{thm:add1}
Consider two graphs $G$ and $H$ such that every edge of $G$ is also an
edge of $H$.  If a monotone switching network $G'$ accepts $G$, then
$G'$ also accepts $H$.
\end{prop}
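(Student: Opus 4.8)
The statement is essentially immediate from the definition of acceptance (Definition \ref{df:accept}), so the plan is to unpack that definition and observe that nothing changes when we pass from $G$ to the larger graph $H$. First I would invoke the hypothesis that $G'$ accepts $G$: by Definition \ref{df:accept} this means there is a path $P$ from $s'$ to $t'$ in $G'$ such that the label $a \to b$ of every edge of $P$ corresponds to a directed edge from $a$ to $b$ in $G$. The key observation is then that this very same path $P$ witnesses acceptance of $H$.

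The central step is to check that each edge label along $P$ corresponds to an edge of $H$. By assumption every edge of $G$ is also an edge of $H$, so if the label $a \to b$ corresponds to a directed edge from $a$ to $b$ in $G$, that directed edge is also present in $H$, and hence $a \to b$ corresponds to an edge of $H$ as well. Since this holds for every edge of the path $P$, the path $P$ satisfies the condition in Definition \ref{df:accept} with $H$ in place of $G$, so $G'$ accepts $H$.

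I do not expect any genuine obstacle here; the only thing to be careful about is to make the quantifiers explicit — the \emph{same} path works, so there is no need to construct anything new — and to state clearly that ``edge of $G$ is an edge of $H$'' is exactly the containment needed to promote ``label corresponds to an edge of $G$'' to ``label corresponds to an edge of $H$.'' This proposition is the formal expression of monotonicity: adding edges to the input can only make acceptance easier, never harder.
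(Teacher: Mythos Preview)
Your proof is correct and is exactly the argument the paper has in mind: the proposition is stated without an explicit proof in the paper, being treated as immediate from Definition~\ref{df:accept}, and your unpacking---the same accepting path $P$ for $G$ works for $H$ because each edge label along $P$ is an edge of $G$ and hence of $H$---is precisely that immediate argument.
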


\begin{cor} \label{thm:add2}
Given a directed graph $G$, consider the directed graph $H$ which
results from adding an edge to $G$.  Then, $m(\sigma(G))\ge
m(\sigma(H))$.
\end{cor}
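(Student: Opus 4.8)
The plan is to reduce this corollary directly to Proposition \ref{thm:add1}, which is the real content here. Let $H$ be obtained from $G$ by adding a single directed edge, say $u \to v$. The goal is to show that any sound monotone switching network accepting all of $\sigma(H)$ can be converted into one accepting all of $\sigma(G)$ without increasing the size, whence $m(\sigma(G)) \ge m(\sigma(H))$.

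First I would observe that soundness is a property purely of the network $G'$ (it must reject every input graph with no $s$-$t$ path), so any sound network is automatically a legal competitor for both $m(\sigma(G))$ and $m(\sigma(H))$; the only thing to check is the acceptance condition. Next, the key structural point: since $\sigma$ permutes vertex labels while fixing $s$ and $t$, each graph $G_\pi \in \sigma(G)$ corresponds under the same permutation $\pi$ to a graph $H_\pi \in \sigma(H)$, and $H_\pi$ is exactly $G_\pi$ with the single edge $\pi(u) \to \pi(v)$ added. In particular every edge of $G_\pi$ is an edge of $H_\pi$. Now let $G'$ be a smallest sound monotone switching network accepting every element of $\sigma(H)$, so $|V(G')| = m(\sigma(H))$. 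For each $\pi$, $G'$ accepts $H_\pi$, and since $G_\pi$'s edge set is contained in $H_\pi$'s edge set, wait — that is the wrong direction; I would instead apply Proposition \ref{thm:add1} with the roles so that $G_\pi$ plays the role of the ``smaller'' graph. Indeed, since $H_\pi$ has all the edges of $G_\pi$ plus possibly one more, Proposition \ref{thm:add1} says: if $G'$ accepts $G_\pi$ then $G'$ accepts $H_\pi$. That still is not what I want directly, so the cleaner move is to argue the contrapositive-style conversion at the level of optimal networks for $\sigma(G)$: let $G'$ be optimal for $\sigma(G)$ with size $m(\sigma(G))$; then $G'$ accepts each $G_\pi$, hence by Proposition \ref{thm:add1} accepts each $H_\pi$, hence accepts all of $\sigma(H)$, and it is sound, so it is a valid network for $\sigma(H)$, giving $m(\sigma(H)) \le |V(G')| = m(\sigma(G))$, which is exactly the claim.

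The only mild subtlety — and the place I would be most careful — is the correspondence between $\sigma(G)$ and $\sigma(H)$: one must confirm that adding an edge to $G$ and then permuting gives the same family as permuting $G$ and then adding the correspondingly-permuted edge, i.e. that $\sigma$ and ``add edge $u\to v$'' commute appropriately, so that every graph in $\sigma(H)$ really is of the form $G_\pi + (\pi(u)\to\pi(v))$ for some $\pi$ realizing an element of $\sigma(G)$. This is immediate from the definition of $\sigma_{\{s,t\}}$ as the orbit under label permutations fixing $s,t$: applying $\pi$ to $H = G + (u\to v)$ yields $\pi(G) + (\pi(u)\to\pi(v))$, and $\pi(G)$ ranges over $\sigma(G)$ as $\pi$ ranges over $S_V$. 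Once that bookkeeping is in place, the chain ``$G'$ accepts $G_\pi$ $\Rightarrow$ $G'$ accepts $H_\pi$'' from Proposition \ref{thm:add1} finishes the proof with no computation. There is no genuine obstacle here; the corollary is essentially a restatement of the proposition for permutation sets, and the whole proof is two or three lines.
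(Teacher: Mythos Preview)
Your proposal is correct and, after the brief false start you self-corrected, lands on exactly the paper's argument: take a minimal sound network $G'$ for $\sigma(G)$, observe that each element of $\sigma(H)$ contains the edge set of the corresponding element of $\sigma(G)$, and apply Proposition~\ref{thm:add1} to conclude $G'$ also accepts all of $\sigma(H)$, whence $m(\sigma(H))\le m(\sigma(G))$. The extra paragraph verifying that permutation commutes with edge-addition is more detail than the paper gives but is sound and harmless.
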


\begin{proof}
Consider a minimal-size sound monotone switching network $G'$ which accepts all
elements of $\sigma(G)$. For any element of $\sigma(G)$ there is a
corresponding element of $\sigma(H)$ with the same edges.  Hence, by
Proposition \ref{thm:add1}, $G'$ also accepts the elements of $\sigma(H)$.
Therefore, $m(\sigma(G)) \ge m(\sigma(H))$. 
\end{proof}

From Proposition \ref{thm:add1} and its corollary, we infer that for graphs
with the same number of vertices the ones with more edges typically have smaller monotone switching networks.

\begin{thm} \label{thm:repl}
Given a directed graph $G$ with an edge $a\to b$, let $H$ be the graph
where this edge is replaced with $s\to b$ and $\bar{H}$ be the
graph where this edge is replaced with $a\to t$.  Then

$$m(\sigma(G))\ge \max(m(\sigma(H)),~m(\sigma(\bar{H}))).$$
\end{thm}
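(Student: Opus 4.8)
The plan is to prove $m(\sigma(G))\ge m(\sigma(H))$ in detail and to obtain $m(\sigma(G))\ge m(\sigma(\bar H))$ from the analogous ``dual'' construction (shortcutting toward $t$ instead of toward $s$). Start from a minimal-size sound monotone switching network $G'$ accepting every element of $\sigma(G)$, and build a new network $G''$ on the same vertex set by adding, parallel to each edge $e$ of $G'$ with label $x\to y$, a new edge with label $s\to y$. Since size counts only vertices, $|G''|=|G'|$ (the definition already permits several edges, with distinct labels, between the same pair of vertices). The intuition is that ``redirecting the source of any label to $s$'' is harmless: in an input graph with no $s$-$t$ path, the presence of $s\to y$ forces $y$ not to reach $t$, so such shortcuts can never be chained into a genuine $s$-$t$ path.

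To see $G''$ accepts $\sigma(H)$, fix a permutation $\pi$ fixing $s$ and $t$. Because $H$ differs from $G$ only in that $a\to b$ is deleted and $s\to b$ is added, $E(\pi(G))\setminus E(\pi(H))\subseteq\{\pi(a)\to\pi(b)\}$. Take an accepting path $P$ of $G'$ for $\pi(G)$: each of its labels lies in $E(\pi(G))$, hence in $E(\pi(H))$ except possibly at edges labeled $\pi(a)\to\pi(b)$, and at each of those we instead use the parallel copy in $G''$ labeled $s\to\pi(b)$, which is exactly the edge $\pi(H)$ contributes. Extracting a simple path from the resulting walk shows $G''$ accepts $\pi(H)$.

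The substantive step is checking $G''$ is sound. Suppose it accepts an input graph $F$ with no $s$-$t$ path, via a path $P$ from $s'$ to $t'$; let the ``new'' edges along $P$ be those that were added, the $j$-th one parallel to an original $G'$-edge with label $x_j\to y_j$ and carrying label $s\to y_j\in E(F)$. Put $F':=F\cup\{x_j\to y_j:j\text{ indexes a new edge of }P\}$. Re-reading $P$ in $G'$ — each new edge replaced by the original edge it parallels — gives an $s'$-$t'$ path with all labels in $F'$, so $G'$ accepts $F'$. Yet $F'$ has no $s$-$t$ path: if it had one, take the \emph{last} added edge $x_{j_0}\to y_{j_0}$ it uses; after $y_{j_0}$ the path uses only $F$-edges to reach $t$, and since $s\to y_{j_0}\in E(F)$ this gives an $s$-$t$ path in $F$ — a contradiction. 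So $G'$ would accept a graph with no $s$-$t$ path, contradicting its soundness, and hence $G''$ is sound; therefore $m(\sigma(H))\le|G''|=|G'|=m(\sigma(G))$. The dual construction for $\bar H$ adds, parallel to each label $x\to y$, an edge labeled $x\to t$; the accepting-path argument swaps $\pi(a)\to\pi(b)$ for $\pi(a)\to t$, and soundness is verified identically but choosing the \emph{first} added edge on a hypothetical $s$-$t$ path in $F'$. The main obstacle is precisely this soundness verification: one must rule out the new shortcut edges conspiring to create connectivity, and the last-edge (resp.\ first-edge) selection argument is exactly what does so.
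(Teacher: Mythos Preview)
Your proof is correct and follows essentially the same approach as the paper: both build the new network on the same vertex set by adding, parallel to every edge labeled $x\to y$, a copy labeled $s\to y$, and both verify acceptance of $\sigma(H)$ by swapping out the (at most one) label $\pi(a)\to\pi(b)$ for $s\to\pi(b)$. The only difference is in how soundness is argued: the paper proceeds inductively, showing that adding a single parallel edge $s\to v_2$ preserves soundness (since $v_2$ is already reachable from $s$, so adjoining $v_1\to v_2$ cannot enlarge the set of vertices reachable from $s$), whereas you handle all the added edges at once via the ``last added edge on a hypothetical $s$--$t$ path'' trick. Both arguments rest on the same reachability observation; yours is marginally slicker in that it avoids the induction.
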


\begin{proof}
We prove that $m(\sigma(G))\ge m(\sigma(H))$.  The inequality
$m(\sigma(G))\ge m(\sigma(\bar H))$ is a symmetric argument by reversing
every edge and swapping $s$ and $t$.  Consider the minimal-size  sound
monotone switching network $G'$ which accepts the elements of $\sigma(G)$.  Construct a
new monotone switching network $H'$ which contains the same vertices and
edges as $G'$.  For any edge $e$ with a label of the form $v_1\to v_2$ in $H'$, add an
additional edge, parallel to $e$, with the label $s\to v_2$.  We
now prove two properties about $H'$.
\begin{lem}
Every element of $\sigma(H)$ is accepted by $H'$.
\end{lem}
\begin{proof}
Let $H_1$ be an element of $\sigma(H)$.  Let $G_1$ be a corresponding
element of $\sigma(G)$.  Because $G_1$ is accepted by $G'$, there exists
a path $P'$ from $s'$ to $t'$ using only edges of $G_1$.  Since no
edges were deleted in the construction of $H'$, this same path $P'$ exists in
$H'$. 

If $P'$ uses only edges in $H_1$, we are done.  If not, $P'$ uses precisely one edge label which is not in $H_1$, the edge $v_1\to v_2$ which was replaced by $s\to v_2$.  Consider the path in $H'$ which follows $P'$ but instead of using the edges labeled $v_1\to v_2$, it uses the parallel edges labeled $s \to v_2$.
This is clearly an accepting path for $H_1$ in $H'$.  Thus, all the elements of $\sigma(H)$ are accepted by $H'$.
\end{proof}

\begin{lem}
$H'$ is sound.
\end{lem}
\begin{proof}
It is sufficient to prove that modifying a switching network $G'$ by adding one parallel 
edge with label $s \to v_2$ to an edge with label $v_1 \to v_2$ must preserve the 
soundness of $G'$.  Let $G'_2$ be the modified switching network and assume for sake of 
contradiction that $G'_2$ is not sound.  Then there exists a graph $G$ with no path from 
$s$ to $t$ which is accepted by $G'_2$.  This implies there is a path $P'$ from $s'$ to $t'$ 
in $H'$ which uses only edge labels in $G$. However, because $G'$ is sound, it rejects
$G$ so $P'$ must go through the one additional edge in $G'_2$, the edge labeled 
$s \to v_2$, and this edge must be in $G$. But then if we add the edge $v_1 \to v_2$ to $G$, 
we obtain a graph $G_2$ which is accepted by $G'$, as we can follow $P'$ except that we use the 
original edge labeled $v_1 \to v_2$ rather than the added parallel edge. Thus, $G_2$ must have a path 
from $s$ to $t$. But this is impossible, as if we let $V$ be the set of vertices reachable from 
$s$ in $G$, $V$ is also the set of vertices reachable from $s$ in $G_2$. To see this, 
note that $v_2$ is reachable from $s$ in $G$, so adding the edge $v_1 \to v_2$ cannot 
possibly allow us to reach any additional vertices from $s$. This is a clear contradiction, so $H'$ is sound.
\end{proof}
Since there exists a sound monotone switching network $H'$ of
size $m(\sigma(G))$ which accepts every element of $\sigma(H)$, we have that
$$m(\sigma(G))\ge m(\sigma(H))$$
as desired.
\end{proof}

Heuristically, Theorem \ref{thm:repl} implies that when the number of
vertices and edges is the same for two graphs, the one with more edges connected from
$s$ or to $t$ typically has a smaller monotone switching network.

In contrast to Theorem \ref{thm:repl}, Proposition \ref{thm:useless} demonstrates the case in which there are similar edges but in the reverse direction.

\begin{df}
An edge is \textbf{\emph{useless}} if it is of the form $v\to s$ or $t\to
v$ for some vertex $v$.
\end{df}

This definition is motivated by the fact that having an edge of this
form gives no information about whether there is a path from $s$ to $t$.

\begin{thm}\label{thm:useless}
Let $G$ be a graph with useless edges.  Let $H$ be a copy of $G$ with the useless edges removed.  Then $m(\sigma(G))=m(\sigma(H))$.
\end{thm}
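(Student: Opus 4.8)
The plan is to establish the two inequalities $m(\sigma(G))\le m(\sigma(H))$ and $m(\sigma(G))\ge m(\sigma(H))$ separately. First I would record that $\sigma(H)$ is a legitimate permutation set in the sense required by the definition of $m$: since $G$ has a path from $s$ to $t$ it has a \emph{simple} one, and a simple $s$--$t$ path never traverses an edge into $s$ or out of $t$, so it uses no useless edges and therefore still exists in $H$; applying any permutation fixing $s$ and $t$ yields the same conclusion for every element of $\sigma(H)$. The inequality $m(\sigma(G))\le m(\sigma(H))$ is then immediate: $G$ is obtained from $H$ by adding useless edges one at a time, so repeated application of Corollary~\ref{thm:add2} gives $m(\sigma(H))\ge m(\sigma(G))$.

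For the reverse inequality I would take a minimal-size sound monotone switching network $G'$ accepting every element of $\sigma(G)$ and build from it a network $G'_c$ accepting every element of $\sigma(H)$ with no more vertices. The construction is to \emph{contract} in $G'$ every edge whose label is useless (that is, of the form $v\to s$ or $t\to v$ for some vertex $v$); contracting edges can only decrease the number of vertices, so $|V(G'_c)|\le|V(G')|$. The pivotal observation, which I would isolate as a lemma, is that for every input graph $G_0$ the network $G'_c$ accepts $G_0$ if and only if $G'$ accepts $G_0\cup U$, where $U$ denotes the set of all useless directed edges on $V\cup\{s,t\}$: a walk from $s'$ to $t'$ in $G'_c$ using only labels in $G_0$ lifts precisely to a walk in $G'$ from $s'$ to $t'$ using labels in $G_0$ together with useless labels, and walks may always be shortened to paths.

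Granting this equivalence, both required properties of $G'_c$ follow quickly. Soundness: adding useless edges to a graph $G_0$ cannot enlarge the set of vertices reachable from $s$ --- to use an edge $v\to s$ one must already have reached $v$, and its head $s$ is reachable in any case, while an edge $t\to v$ can only be used after $t$ has been reached --- so if $G_0$ has no $s$--$t$ path then neither does $G_0\cup U$; since $G'$ is sound it rejects $G_0\cup U$, hence $G'_c$ rejects $G_0$. Acceptance of $\sigma(H)$: given $H_1=\pi(H)\in\sigma(H)$, the corresponding graph $\pi(G)\in\sigma(G)$ equals $H_1$ together with the images under $\pi$ of the useless edges of $G$, and since $\pi$ fixes $s$ and $t$ these images are again useless, so $\pi(G)\subseteq H_1\cup U$; as $G'$ accepts $\pi(G)$, Proposition~\ref{thm:add1} shows $G'$ accepts $H_1\cup U$, and therefore $G'_c$ accepts $H_1$. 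Thus $m(\sigma(H))\le|V(G'_c)|\le|V(G')|=m(\sigma(G))$, which combined with the first inequality gives $m(\sigma(G))=m(\sigma(H))$.

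The main obstacle is making the contraction argument airtight. I must check that contracting the useless-labeled edges of $G'$ cannot identify $s'$ with $t'$ --- it cannot, since a path of useless-labeled edges from $s'$ to $t'$ in $G'$ would mean $G'$ accepts the graph consisting of just those useless edges, which has no $s$--$t$ path, contradicting soundness of $G'$ --- and I must handle cleanly the parallel edges and self-loops that contraction may create (self-loops are never used on a path and parallel edges are harmless, exactly as in the proof of Theorem~\ref{thm:repl}). I also want to be careful that the ``accepts $G_0$ iff accepts $G_0\cup U$'' equivalence is unaffected by useless edges that happen not to occur as labels in $G'$. Once the equivalence and these bookkeeping points are nailed down, the rest is routine.
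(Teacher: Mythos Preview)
Your proof is correct and rests on the same core idea as the paper's: for the nontrivial direction $m(\sigma(H))\le m(\sigma(G))$, take an optimal sound network $G'$ for $\sigma(G)$, neutralize its useless-labeled edges, and argue that soundness survives because adjoining useless edges to an input graph can never create an $s$--$t$ path. The only real difference is in how the neutralization is carried out. The paper replaces every edge of $G'$ labeled $v\to s$ (and, by symmetry, $t\to v$) with an \emph{unlabeled} edge that may be traversed unconditionally; this keeps the vertex set of $G'$ intact and so avoids all of the bookkeeping you flagged as the ``main obstacle'' (no risk of identifying $s'$ with $t'$, no self-loops, no parallel-edge issues). Your contraction achieves the same acceptance equivalence $G'_c$ accepts $G_0$ iff $G'$ accepts $G_0\cup U$, at the cost of that extra verification, which you do handle correctly. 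Either implementation yields the same soundness argument, and the paper's one-edge-at-a-time reduction via Lemma~\ref{lem1} is equivalent to your all-at-once treatment.
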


\begin{proof} [Proof of Theorem \ref{thm:useless}.]
This result follows from Lemma \ref{lem1}.  The case with an edge of the form $t \to a$ follows by a symmetrical argument.

\begin{lem}\label{lem1}
Let $G$ be a graph with the edge $a\to s$.  Consider $H$, an identical
graph except $a\to s$ is removed.  Then $m(\sigma(G))=m(\sigma(H))$.   
\end{lem}

\begin{proof}
Since an edge was removed from $G$ to yield $H$, from Theorem
\ref{thm:add2}, we have that $m(\sigma(G))\le m(\sigma(H))$. Consider a sound monotone switching network $G'$ of minimal size which
accepts all the elements of $\sigma(G)$.  Replace every edge whose label is of the form
$v\to s$, for some $v$, with an unlabeled edge (an unlabeled edge can
be traversed under any condition) to produce a monotone switching network $G'_2$.  This monotone switching
network $G'_2$ accepts all the elements of $\sigma(H)$.  It is sufficient to prove now that $G'_2$ is sound.  If $G'_2$ were not sound, then a disconnected graph $K$ would exist which $G'_2$ accepts.  Thus, an accepting path $P'$ must traverse an unlabeled edge.  Let $\bar K$ be identical to $K$ except the edge $v\to s$ is added for all $v$.  We have that $\bar K$ is accepted by $G'$ by following the path $P'$, except that the unlabeled edges are replaced with edges whose labels are of the form $v\to s$.  As $G'$ is sound, $\bar K$ must a path from $s$ to $t$.  However, the addition of edges of the form $v\to s$ to a graph without a path from $s$ to $t$ cannot produce a graph with a path from $s$ to $t$. This is a contradiction; thus, $G'_2$ is sound.
\end{proof}
Therefore, $$m(\sigma(G)) = |V(H)| \ge m(\sigma(H)).$$
\end{proof}

Theorem \ref{thm:useless} is very useful for proving lower
bounds of $m(\sigma(G))$, especially when combined with Theorem
\ref{thm:merge}.  To introduce this theorem, we first define the
concept of a \emph{merge graph}.  Figure \ref{fig:merge} depicts an example of a
merge graph.

\begin{figure}
\begin{center}
\includegraphics[width=3in, bb=0 0 315 162]{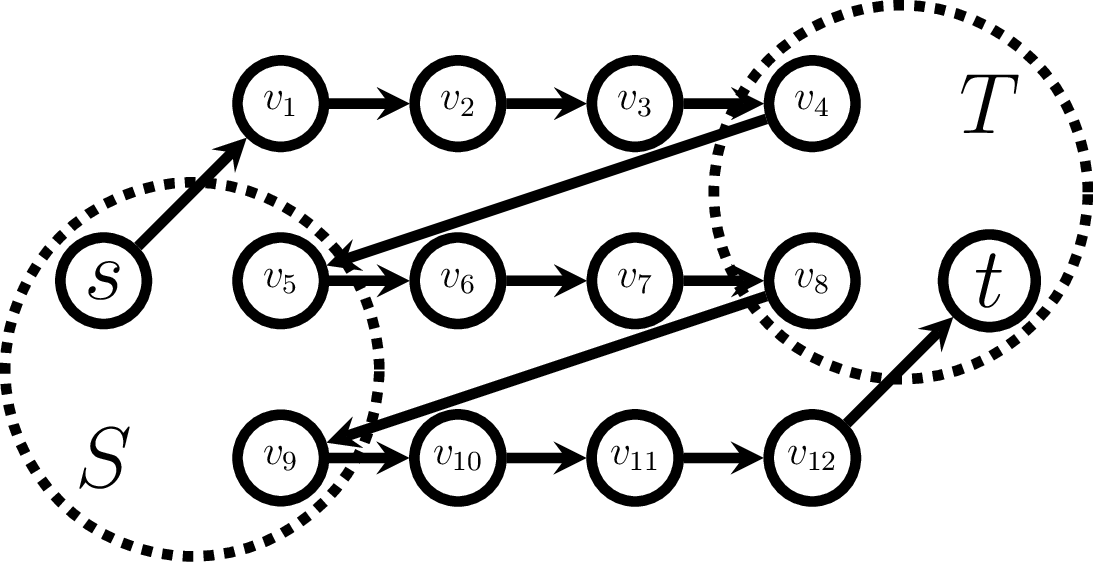}
\hfill
\includegraphics[width=3in, bb=0 0 287 130]{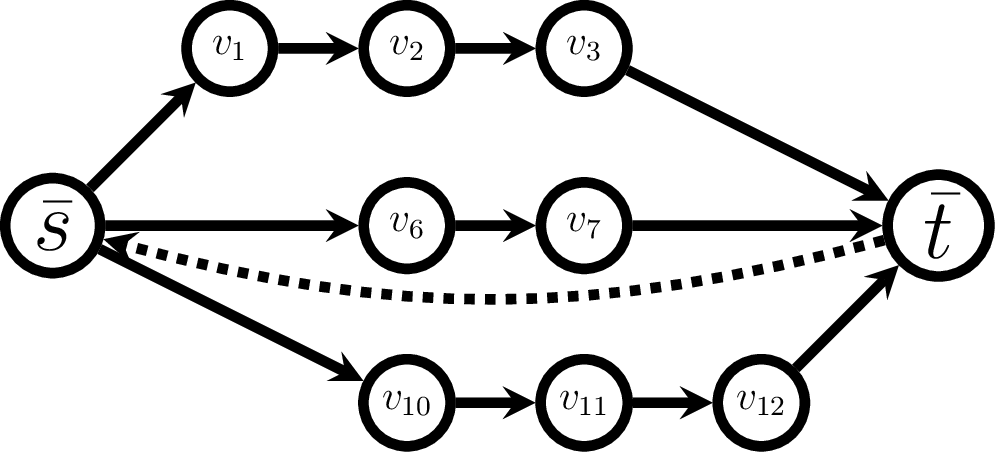}
\end{center}
\caption{Example showing $G$ (left) and the merge graph $G_{(S,T)}$ (right).  The dashed edge from $\bar t$ to $\bar s$ is useless and can be removed without affecting the value of $m(\sigma(G_{(S,T)}))$.}
\label{fig:merge}
\end{figure}

\begin{df}\label{def:merge}
Given a graph $G$, let $S$ be a set of vertices such that $s\in S$ and $t\not\in S$.
Also let $T$ be a set of vertices such that $t\in T$ and $s\not\in T$.
Consider the graph $G_{(S,T)}$ whose vertex set is identical to $G$
except the vertices of $S$ have been merged into a single vertex $\bar s$, and the vertices of
$T$ have been merged into a single vertex $\bar t$.  Any edge between
two elements of $S$ or two elements of $T$ is removed.   Any edge with
exactly one endpoint in $S$ is replaced with an edge whose corresponding
endpoint is $\bar s$.  Likewise, any edge with exactly
one endpoint in $T$ is replaced with an edge whose corresponding endpoint
is $\bar t$.  Any remaining edges remain unchanged.  Define $G_{(S,T)}$ to be the $(S,T)$\textbf{\emph{-merge graph}} of
$G$. 
\end{df} 

\begin{thm}\label{thm:merge}
Given a graph $G$, let $S$ and $T$ be subsets of $G$ defined as in Definition \ref{def:merge}.
Then $m(\sigma(G))\ge m(\sigma(G_{(S,T)}))$.    
\end{thm}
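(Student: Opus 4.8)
The plan is to take a minimal-size sound monotone switching network $G'$ that accepts every element of $\sigma(G)$ and build from it a sound monotone switching network $G'_{(S,T)}$ of the same size that accepts every element of $\sigma(G_{(S,T)})$. The construction of $G'_{(S,T)}$ should be essentially the identity on vertices and edges of $G'$, but with each edge label rewritten to reflect the merge: an edge of $G'$ labeled $a \to b$ becomes an edge labeled $\bar{a} \to \bar{b}$, where $\bar{a} = \bar s$ if $a \in S$, $\bar{a} = \bar t$ if $a \in T$, and $\bar a = a$ otherwise (similarly for $b$); an edge whose label becomes $\bar s \to \bar s$ or $\bar t \to \bar t$ (both endpoints in $S$, or both in $T$) is turned into an unlabeled edge, which can always be traversed. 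Since vertices are preserved, the size is unchanged.

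First I would verify completeness with respect to $\sigma(G_{(S,T)})$: given $K \in \sigma(G_{(S,T)})$, lift it to a graph $\hat K$ on $V \cup \{s,t\}$ by splitting $\bar s$ back into the vertices of $S$ and $\bar t$ back into the vertices of $T$ in the way dictated by the permutation, so that $\hat K$ (after possibly adding the intra-$S$ and intra-$T$ edges, which only helps) is an element of $\sigma(G)$ up to having more edges — by Proposition \ref{thm:add1} that is fine. An accepting $s'$--$t'$ path for $\hat K$ in $G'$ then projects to an accepting path for $K$ in $G'_{(S,T)}$: every edge label $a \to b$ used is an edge of $\hat K$, and its image $\bar a \to \bar b$ is by construction an edge of $K$ (or the edge is unlabeled and needs nothing). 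Here I would need to be a little careful that the "lift" $\hat K$ genuinely realizes every edge used — the natural move is to let $\hat K$ contain, for each edge $\bar a \to \bar b$ of $K$, all edges $a \to b$ with $a$ mapping to $\bar a$ and $b$ to $\bar b$, which keeps $\hat K$ in the permutation class of a supergraph of $G$.

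The main obstacle, as usual for these arguments, is soundness of $G'_{(S,T)}$. Suppose for contradiction $G'_{(S,T)}$ accepts some graph $K$ (on $V\cup\{s,t\}$ with $s,t$ now playing the role of $\bar s, \bar t$) with no $s$--$t$ path, via a path $P'$ from $s'$ to $t'$. I would pull $P'$ back to $G'$ by reading each traversed edge's original label $a \to b$, and build an input graph $K^\ast$ for $G'$ consisting of exactly those edges $a \to b$ together with all edges internal to $S$ and all edges internal to $T$ (to legitimize the unlabeled edges, which came from intra-$S$ or intra-$T$ edges of $G'$). Then $P'$ is an accepting path for $K^\ast$ in $G'$, so by soundness of $G'$ the graph $K^\ast$ has a path from $s$ to $t$. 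The key claim to nail down is that contracting $S$ to $\bar s$ and $T$ to $\bar t$ in $K^\ast$ yields a subgraph of $K$, hence $K$ would have a path from $\bar s$ to $\bar t$ — contradiction. This claim holds because each label $a\to b$ used by $P'$ contributes the edge $\bar a \to \bar b$ to $K$ by construction, and the added intra-$S$/intra-$T$ edges contract to loops at $\bar s$ or $\bar t$, which do not affect reachability; one should also check that adding these intra-$S$, intra-$T$ edges to $K^\ast$ cannot itself create an $s$--$t$ path in $K^\ast$ that projects outside $K$, but since those edges contract away, a path in $K^\ast$ projects to a walk in $K$ using only edges of $K$, which suffices. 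Finally, since $G'_{(S,T)}$ is sound and accepts all of $\sigma(G_{(S,T)})$ and has size $m(\sigma(G))$, we conclude $m(\sigma(G)) \ge m(\sigma(G_{(S,T)}))$.
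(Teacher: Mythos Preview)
Your proposal is correct and follows essentially the same approach as the paper: relabel each edge of a minimal sound network for $\sigma(G)$ by replacing endpoints in $S$ by $\bar s$ and endpoints in $T$ by $\bar t$, then verify completeness for $\sigma(G_{(S,T)})$ and soundness. The paper streamlines the completeness step by first passing through the intermediate set $\sigma_{S\cup T}(G)\subset\sigma(G)$ (permutations fixing every vertex of $S\cup T$), so that the correspondence with $\sigma(G_{(S,T)})$ is the obvious bijection and no explicit ``lift'' is needed; conversely, your treatment is more careful than the paper's in two places --- you explicitly handle intra-$S$/intra-$T$ labels by making those edges unlabeled, and you spell out the soundness contradiction in detail, whereas the paper dispatches soundness in a single sentence.
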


\begin{proof}
Because $\sigma_{S\cup T}(G)\subset \sigma(G)$, we have that
$m(\sigma_{S\cup T}(G))\le m(\sigma(G))$. Consider the sound monotone
switching network $H'$ which accepts all of the elements of $\sigma_{S\cup
  T}(G)$.  Construct a new monotone switching network $\bar H'$ by
taking every edge label of this monotone
switching network which has an endpoint in $S$ or $T$ and replace
it with $\bar s$ or $\bar t$, respectively.   This new monotone switching network accepts all elements of
$\sigma(G_{(S,T)})$.  As any graph without a path from $\bar s$ to $\bar
t$ cannot correspond to a graph with a path from $s$ to $t$, we have that $\bar H'$
is sound.  Thus, 
$$m(\sigma(G_{(S,T)}))\le m(\sigma_{S\cup T}(G))\le m(\sigma(G)).$$
\end{proof}

Theorem \ref{thm:useless} and Theorem \ref{thm:merge} can work together well to obtain bounds for arbitrary graphs, as shown in Section \ref{sec:main_results}.

A natural question after observing Theorem \ref{thm:merge} is whether
merging arbitrary sets of vertices which do not necessarily contain $s$
or $t$ provides the same result.  This is not
the case: there exists a graph $G$ for which contracting particular sets of vertices
results in an increase in the value of $m(\sigma(G))$.  For proof, see Appendix \ref{app:merge}.

\section{Main Results}\label{sec:main_results}

We now demonstrate bounds on the value of $m(\sigma(G))$ where $G$ is any
tree.  The result we prove is as follows.

\begin{thm}\label{thm:all_tree}
Let $G$ be an arbitrary directed tree with a path from $s$ to $t$.  Define $d_i^s$ to be the number of
vertices which are accessible from $s$ with $i$ as the maximum distance of its
descendants from $s$.  Let $d_i^t$ to be the number of vertices
which can access $t$ with the maximum distance of its ancestors
to $t$ being $i$.  Define the sequence $c^s_1,\hdots, c^s_{\lceil \lg
  \lg \ell\rceil}$ such that
\begin{align*}
c_1^s&=\sum_{i=1}^{n}d_i^s&\text{ and }&&c_k^s&=\sum_{i=2^{2^{k}}}^nd_i^s\text{ where }k\ge 2.
\end{align*}
Define $c^t_1,\hdots, c^t_{\lceil \lg \lg \ell \rceil}$ similarly.  Let $\bar d$ be the number of vertices which are not accessible from $s$ or $t$.  Let $\ell$ be the length of the path from $s$ to $t$.  Then $m(\sigma(G))$ can be bounded by
$$(\ell+\bar d)^{\Omega(\lg \ell)}\max_{1\le i\le \lceil \lg \lg
  \ell\rceil}(c_i^s+c_i^t)^{\Omega(2^i)}\le m(\sigma(G))\le n^{O(\lg \lg
\ell)}(\ell+\bar
d)^{O(\lg \ell)}\prod_{i=1}^{ \lceil \lg \lg
  \ell\rceil}(c_i^s+c_i^t)^{O(2^i)}.$$
\end{thm}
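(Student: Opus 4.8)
The plan is to prove the upper and lower bounds separately, and in each case to reduce the problem of a general directed tree to the special structural cases for which bounds are already available (Theorems \ref{thm:many_paths} and \ref{thm:bbox}, together with the reduction tools of Section \ref{subsec:tech}).

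\medskip

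\noindent\textbf{Upper bound.} First I would decompose the tree $G$ into its ``spine'' --- the unique path from $s$ to $t$, of length $\ell$ --- together with the subtrees hanging off of it. A vertex is either on an ``$s$-side'' branch (reachable from $s$ but not reaching $t$), a ``$t$-side'' branch (reaching $t$ but not reachable from $s$), a spine vertex, or in the component of $\bar d$ vertices not connected to $s$ or $t$ at all. The key idea is to construct the switching network as a product of networks, one handling the spine and the $\bar d$ vertices and $\ell$ together, and a collection handling branch vertices bucketed by their depth parameter $i$ (the max distance of a descendant from $s$, resp.\ ancestor distance to $t$). For the branches, I would apply Theorem \ref{thm:bbox}: a vertex whose entire subtree has small depth is ``almost a lollipop'' in the sense that after merging via Theorem \ref{thm:merge} and invoking the lollipop construction, it contributes only a $k^{O(\lg i)}$-type factor where $k$ counts the non-lollipop vertices at that scale. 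Bucketing the depth parameter dyadically (the $c^s_k$ collect vertices of depth $\ge 2^{2^k}$) is exactly what produces the $\prod_i (c^s_i+c^t_i)^{O(2^i)}$ factor: vertices at depth roughly $2^{2^k}$ need $\ell'\approx 2^{2^k}$ in Theorem \ref{thm:bbox}, so $\lg \ell' \approx 2^k$, and there are $c^s_k$ of them. The $n^{O(\lg\lg\ell)}$ prefactor absorbs the $\lceil\lg\lg\ell\rceil$-many product terms each contributing an $n^{O(1)}$ slack, and the $(\ell+\bar d)^{O(\lg\ell)}$ factor comes from handling the spine plus disconnected vertices by Theorem \ref{thm:all_len}-style constructions. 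I would then argue that the ``product'' of all these sound networks is sound and accepts every element of $\sigma(G)$.

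\medskip

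\noindent\textbf{Lower bound.} Here I would go the other direction, using Theorems \ref{thm:useless}, \ref{thm:merge}, and \ref{thm:repl} to show that $G$ contains, in the sense of these reductions, a hard sub-instance for each factor. For the $(\ell+\bar d)^{\Omega(\lg\ell)}$ factor, I would merge all branch vertices into $\bar s$ or $\bar t$ as appropriate and delete useless edges, leaving essentially the spine together with the $\bar d$ isolated vertices, which is an instance of the form covered by Theorem \ref{thm:many_paths} (or its length-$\ell$ analogue), giving $(\ell+\bar d)^{\Omega(\lg\ell)}$. For the $(c^s_i+c^t_i)^{\Omega(2^i)}$ factor, I would, for a fixed scale $i$, merge away everything except the $c^s_i$ (resp.\ $c^t_i$) deep branches and a path of length $\approx 2^{2^i}$ serving as their common depth-witness, producing a merge graph of the ``many disjoint paths of length $2^{2^i}$'' type from Theorem \ref{thm:many_paths}; that yields roughly $(c^s_i)^{\Omega(\lg 2^{2^i})}=(c^s_i)^{\Omega(2^i)}$. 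Since Theorem \ref{thm:merge} gives $m(\sigma(G))\ge m(\sigma(G_{(S,T)}))$ for each choice of $S,T$, and since $m(\sigma(G))$ dominates the max over these sub-instances, I would combine the bound for the spine-plus-isolated instance multiplicatively with the best scale --- being careful that the lower bound in the statement is a \emph{product} of $(\ell+\bar d)^{\Omega(\lg\ell)}$ and only the \emph{maximum} single $(c^s_i+c^t_i)^{\Omega(2^i)}$, not the full product, which is exactly what a single well-chosen merge graph can deliver.

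\medskip

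\noindent\textbf{Main obstacle.} I expect the hard part to be the upper bound: showing that the branch-handling networks can be combined into a single sound network whose size is the \emph{product} (not something larger) of the per-scale sizes, while still accepting every permutation in $\sigma(G)$. The subtlety is that a permutation may place a deep branch anywhere along the spine, so the spine-network and the branch-networks must be ``synchronized'' correctly; making this rigorous likely requires a careful definition of the product construction (tensor/Cartesian product of switching networks) and a soundness argument showing that an accepting path in the product projects to accepting paths in each factor, hence to a genuine $s$-$t$ path in any input with no $s$-$t$ path --- a contradiction. Getting the bucketing thresholds $2^{2^k}$ to line up so that the $\lg\lg\ell$ scales exactly cover all possible depths up to $\ell$, with no off-by-constant losses worse than the stated $n^{O(\lg\lg\ell)}$ slack, is the other place where care is needed.
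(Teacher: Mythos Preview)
Your high-level strategy is right, but there are genuine gaps on both sides.

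\medskip

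\textbf{Lower bound.} After merging the $s$-side and $t$-side branches into $s$ and $t$, the $\bar d$ vertices do \emph{not} become isolated: they remain a forest of disconnected directed trees, and Theorem~\ref{thm:many_paths} does not apply to a spine together with floating trees. The paper first proves a separate lemma (Lemma~\ref{lem:float}) showing that any disconnected tree on $n_H$ vertices can be reduced, via merges and useless-edge removal, to a disconnected \emph{path} on at least $\lceil\sqrt{n_H}\rceil$ vertices; only then can these paths be spliced into additional $s$--$t$ paths of length $\ell$ so that Theorem~\ref{thm:many_paths} applies. The same lemma is needed for the $(c_i^s)^{\Omega(2^i)}$ factor. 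Your sentence ``merge away everything except the $c_i^s$ deep branches \ldots\ producing a merge graph of the many-disjoint-paths type'' hides a nontrivial step: the $c_i^s$ vertices form a subtree rooted at $s$, not a union of disjoint length-$2^{2^i}$ paths, and the paper's Theorem~\ref{thm:out_tree} requires a two-case analysis (many vertices at shallow depth versus one large deep subtree) together with Lemma~\ref{lem:float} to extract the disjoint-paths instance.

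\medskip

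\textbf{Upper bound.} A Cartesian/tensor product of per-bucket networks is not what the paper does, and the synchronization obstacle you flag is exactly why such a product does not obviously work. The paper instead builds a sequence of graphs $G_1,G_2,\ldots,G_{\lceil\lg\lg\ell\rceil+1}$ where at step $i$ the shallowest remaining branches (depth below $2^{2^{i-1}}$) are collapsed to direct lollipop edges $s\to a$ or $a\to t$; the final graph has only lollipops off the spine and is handled by Theorem~\ref{thm:bbox}. To pass from a network $G'_{i+1}$ accepting $\sigma(G_{i+1})$ back to one accepting $\sigma(G_i)$, every edge of $G'_{i+1}$ labeled $s\to a$ is \emph{replaced by a small sub-network} (again supplied by Theorem~\ref{thm:bbox}) that checks for a path of length at most $2^{2^{i-1}}$ from $s$ to $a$. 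It is this edge-substitution, not a product, that makes the sizes multiply across scales while keeping soundness essentially for free: an accepting path in $G'_i$ projects to an accepting path in $G'_{i+1}$ with each collapsed lollipop edge witnessed by an actual short path in the input.
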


\begin{note}
If we let $B_1$ be the upper bound and $B_2$ be the lower bound, then
$$\lg (B_1/B_2)= O(\lg \lg \ell).$$
\end{note}

\noindent This proof is divided into proving the lower and upper bounds.

\subsection{Lower Bound}

First we prove the following lemma.

\begin{lem}\label{lem:float}
Let $G$ be a graph and $H$ be a tree disconnected from $G$.  Let $n_H$
be the number of vertices of $H$.  Let $P$ be
a path with $\left\lceil\sqrt{n_H}\right\rceil$ vertices which is disconnected
from $G$.  Then
$m(\sigma(G\cup H))\ge m(\sigma(G\cup P))$.
\end{lem}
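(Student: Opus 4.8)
The plan is to turn $G\cup H$ into $G\cup P$ through a chain of operations, none of which can increase $m(\sigma(\cdot))$: adding edges (Corollary~\ref{thm:add2}), merging vertex sets containing $s$ and $t$ (Theorem~\ref{thm:merge}), and deleting useless edges (Theorem~\ref{thm:useless}). I would start from a layering of the directed tree $H$, which is acyclic since it has no undirected cycles: put a vertex $v$ in layer $i$ when a longest directed path of $H$ ending at $v$ has exactly $i$ vertices. Each layer is then an antichain, since a directed path from one vertex of a layer to another would force the second into a strictly higher layer. Hence, writing $k:=\lceil\sqrt{n_H}\,\rceil$, if $H$ has no directed path on $k$ vertices then it has fewer than $k$ layers, and because $k-1<\sqrt{n_H}$ (so $(k-1)^2<n_H$) some layer has at least $k$ vertices. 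So $H$ contains either a directed path on $\ge k$ vertices or a set $W$ of $k$ pairwise reachability-incomparable vertices. In the first case let $W$ be a block of $k$ consecutive vertices of such a path, so $H[W]$ is already a directed path on $k$ vertices; in the second case add to $H$ the $k-1$ edges that make $W$ into a directed path, which is legitimate by Corollary~\ref{thm:add2}. Call the resulting graph $H^+$; in both cases $H^+[W]$ is a directed path $P_W$ on $k$ vertices, and $H^+$ and $H$ have the same edges between $W$ and $V(H)\setminus W$.

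Next I would set $B:=\{u\in V(H)\setminus W:\ u$ has a directed path to $W$ in $H\}$ and $A:=V(H)\setminus(W\cup B)$, and apply Theorem~\ref{thm:merge} to $G\cup H^+$ with $S=\{s\}\cup A$ and $T=\{t\}\cup B$. The key point to verify is that, in the merge graph $(G\cup H^+)_{(S,T)}$, every edge outside $G$ and $P_W$ is useless, i.e.\ of the form $v\to\bar s$ or $\bar t\to v$, or else has both endpoints in one merged set and vanishes. There is no edge from $A$ to $W$ and none from $A$ to $B$, for either would give some vertex of $A$ a directed path to $W$, contradicting the definition of $A$; so every edge joining $A$ and $W$ is $w\to u$ ($w\in W$, $u\in A$), becoming $w\to\bar s$, and every edge joining $A$ and $B$ is $u'\to u$ ($u'\in B$, $u\in A$), becoming $\bar t\to\bar s$, both useless. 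There is also no edge from $W$ to $B$: it would yield a directed path from a vertex of $W$ back to $W$ passing through $V(H)\setminus W$, impossible because $H$ is a tree and $W$ is either an antichain or a block of consecutive vertices of a directed path, so any directed walk between two vertices of $W$ (if one exists) stays inside $W$; hence every edge joining $W$ and $B$ is $u\to w$ ($u\in B$, $w\in W$), becoming $\bar t\to w$, useless. Deleting all useless edges (Theorem~\ref{thm:useless}) leaves exactly $G\cup P_W$, with $P_W$ a directed path on $k$ vertices disconnected from $G$ and no extra $s\to t$ edge created. Since $P_W\cong P$, chaining the inequalities yields $m(\sigma(G\cup H))\ge m(\sigma(G\cup H^+))\ge m(\sigma((G\cup H^+)_{(S,T)}))=m(\sigma(G\cup P_W))=m(\sigma(G\cup P))$.

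I expect the merge analysis to be the main obstacle: one must check, in both branches of the dichotomy, that the edge families $A\to W$, $A\to B$, and $W\to B$ are all empty, because any of them would attach the intended floating path $P_W$ to $s$ or $t$, or create a spurious $s\to t$ edge — and since adding an $s\to t$ edge \emph{decreases} $m(\sigma(\cdot))$ by Corollary~\ref{thm:add2}, such a slip would break the chain of inequalities rather than help it. The first two are immediate from the definition of $A$; the emptiness of $W\to B$ is where the tree structure is essential, ruling out a directed detour from $W$ back to $W$ through the remaining vertices. As a cross-check, the ``long directed path'' branch can also be obtained by repeatedly deleting leaves: a leaf of a floating tree can be merged into $\bar s$ or $\bar t$, after which its unique edge is useless, so $m(\sigma(G\cup H))\ge m(\sigma(G\cup H'))$ for every subtree $H'$ of $H$, and one may take $H'$ to be a directed path on $k$ vertices. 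The degenerate cases, $n_H\le 1$ and $W=V(H)$ (which forces $H$ itself to equal $P_k$), are routine and handled separately.
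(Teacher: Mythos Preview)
Your proof is correct and takes a genuinely different route from the paper's.

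The paper also splits into two cases, but via a different invariant: it fixes a \emph{potential} $d$ on $H$ with $d(v_2)-d(v_1)=1$ along every edge $v_1\to v_2$ (well-defined because the underlying graph is a tree), and then asks whether the range $d_{\max}-d_{\min}+1$ is below or above $\lceil\sqrt{n_H}\rceil$. In the ``small range'' case it pigeonholes on level sets of $d$ to find a large antichain, then merges the lower levels into $t$ and the higher levels into $s$; this is the analogue of your antichain branch, with the level structure doing the work your Mirsky layering does. In the ``large range'' case, however, the paper does \emph{not} assume a long directed path exists in $H$: instead it takes the undirected path from a minimum-depth vertex to a maximum-depth vertex and performs local surgery at each ``peak'' (a vertex $w_2$ with incoming edges from both neighbours $w_1,w_3$ on the path), merging $w_2$ into $s$, deleting the two now-useless edges, and inserting $w_1\to w_3$. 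Iterating eliminates all backward edges and leaves a directed path of length at least $d_{\max}-d_{\min}$.

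Your argument is cleaner on two counts. First, the dichotomy ``longest chain has $\ge k$ vertices, or some antichain has $\ge k$ vertices'' is exactly Mirsky's theorem and works in any DAG, whereas the paper's potential $d$ is tree-specific. Second, you handle both branches with a single merge scheme (partition $V(H)\setminus W$ into $B=$ predecessors of $W$ and $A=$ the rest), checking once that $A\to W$, $A\to B$, $W\to B$ are empty; the paper's two cases use different merge recipes and the second requires the iterative peak-elimination step. The paper's approach, on the other hand, keeps the ``long path'' case entirely within the merge/useless/add-edge toolkit without needing the observation that a directed $W$--detour through $V(H)\setminus W$ is impossible in a tree --- your key use of the tree hypothesis.
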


\begin{rem}
The length of $P$ can be significantly increased. See Appendix \ref{app:float}.
\end{rem}

\begin{proof}
Give the vertices of $H$ a depth labeling $d(v)$ in
the following way: pick an arbitrary vertex
$v$ and let $d(v)$ be $0$.  We can now define $d$ recursively.  For any
edge $v_1\to v_2$, we have $d(v_2)-d(v_1)=1$.  If there is a path from
vertex $w_1$ to vertex $w_2$, then $d(w_2)-d(w_1)>0$.  Thus, any two
vertices of the same depth do not have a directed path between them.
Let $d_{\min}$ be the minimum depth and $d_{\max}$ be the maximum
depth. Note that $d_{\min}$ may be negative if there are edges directed toward $v$.  We now have two cases to consider.

\emph{Case 1:} $d_{\max}-d_{\min}+1< \left\lceil \sqrt{n_H}\right\rceil$.

By the pigeonhole principle, there exists a depth $\bar d$ which
has at least $\left\lceil \sqrt{n_H}\right\rceil$ vertices.  Call this set of vertices $\bar{D}$.  We can merge all of
the vertices of depth less than $\bar d$ with $t$ and all the vertices
of depth greater than $\bar d$ with $s$.  After removing useless edges, the vertices of $\bar D$ are
isolated.  We then add edges between the vertices of $\bar D$ to create a path
of length $\left\lceil \sqrt{n_H}\right\rceil$, as desired.  If there are
additional vertices, they can be merged with $s$.

\emph{Case 2:} $d_{\max}-d_{\min}+1\ge \left\lceil
  \sqrt{n_H}\right\rceil$.

First consider the case where if we ignore directions, then $H$ consists of only a single undirected path $P$
from a vertex $w_{min}$ at depth $d_{\min}$ to a vertex $w_{\max}$ at depth $d_{\max}$. Now looking at the edge directions, let $c^+$ be the number of 
edges along the path which go from a vertex of lesser depth to one of greater depth. Define $c^-$ to be the number of 
remaining edges.  We have that $c^+ = d_{\max} - d_{\min} + c^-$.\\
Now note that whenever we have vertices $w_1,w_2,w_3$ such that the path $P$ from $w_{min}$ to $w_{max}$ 
contains an edge from $w_1$ to $w_2$ followed by an edge from $w_3$ to $w_2$, we can do the following. 
We can merge $w_2$ with $s$, remove both edges (as they are now useless), then add an edge from $w_1$ to $w_3$ and 
increase the depth of $w_3$ and all later vertices on the path by 1. This keeps $c^{+}$ the same and reduces $c^{-}$ by 1. 
In this way, we can eliminate all edges going the opposite direction as $P$, and when we are done, we will have a path 
of length $c^{+} \geq d_{\max} - d_{\min}$, which will have at least $d_{\max} - d_{\min} + 1$ vertices.\\
For the general case, note that $H$ will contain at least one such path $P$ as a subgraph and then note that 
we can ignore all of the other other vertices of $H$ by merging them with $s$ or $t$ and then removing useless edges.
\end{proof}

Before handling the case of a general directed tree $G$, we find bounds in the case where $G$ is a flow-out tree, whose bounds are the foundation of our proof of
the general case.

\begin{df}
A \textbf{\emph{flow-out tree}} $G$ is a tree with a special vertex $r$ (the \textbf{\emph{root}})
such that there is a path from $r$ to every other vertex of $G$.
\end{df}

\begin{thm} \label{thm:out_tree}
Let $G$ be a flow-out tree with root $s$ and a path of length $\ell$ from $s$ to $t$.  For $i\ge 1$, define $d_i$ to be the number of
vertices whose descendants have a maximum distance of $i$ from $s$.  Define an additional
sequence $c_1$, $c_2$, $\cdots$, $c_{\lceil \lg \lg \ell \rceil}$ with
the property that $c_1=n$ and for all $i\ge 2$,
\begin{align}c_i&=\sum_{j=2^{2^{i}}}^{n}d_j.\label{eq:2}\end{align}
We then have that
\begin{align}\ell^{\Omega(\lg \ell)}\max_{1\le i \le \lceil \lg \lg \ell
  \rceil}c_i^{\Omega(2^i)}&\le m(\sigma(G))\label{eq:1}\end{align}
\end{thm}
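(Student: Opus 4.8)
The plan is to split the stated product into two separate single‑factor lower bounds, $m(\sigma(G))\ge\ell^{\Omega(\lg\ell)}$ and $m(\sigma(G))\ge c_i^{\Omega(2^i)}$ for each $i$ with $1\le i\le\lceil\lg\lg\ell\rceil$, and then to combine them essentially for free: if $m\ge A$ and $m\ge B$ then $m\ge\sqrt{AB}$, and passing to a square root only halves exponents, so $m(\sigma(G))\ge\bigl(\ell^{\Omega(\lg\ell)}c_i^{\Omega(2^i)}\bigr)^{1/2}=\ell^{\Omega(\lg\ell)}c_i^{\Omega(2^i)}$; taking the maximum over $i$ then yields \eqref{eq:1}. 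The hidden constants stay uniform in $i$ because the second bound will use only the fixed constants coming from Theorem~\ref{thm:many_paths}. So it suffices to establish the two bounds.

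For the factor $\ell^{\Omega(\lg\ell)}$, apply Theorem~\ref{thm:merge} with $S$ equal to $\{s\}$ together with every vertex of $G$ not on the $s$-$t$ path, and with $T=\{t\}$. Every subtree hanging off the path collapses into $\bar s$, and the edge attaching such a subtree to the path becomes useless and can be deleted via Theorem~\ref{thm:useless}; what survives is the bare directed path of length $\ell$ from $\bar s$ to $t$. That path is an instance of Theorem~\ref{thm:many_paths} (its unique $s$-$t$ path has length $\ell$ and every vertex lies on it), so $m(\sigma(G))\ge m(\sigma(\mathrm{path}_\ell))=(\ell+1)^{\Theta(\lg\ell)}=\ell^{\Omega(\lg\ell)}$.

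The factor $c_i^{\Omega(2^i)}$ is the heart of the matter. Fix $i$. For $i=1$, where $c_1=n$, the needed estimate is just the size bound $m(\sigma(G))=\Omega(n)$, which I would get by a direct soundness argument and omit here. For $i\ge2$: let $D$ be the induced subtree on the $c_i$ vertices whose deepest descendant lies at depth $\ge2^{2^i}$ from $s$. Since ``having a descendant at depth $\ge2^{2^i}$'' is inherited by ancestors, $D$ is rooted at $s$ and every leaf of $D$ sits at depth $\ge2^{2^i}$. Merge all vertices outside $D$ into $\bar s$ by Theorem~\ref{thm:merge} and delete the edges from $D$ into them, now useless, by Theorem~\ref{thm:useless}; this leaves $D$ intact, so $m(\sigma(G))\ge m(\sigma(D))$ with $t$ kept as the sink. (If $2^{2^i}>\ell$, so $t\notin D$, one performs the analogous merge promoting a deepest leaf of $G$ to serve, after merging with $t$, as the sink.) Now I would coax $D$ into an instance of Theorem~\ref{thm:many_paths}. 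Because the $c_i$ deep vertices may be spread over many depths, I first pass to the scale $i'\ge i$ at which the window count $\sum_{2^{2^{i'}}\le j<2^{2^{i'+1}}}d_j$ is maximal; as there are at most $\lg\lg n$ windows this count is at least $c_i/\lg\lg n$, which exceeds $c_i^{1/2}$ unless $c_i$ is already small enough that the second factor is dominated by the first. All of those $\ge c_i^{1/2}$ vertices now lie at depths within a single multiplicative band $[2^{2^{i'}},2^{2^{i'+1}})$; merging every vertex of $D$ at depth $\ge2^{2^{i'+1}}$, together with $t$, into one sink $\bar t$ yields a graph in which every vertex lies on a directed $s$-to-$\bar t$ path of length between $2^{2^{i'}}$ and $2^{2^{i'+1}}$ and which still retains $\ge c_i^{1/2}$ vertices. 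Equalizing the path lengths — via the strengthened form of Lemma~\ref{lem:float} from Appendix~\ref{app:float} to straighten the over‑long portions, and further merges from Theorem~\ref{thm:merge} to trim the rest — gives a many‑paths graph with $c_i^{\Omega(1)}$ vertices and common path length $\Theta(2^{2^{i'}})$, so Theorem~\ref{thm:many_paths} produces $m(\sigma(G))\ge\bigl(c_i^{\Omega(1)}\bigr)^{\Theta(\lg 2^{2^{i'}})}=c_i^{\Omega(2^{i'})}\ge c_i^{\Omega(2^i)}$.

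The step I expect to be the main obstacle is the final length‑equalization in the second bound: turning the surviving deep vertices — which lie in a depth band of multiplicative width $2^{2^{i'}}$, not at one fixed depth — into a genuine Theorem~\ref{thm:many_paths} instance while losing at most a polynomial fraction of them. One cannot afford to retreat to a single long path in place of the many‑paths gadget, because that yields only $(2^{2^{i'}})^{\Theta(2^{i'})}=2^{\Theta(4^{i'})}$, which is far too weak once $c_i$ greatly exceeds $2^{2^i}$; the full width of the band must be exploited, and this is exactly where the appendix's strengthening of Lemma~\ref{lem:float}, together with careful depth bookkeeping, carries the argument.
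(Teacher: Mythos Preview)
Your overall framework --- prove $m(\sigma(G))\ge\ell^{\Omega(\lg\ell)}$ and $m(\sigma(G))\ge c_i^{\Omega(2^i)}$ separately, then combine via a geometric mean --- is exactly what the paper does, and your argument for the first factor matches the paper's. The gap is in the second bound for $i\ge2$. After you merge every vertex of $D$ at depth at least $2^{2^{i'+1}}$ (together with $t$) into $\bar t$, it is \emph{not} true that every remaining vertex lies on an $s$-to-$\bar t$ path. By definition, a window vertex $v$ has its deepest descendant at depth strictly below $2^{2^{i'+1}}$, so no descendant of $v$ is merged into $\bar t$; unless $v$ happens to lie on the original $s$--$t$ path, it has no route to $\bar t$ at all. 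Thus essentially all of your retained $c_i^{1/2}$ vertices sit in side branches that dangle from the trunk and never reach $\bar t$, and you are nowhere near a many-paths instance. The ``equalization'' you gesture at via Appendix~\ref{app:float} does not close this gap: that lemma straightens \emph{disconnected} trees, whereas your side branches are still attached to the main tree, and there is no single merge that detaches them while preserving enough $s$--$\bar t$ structure to invoke Theorem~\ref{thm:many_paths}.

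The paper resolves this with a dichotomy you omit. Setting $k=2^{2^i}$, it asks whether at least $\sqrt{c_i}$ of the surviving vertices lie at depth at most $k/2$. If yes, the depth-$k/2$ layer already supplies enough starting points for vertex-disjoint length-$k/2$ paths down to depth $k$, yielding a genuine many-paths gadget with at least $\sqrt{c_i}$ vertices. If no, then by pigeonhole some single depth-$k/2$ vertex carries a subtree $H$ of size at least $\sqrt{c_i}$; merging that vertex with $t$ turns the edges \emph{from} it into $H$ into useless edges, so $H$ genuinely disconnects, and only then does Lemma~\ref{lem:float} apply to straighten it into paths of length $\Theta(k)$. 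This shallow-versus-deep split is the missing idea in your sketch; passing to a best scale $i'$ does not substitute for it. (Your hand-waved $i=1$ case is a lesser issue.)
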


\begin{proof}
To prove the bound, we show that for all $i\le \lceil \lg \lg \ell \rceil$,
$$m(\sigma(G))\ge \ell^{\Omega(\lg \ell)}c_i^{\Omega(2^i)}.$$
Consider the set of vertices which are not on the path from $s$ to $t$.
If we merge these vertices with $s$ and remove useless edges directed
toward $s$, we are left with a single path of length
$\ell$.  From Theorem \ref{thm:merge} and Theorem \ref{thm:many_paths},
we obtain
$$m(\sigma(G))\ge \ell^{\Omega(\lg \ell)}.$$
Thus, it is sufficient to prove that
$$m(\sigma(G))\ge c_i^{\Omega(2^i)}$$
and take the geometric mean of these two bounds.
If $i=1$, take every edge $a\to b$ not on the path from $s$ to $t$ and
change it to $s\to b$.  By Theorems \ref{thm:bbox} and
\ref{thm:repl}, we get
$$m(\sigma(G))\ge n^{\Omega(1)}\ell^{\Omega(\lg \ell)}\ge
c_1^{\Omega(2^1)}$$
Now consider $i\ge 2$, let $k=2^{2^i}$.
Let $S_1$ be the set of vertices $v$ such for any vertex $w$ which is a
descendant of $v$, the distance
from $s$ to $w$ is less than $k$.  Merge the elements of $S_1$ with
$s$.  The remaining vertices have descendants which are a distance of at
least $k$ from $s$.  Thus, there are exactly $c_i$ vertices remaining by
definition of $c_i$.

We split the problem into two cases. Let $\bar d_k$ be the number of vertices with a depth at
most $\lceil k/2\rceil$.

\emph{Case 1:} $\bar d_k\ge \sqrt{c_i}$.

For each vertex $v$ at depth $\lceil k / 2\rceil$ we can choose a path $P_v$ from that vertex whose length is $\lfloor k / 2\rfloor$.  For the vertex $v$ with descendant $t$, we can choose $P_v$ so that it is contained in the path from $v$ to $t$.  For each $v$, let $w_v$ be the other endpoint of $P_v$.  Merge all vertices which are not any path $P_v$ and are not descendants of any $w_v$ with $s$. Merge all descendants of each $w_v$ with $t$.  If a $w_v$ has no descendants, add an edge from $w_v$ to $t$.  The number of vertices at exactly depth $\lceil k / 2\rceil$ is at least $\bar d_k /\lceil k / 2\rceil$, as the number of vertices at depth $i + 1$ is at least the number of verticies at depth $i$.  Each path has $\lceil k / 2\rceil$ vertices, implying that there are at least $\bar d_k$ vertices total.  Thus, from Theorem \ref{thm:merge} and Theorem \ref{thm:many_paths} we obtain
$$m(\sigma(G))\ge \left(\bar d_k\right)^{\Omega(\lg (k/2))}=
(c_i)^{\Omega(2^i)}.$$

\emph{Case 2:} $\bar d_k< \sqrt{c_i}$.
 
Let $D_{\lceil k/2\rceil}$ be the set of points at a distance of $\lceil
k/2\rceil$ from $s$.  Let $v_{\lceil k/2\rceil}$ be the element of
$D_{\lceil k/2\rceil}$ with the largest subtree $H$.  Merge $v_{\lceil k/2\rceil}$ with the vertex
$t$.  We now have two subcases to consider.

\emph{Subcase 1:} $t$ is in $H$.
Merge the path from $v_{\lceil k/2\rceil}$ to
$t$ with $t$.  There may now be subtrees disconnected from $G$.

\emph{Subcase 2:} $t$ is not in $H$.
Let $w$ be the vertex which is at a distance of
$\lceil k/2\rceil$ away from $t$.  Merge $w$ with $s$.

\begin{figure}
\includegraphics[width=2in, bb=0 0 291 420]{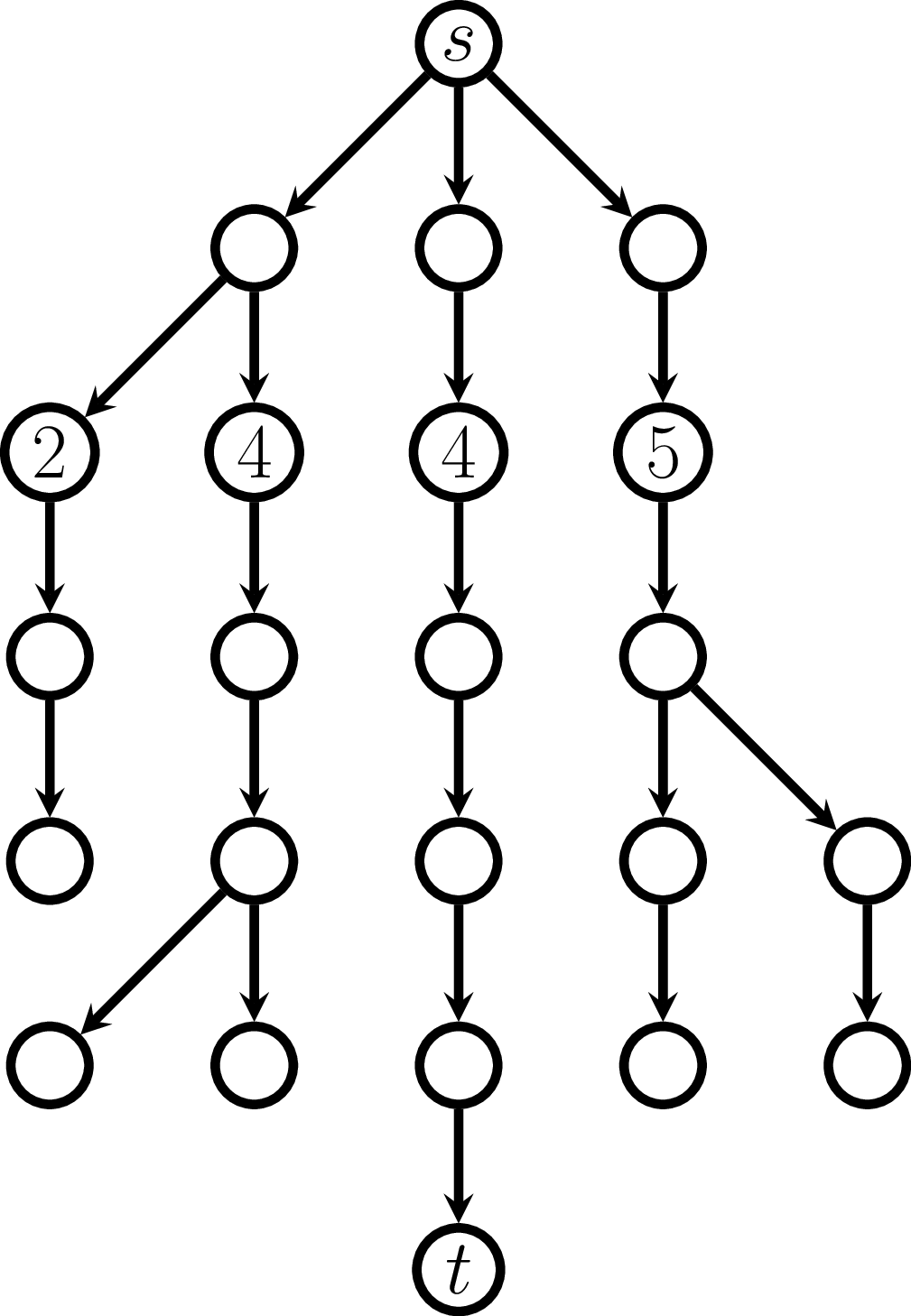}
\hfill
\includegraphics[width=2in, bb=0 0 287 418]{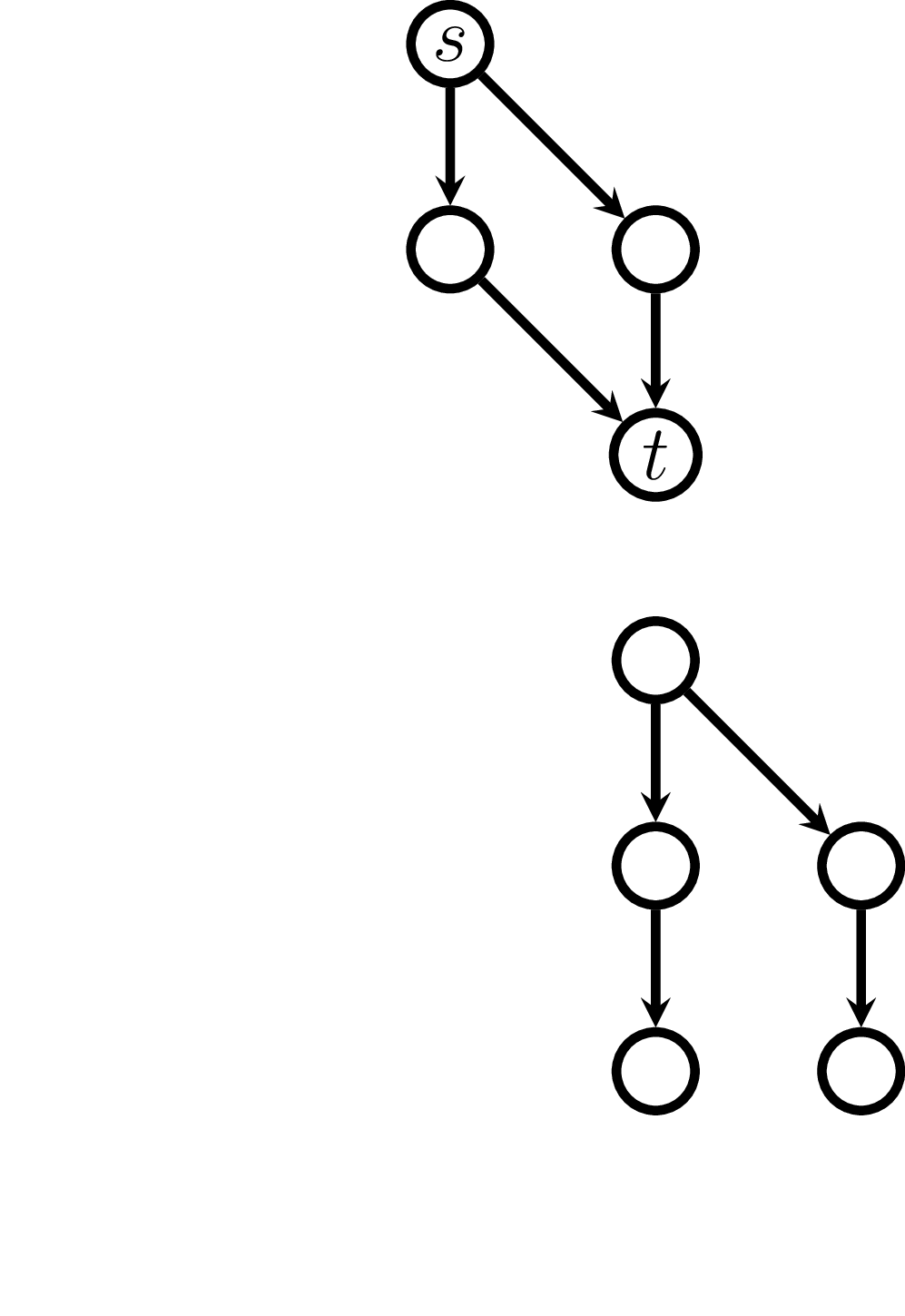}
\hfill
\includegraphics[width=2in, bb=0 0 286 418]{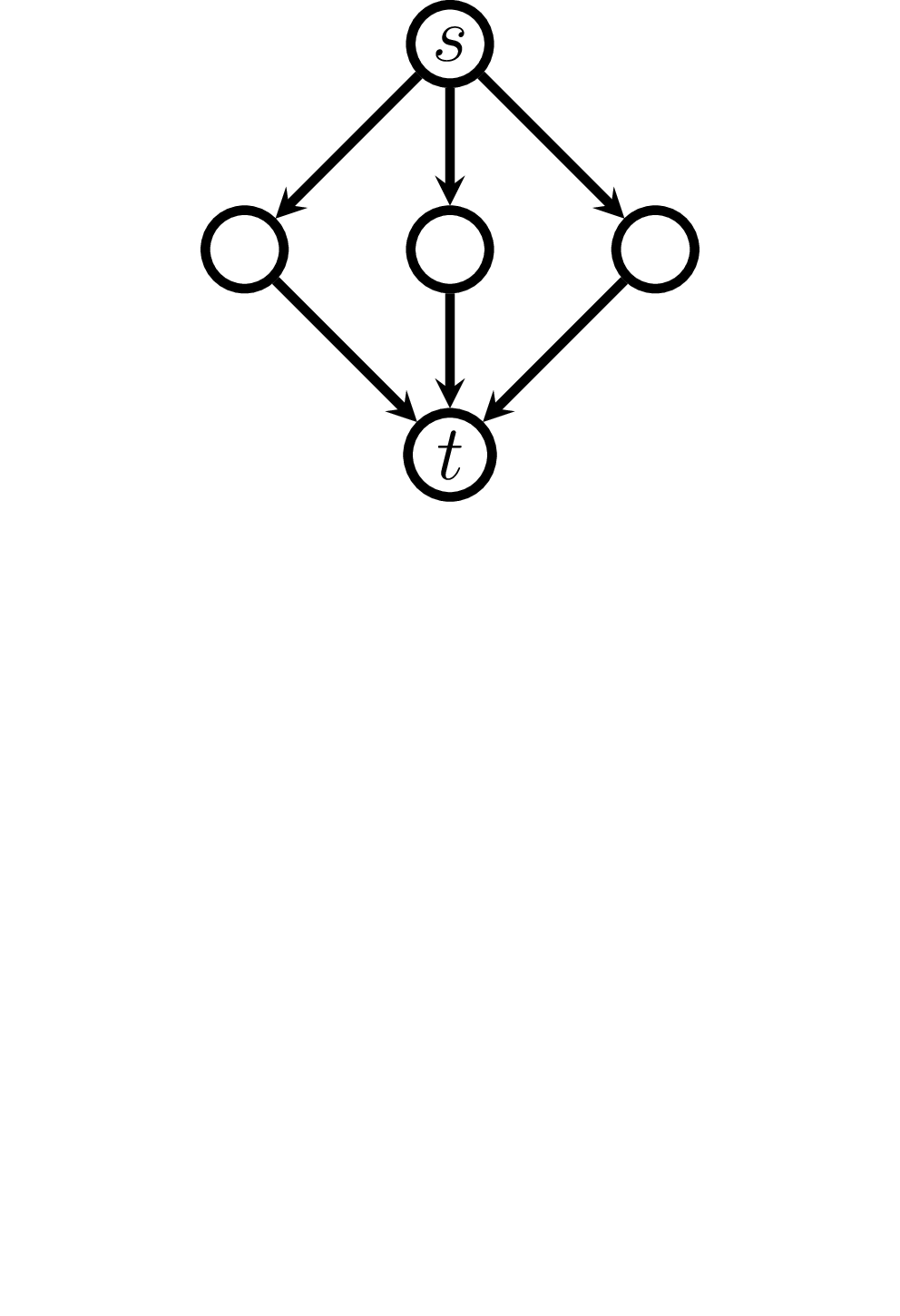}
\caption{Example of the manipulation of $G$ in Case 2, Subcase 2 of the
  proof of Theorem \ref{thm:out_tree}.  In the first graph,
  the numbered vertices are the sizes of their subtrees.  Notice that $k=4$.}
\label{fig:example}
\end{figure}

We now treat both subcases identically.  Let $S_1$ be the set of
vertices which are not
in $H$ and are not on a path from $s$ to
$t$.  Merge $S_1$ with $s$.  Because $H$ is the largest subtree, it has
at least
$$\frac{c_i}{\bar d_k}\ge \sqrt{c_i}$$
vertices.  Some of these vertices were removed in the
subcase where $t\in H$, but at most $\ell - k / 2 + 1$ such vertices were removed. Thus, $G$ has at least
$\sqrt{c_i}-\ell$ vertices.

After removing useless edges from the mergings, we have a collection of disconnected
trees. Let $j$ be the number of such trees and let the sizes of these
trees be
$$a_1, a_2,\hdots, a_j.$$
By Lemma \ref{lem:float}, we can reduce these trees to paths with
$$\lceil \sqrt{a_1}\rceil,~\lceil \sqrt{a_2}\rceil,~\hdots,~\lceil\sqrt{a_j}\rceil$$
vertices, respectively.  We can then add edges and perform mergings to obtain a collection of
paths from $s$ to $t$ of length $\lceil k/2\rceil$ containing at least
$$\frac{\sqrt{\sqrt{c_i}-\ell}}{3}$$
vertices.
From Theorem \ref{thm:many_paths}, we get a lower bound of
$$m(\sigma(G))=
\left(\frac{\sqrt{\sqrt{c_i}-\ell}}{3}\right)^{\Omega(\lg k/2)}=
(\sqrt{c_i}-\ell)^{\Omega(\lg k)}.$$
If $\sqrt{c_i}\ge 2\ell$, then $m(\sigma(G))= c_i^{\Omega(\lg k)}$.
Otherwise, $m(\sigma(G))= \ell^{\Omega(\lg \ell)}$ is a better lower
bound as $\ell\ge k$.

\end{proof}

A dual structure of the flow-out tree is the flow-in tree.

\begin{df}
A \textbf{\emph{flow-in tree}} $G$ is a tree with a special vertex $r$ (the \textbf{\emph{sink}})
such that there is a path to $r$ from every other vertex of $G$.
\end{df}

Theorem \ref{thm:out_tree} has the following corollary which states an
analogous bound for flow-in trees.

\begin{cor}\label{thm:in_tree}
Let $G$ be a flow-in tree with sink $t$ and a path of length $\ell$ from $s$ to $t$.
For $i\ge 1$, define $d_i$ as the number of
vertices whose ancestors, including itself, have a maximum distance of $i$ from $s$.  Define an additional
sequence $c_1$, $c_2$, $\cdots$, $c_{\lceil \lg \lg \ell \rceil}$ with
the property that $c_1=n$ and for all $i\ge 2$, the element $c_i$
satisfies
(\ref{eq:2}). Hence, $m(\sigma(G))$ satisfies the bounds (\ref{eq:1}). 
\end{cor}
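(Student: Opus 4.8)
\noindent\textit{Proof proposal.} The plan is to deduce Corollary \ref{thm:in_tree} from Theorem \ref{thm:out_tree} by the edge-reversal symmetry of monotone switching networks -- the same device already invoked to halve the work in the proof of Theorem \ref{thm:repl}. Let $\phi$ be the transposition of $V\cup\{s,t\}$ that swaps $s$ and $t$ and fixes every other vertex. For a directed graph $K$ on $V\cup\{s,t\}$, let $\psi(K)$ be obtained by reversing every edge of $K$ and then applying $\phi$ to the vertex labels; equivalently, $u\to v$ is an edge of $\psi(K)$ exactly when $\phi(v)\to\phi(u)$ is an edge of $K$. Then $\psi$ is an involution, it carries a flow-in tree with sink $t$ to a flow-out tree with root $s$ (reversing the edges makes the old sink a root, and the relabeling renames it $s$), it preserves the underlying undirected tree, it takes a directed $s$--$t$ path of length $\ell$ to a directed $s$--$t$ path of length $\ell$, and since it commutes with permutations of $V$ we have $\psi(\sigma(K))=\sigma(\psi(K))$.

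First I would establish that $m(\sigma(K))=m(\sigma(\psi(K)))$ for every such $K$. Take a minimal sound monotone switching network $G'$ accepting $\sigma(K)$, and form $\overline{G'}$ on the same undirected graph with the same terminals $s',t'$, replacing each edge label $a\to b$ by $\phi(b)\to\phi(a)$. A fixed $s'$--$t'$ path in $\overline{G'}$, read with these new labels, is label-compatible with a graph $L$ precisely when the same path read with the original labels of $G'$ is label-compatible with $\psi(L)$ (this is immediate from the definition of $\psi$, and needs no reversal of the path, as $G'$ is undirected). Hence $\overline{G'}$ accepts $L$ iff $G'$ accepts $\psi(L)$. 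Applying this with $L\in\sigma(\psi(K))$, so that $\psi(L)\in\sigma(K)$, shows $\overline{G'}$ accepts all of $\sigma(\psi(K))$; applying it with $L$ having no $s$--$t$ path, so that $\psi(L)$ has none either, shows $\overline{G'}$ inherits soundness from $G'$. This gives $m(\sigma(\psi(K)))\le m(\sigma(K))$, and the reverse inequality follows by running the same argument on $\psi(K)$ and using $\psi\circ\psi=\mathrm{id}$.

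Finally I would apply Theorem \ref{thm:out_tree} to the flow-out tree $\psi(G)$, which has root $s$ and an $s$--$t$ path of length $\ell$, and translate the statistics back. The descendants of a vertex $v$ in $\psi(G)$ are exactly the ancestors of $v$ in $G$ together with $v$ itself, and the distance from the root of $\psi(G)$ to such a descendant $w$ equals the distance from $w$ to $t$ in $G$; therefore the sequence $d_i$ defined for $\psi(G)$ as in Theorem \ref{thm:out_tree} is precisely the sequence $d_i$ attached to the flow-in tree $G$ in the corollary, and likewise $c_1,\dots,c_{\lceil\lg\lg\ell\rceil}$ agree. Substituting $m(\sigma(G))=m(\sigma(\psi(G)))$ into the bound \eqref{eq:1} for $\psi(G)$ then yields the corollary. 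The only point requiring care is this last translation of the combinatorial statistics -- in particular checking that the corollary's $d_i$ is exactly the $\psi$-image of the flow-out statistic of Theorem \ref{thm:out_tree} and not an off-by-one variant -- together with the two-directional bookkeeping in the soundness/acceptance equivalence for $\overline{G'}$; neither is difficult, but both must be done explicitly.
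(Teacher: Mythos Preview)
Your proposal is correct and is exactly the paper's approach: the paper's proof is the two-sentence assertion ``reverse the direction of every edge, and swap the labels of $s$ and $t$; the obtained tree satisfies the hypothesis of Theorem \ref{thm:out_tree}, so the same bounds hold.'' You have simply written out in full the details the paper leaves implicit---the involution $\psi$, the explicit construction of $\overline{G'}$ showing $m(\sigma(K))=m(\sigma(\psi(K)))$, and the dictionary between the flow-in and flow-out statistics---so there is nothing to add.
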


\begin{proof}
We reverse the direction of every edge, and swap the labels of $s$ and
$t$.  The obtained tree satisfies the hypothesis of Theorem
\ref{thm:out_tree}.  Thus, the same bounds hold.
\end{proof}

\begin{proof}[Proof of lower bound]

For this graph $G$, let $H_s$ be the graph induced by the set of
vertices $v$ for which a path from $s$ to $v$ exists.  Let $H_t$ be the
graph induced by the set of vertices $v$ from which a path from $v$ to
$t$ exists.  In both cases, we do not include the vertices on the path
from $s$ to $t$.

The proof of this bound is divided into two parts.  The first is to
show that $m(\sigma(G))\ge (\ell+\bar d)^{\Omega(\lg \ell)}.$
The second is to show that $m(\sigma(G))\ge (c_i^s+c_i^t)^{\Omega(2^i)}$ for
all $i$.  We then take the geometric mean of these two bounds.

For the first part, we can merge all of the vertices of $H_s$  with $s$ and all
of the vertices of $H_t$ with $t$.  In the graph $G_{(H_s,H_t)}$, there
may be edges not on the path from $s$ to $t$ which are connected to $s$
or $t$.  These edges are directed towards $s$ or away from $t$,
so they are useless.  We can remove these useless edges to obtain a
graph consisting of a single path from $s$ to $t$ and a collections of
trees which are disconnected from the main path.  Let the size of these $k$
trees be $a_1,~a_2,~\hdots ~,~a_k$.  Notice that $a_1+a_2+\cdots +a_k=\bar
d$.  From Lemma \ref{lem:float}, we can
reduce these trees to paths with $\left\lceil \sqrt{a_1}\right\rceil,~\left\lceil \sqrt{a_2}\right\rceil,~\hdots~,~\left\lceil \sqrt{a_k} \right\rceil$ vertices.
We can link these paths into a long path of length at least
$$\sum_{i=1}^{k}\sqrt{a_i}\ge \sqrt{\bar d}.$$  We can then merge this
long path with the path from $s$ to $t$ to create a collection of
disjoint paths of length $\ell$ from $s$ to $t$.  The total number of
vertices is at least $\sqrt{\bar d}/3+\ell$.
From Theorem \ref{thm:many_paths} we get a lower bound of
$$m(\sigma(G))\ge \left(\frac{\sqrt{\bar d}}{3}+\ell \right)^{\Omega(\lg
  \ell)}= (\bar d+\ell)^{\Omega(\lg \ell)},$$
as desired.

For the second part, merge the all vertices not in $H_s$ nor on the path from $s$ to $t$ with $t$.  We are then
left with a flow-out tree.  By Theorem \ref{thm:out_tree}, $m(\sigma(G))\ge (c_i^s)^{\Omega(2^i)}.$
Using a symmetric argument with a flow-in tree, by Corollary
\ref{thm:in_tree}, we have $m(\sigma(G))\ge (c_i^t)^{\Omega(2^i)}.$
Thus, $m(\sigma(G))\ge (c_i^s+c_i^t)^{\Omega(2^i)},$ as desired.
\end{proof}

\subsection{Upper Bound}

\begin{proof}[Proof of upper bound]
Like in the proof of the lower bound, let $H_s$ be the graph induced by the set of
vertices $v$ for which a path from $s$ to $v$ exists.  Let $H_t$ be the
graph induced by the set of vertices $v$ from which a path from $v$ to
$t$ exists.  In both cases, we do not include the vertices on the path
from $s$ to $t$.

First, take the vertices not in $H_s$, $H_t$, or the path from $s$ to $t,$ and remove any
edges connected to them.  Thus, $G$ is  a flow-in tree, a flow-out
tree, and a collection of disconnected vertices.  We use a construction very
similar to that used in the proof of Theorem \ref{thm:out_tree}.  

\begin{figure}
\begin{tabular}{MM}
\Huge {$G_1$} & \includegraphics[width = 4in, bb=0 0 652 141]{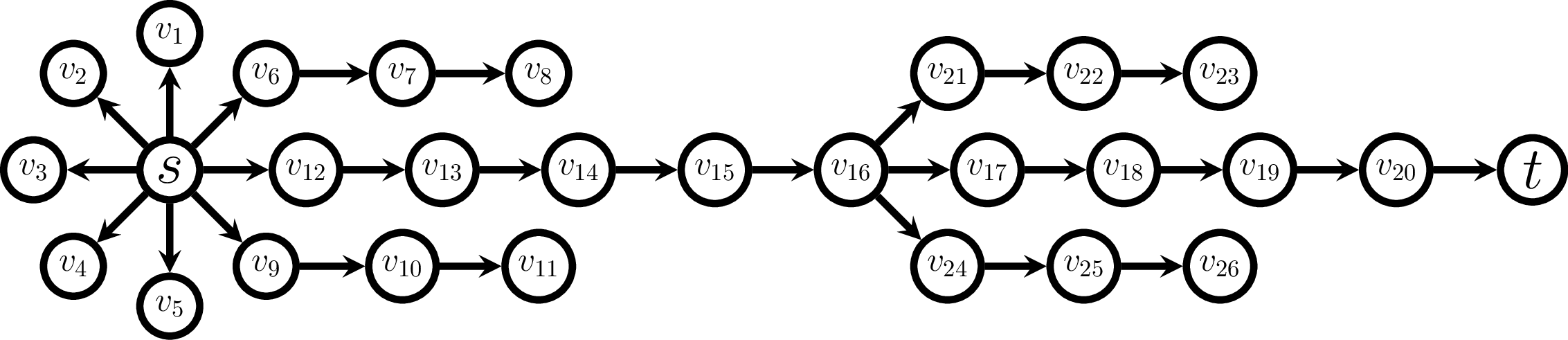}\\
\Huge {$G_2$} & \includegraphics[width = 4in, bb=0 0 652 141]{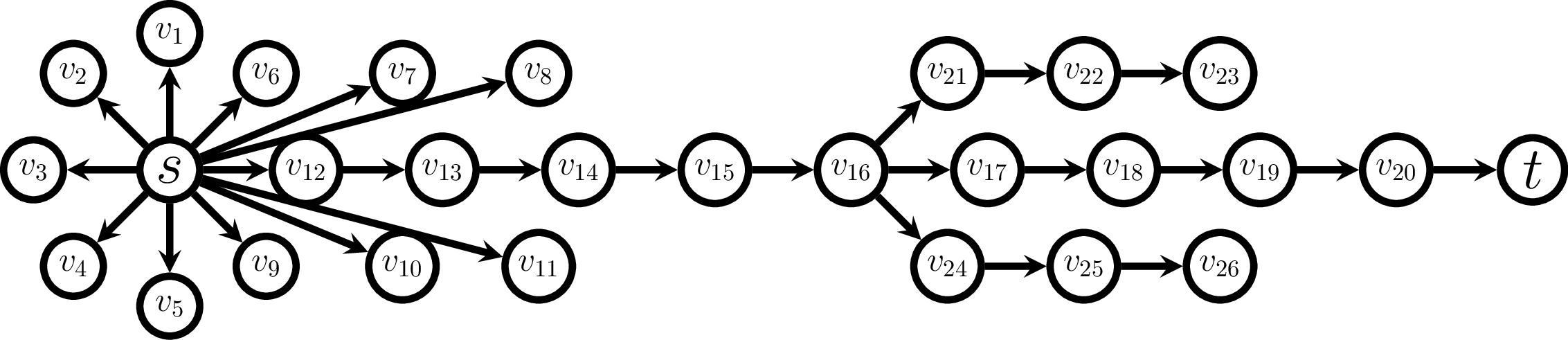}\\
\Huge {$G_3$} & \includegraphics[width = 4in, bb=0 0 652 225]{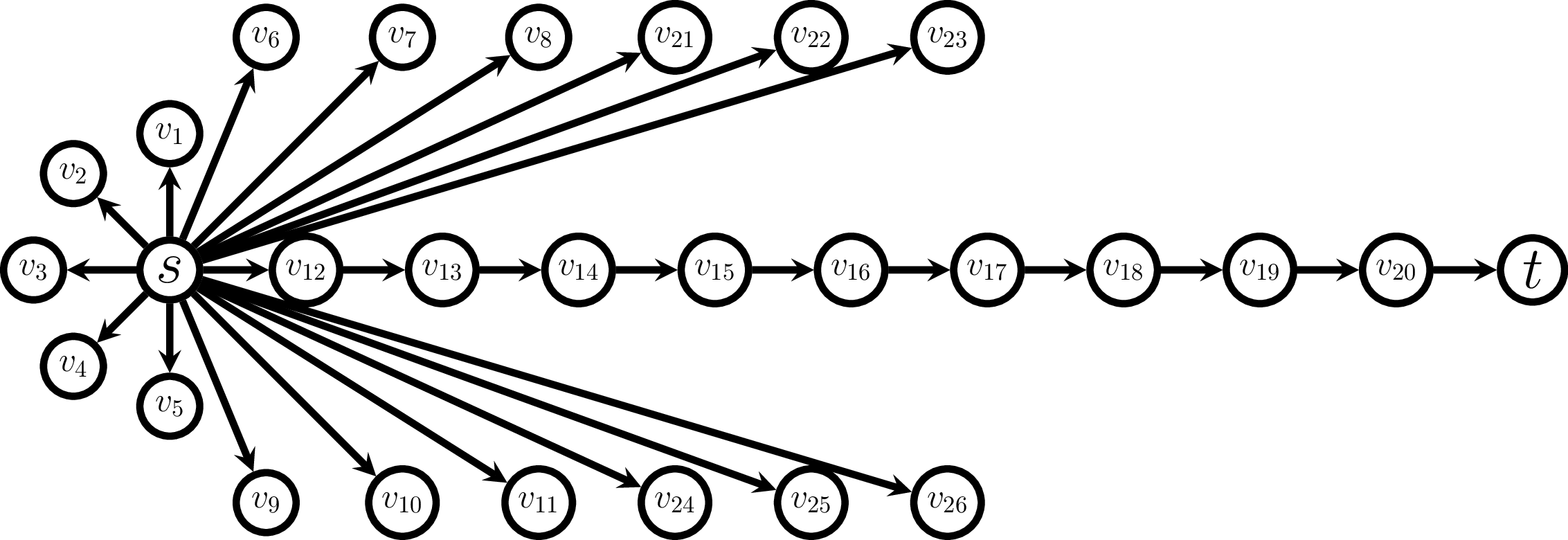}
\end{tabular}

\caption{Example of a sequence of graphs $G_1,G_2,G_3$ used in proving the upper bound
  in Theorem \ref{thm:out_tree}.  Note that in this example, $G = G_1$.}
\label{fig:up}
\end{figure}

Let $\bar c$ be the number of points which are at a distance of more than
$\ell$ from $s$ or to $t$.  Construct a new graph $G_1$ where all the edges
connected to these $\bar c$ points are removed.  We now construct
graphs $G_2,~\hdots,~G_{\lceil \lg \lg \ell\rceil+1}$ inductively as
follows.  Given $G_{i-1}$, let $P^s_i$ be the set of vertices which are at a
distance of less than $2^{2^{i-1}}$ from $s$ and do not have any
children.  Let $P^t_i$ be the set of vertices which are at a distance of
less than $2^{2^{i-1}}$ to $t$ and do not
have any ancestors.  Also, add to $P_i^s$ any vertices which are on the
paths from $s$ to $P_i$; and add to $P_i^t$ any vertices
which are on paths from $P_i^t$ to $t$.  To create $G_i$, remove the edges connecting $P_i^s$ to $G_{i-1}$
and add edges directly from $s$ to $P_i^s$.  Similarly, remove edges
connected $P_i^t$ to $G_{i-1}$ and add edges directly to $t$. From the definition of $c_i^s+c_i^t$,
there are at least $n-c_i^s-c_i^t$ vertices which are directly connected to $s$
or $t$. Refer to Figure \ref{fig:up} for an
example of this construction.  

In the graph $G_{\lceil \lg \lg \ell \rceil+1}$, every vertex is
directly connected via a single edge to $s$ or $t$, on the path from $s$ to
$t$, or disconnected from the graph.  From Theorem \ref{thm:bbox}, there
is a monotone switching network $G'_{\lceil \lg \lg \ell \rceil+1}$ of size
$n^{O(1)}(\ell+\bar c+\bar d)^{O(\lg \ell)}$
accepting the elements of $\sigma(G_{\lceil \lg \lg \ell \rceil+1})$.

Next, we construct inductively the sequence of sound monotone switching
networks $G'_{\lceil \lg \lg \ell \rceil},~\hdots,~G'_2,~G'_1$
such that $G'_{i}$ accepts the elements of $m(\sigma(G_i))$.  In the
monotone switching network $G'_{i+1}$, consider an edge with a label of
the form $s\to a$.  Consider a graph $\bar{G}_{i+1}\in \sigma(G_{i+1})$
which crosses that edge in its accepting path.  The corresponding graph
$\bar{G}_i$ may not be able to cross that edge because the edge was
deleted and became a path of length at most $2^{2^{i-1}}$ in $\bar G_i$ from $s$ to $a$.  Thus, to construct $G'_i$, we replace that
edge with a monotone switching network which checks if there is a path
of length at most $2^{2^{i-1}}$ from $s$ to $a$, assuming at most $c_i^s+c_i^t+\bar d +\ell$ vertices are not lollipops.

To construct the monotone switching network, we take the monotone switching network of size $n^{O(1)}(c_i^s+c_i^t+\bar d +\ell)^{O(2^i)}$ guaranteed by Theorem \ref{thm:bbox}, which checks if there if there is a path of length at most $2^{2^{i-1}}$ from $s$ to $t$ on $V(G\setminus \{a\})\cup \{s,t\}$, assuming at most $c_i^s+c_i^t+\bar d +\ell$ vertices are not lollipops.  Now for each edge of the switching network, whose label is of the form $v\to t$, replace it with two parallel edges, one of the form $v\to a$ and the other of the form $v\to t$.  It is easily verified this modified monotone switching network does indeed check whether there is a path from $s$ to $a$ or a path from $s$ to $t$, given that at most $c_i^s+c_i^t+\bar d +\ell$ vertices are not lollipops.  We can deal with edges in $G_{i+1}'$ with labels of the form $a\to t$ in a similar way.  The number of edges in each such checker is at most
$$n^2\left(n^{O(1)}(c_i^s+c_i^t+\bar
  d)^{O(2^i)}\right)^2=n^{O(1)}(c_i^s+c_i^t+\bar d+\ell)^{O(2^i)}.$$
Thus, the number of edges of $G_i'$ is at most
$$|E(G_i')|\le |E(G_{i+1}')|n^{O(1)}(c_i^s+c_i^t+\bar d+\ell)^{O(2^i)}.$$
Thus,
\begin{align}
V(G')\le E(G')\le E(G_1') &=E(G_{\lceil \lg \lg \ell\rceil +1}')\prod_{i=1}^{\lceil
                \lg \lg \ell\rceil}n^{O(1)}(c_i^s+c_i^t+\bar
              d+\ell)^{O(2^i)}\notag \\
              &=n^2(n^{O(1)}(\ell+\bar c+\bar d)^{O(\lg \ell)})^2n^{O(\lg \lg
                \ell)}\prod_{i=1}^{\lceil \lg \lg
                \ell\rceil}(c_i^s+c_i^t+\bar d+\ell)^{O(2^i)}\notag \\
              &=n^{O(\lg \lg \ell)}(\ell+\bar c+\bar d+\ell)^{O(\lg \ell)}\prod_{i=1}^{\lceil \lg \lg
                \ell\rceil}(c_i^s+c_i^t+\bar d)^{O(2^i)}.\label{eq:end}
\end{align}
From its definition, $\bar c \le c_i^s+c_i^t$, for all $i$. Therefore when $i = \lg \lg \ell$, we have $$\bar c^{O(\lg \ell)}\le (c_i^s+c_i^t+\bar d+\ell)^{O(2^i)},$$
and $$\prod_{j=1}^{\lceil \lg \lg \ell\rceil}\bar d^{O(2^j)}\le \bar d^{O(\lg \ell)}.$$
Hence, inequality (\ref{eq:end}) is equivalent to
$$m(\sigma(G))\le n^{O(\lg \lg \ell)}(\ell+\bar d)^{O(\lg \ell)}\prod_{i=1}^{\lceil \lg \lg
                \ell\rceil}(c_i^s+c_i^t)^{O(2^i)}.$$
\end{proof} 

\section{Conclusion} 

Sound monotone switching networks provide an insightful way of analyzing
monotone computation.  Previously, Potechin~\cite{Potechin1, Potechin2} found tight bounds in the case where the inputs were the permutation sets of very specific kinds of trees
and acyclic graphs.  From these
earlier results, we proved in Theorem \ref{thm:all_tree} nearly tight bounds for all directed
trees.  These bounds give us insight into the structure of space-efficient
monotone computation.  From Theorem \ref{thm:all_tree}, the exponent
for $c_1^s+c_1^t$ is orders of magnitude smaller than the exponent for $c_{\lceil \lg
  \lg \ell\rceil}^s+c_{\lceil \lg \lg \ell \rceil}^t$.  We can infer
from this that monotone computation is more effective at analyzing vertices
closer to $s$ and $t$ than vertices which are farther.  This
suggests that the optimal algorithm for ST-connectivity in a monotone
computation model is akin to a breadth-first search.

Possibilities of future investigation include:

\begin{itemize}
\item Generalize the bounds to permutation sets of
all acyclic graphs and eventually all graphs.
\item Improve known bounds.  Currently, these bounds are
  within a factor of $O(\lg \lg \ell)$ in the exponent.  Can this be
  improved to a factor of $O(1)$ in the exponent?
\item Find algorithms corresponding to these monotone switching
  networks.  The existence of a monotone switching network of size $m$ heuristically implies that an algorithm with $O(\log m)$ memory use
  exists, but such an algorithm may not necessarily
  exist. Much work can be devoted to determining
  whether or not these algorithms exist and finding elegant
  implementations if they indeed exist. 
\item Extend these results to non-monotone switching networks.  These
  more general structures account for all possible classical
  computations. Obtaining tight bounds in this case would solve the open log-space versus
  nondeterministic log-space problem.   
\end{itemize}

\section{Acknowledgments} 
We would like to thank Dr. Tanya
Khovanova for her helpful suggestions pertaining to this research;
Dr. Pavel Etingof for organizing the RSI mathematics research; Dr. John
Rickert, Sitan Chen, Dr. Sendova, and other RSI staff members and students for giving
suggestions to improve our paper.  We would also like to acknowledge the Massachusetts Institute
of Technology for hosting the Research Science Institute; Joshua Brakensiek's Research Science Institute sponsors
Dr. and Mrs. Daniel Dahua Zheng and the American Mathematical Society
for funding the Research Science Institute; D.E. Shaw \& Co., L.P. for
naming him a D.E. Shaw scholar; and the MIT Mathematics
Department and the Center for Excellence in Education, for making this
paper possible.  We would also like to thank Mr. Joseph DiNoto, for recommending TikZ for the graphics of this paper.  Joshua Brakensiek would also like to thank his parents, Warren and
Kathleen, for their support.

\nocite{*}
\bibliographystyle{plain}

\bibliography{biblio}

\begin{thebibliography}{1}

\bibitem{Immerman1}
N.~Immerman.
\newblock Nondeterministic space is closed under complementation.
\newblock {\em SIAM J. Comput.}, 17(5):935--938, Oct. 1988.

\bibitem{Potechin1}
A.~Potechin.
\newblock Bounds on monotone switching networks for directed connectivity.
\newblock {\em arXiv:0911.0664v5}.

\bibitem{Potechin3}
A.~Potechin.
\newblock Improved upper and lower bound techniques for monotone switching
  networks for directed connectivity.
\newblock {\em Submitted}.

\bibitem{Potechin2}
A.~Potechin.
\newblock Monotone switching networks for directed connectivity are strictly
  more powerful than certain-knowledge switching networks.
\newblock {\em arXiv:1111.2127v1}.

\bibitem{Potechin}
A.~Potechin.
\newblock Personal Conversation, July 2012.

\bibitem{Reingold1}
O.~Reingold.
\newblock Undirected st-connectivity in log-space.
\newblock {\em STOC}, 2005.

\bibitem{Savitch1}
W.~J. Savitch.
\newblock Relaionship between nondeterministic and deterministic tape classes.
\newblock {\em J.CSS}, 4:177--192, 1970.

\end{thebibliography}


\appendix

\section{Arbitrary Merging May Not Decrease $m(\sigma(G))$}\label{app:merge}

Recall from Theorem \ref{thm:merge}, that merging vertices of $G$ with
$s$ or $t$ does not increase the value of $m(\sigma(G))$.  We now
exhibit a construction which shows how
merging arbitrary vertices fails in general.

Let $\ell$ be a positive integer, construct a graph $G$ with $\ell-1$ special
sets $C_{1}$, $C_2$, $\hdots$, $C_{\ell-1}$.  Let each set contain
$\left\lfloor \frac{n-2}{2(\ell-1)}\right\rfloor$ vertices.  Let there be no edges between
any of the vertices in each $C_i$, but let all of the vertices in each
$C_i$ where $i< \ell-1$, be directed to all the vertices in $C_{i+1}$.
Additionally, let there be edges from $s$ to all the vertices of $C_1$
and edges from all vertices of $C_{\ell-1}$ to $t$. Let the other
approximately $n/2$ vertices of $G$ not already mentioned be disconnected
from the graph and from each other.

Let $H$ be the graph where for each $i\ge 1$, the vertices of $C_i$ are
merged to a single vertex $v_i$.  Notice that $H$ is merely a path of length with $\ell$ edges from $s$ to $t$ and has $n/2 + \ell - 1$ vertices.

\begin{thm}\label{thm:merge_fail}
If $n > 20$ and $n/4 > \ell$ then,
\begin{align*}
m(\sigma(G))&= n^{O(1)}\ell^{O(\ell)}\\
m(\sigma(H))&= n^{\Omega(\lg \ell)}
\end{align*}
\end{thm}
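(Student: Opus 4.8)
The plan is to prove the two bounds separately, since they are almost independent claims about the two graphs $G$ and $H$.

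For $m(\sigma(H)) = n^{\Omega(\lg \ell)}$: the graph $H$ is a single path of length $\ell$ from $s$ to $t$ together with roughly $n/2$ isolated vertices. This is essentially the setup of Theorem \ref{thm:many_paths} (or Theorem \ref{thm:all_len}), which gives a lower bound of the form (number of relevant vertices)${}^{\Omega(\lg \ell)}$. Concretely, I would merge the isolated vertices appropriately or simply invoke Theorem \ref{thm:many_paths} on the path part after noting that with $n/2 = \Theta(n)$ available vertices the permutation set $\sigma(H)$ is large enough; since $\ell < n/4$, the path has length $\ell$ and there are $\Theta(n)$ vertices total, so $m(\sigma(H)) \ge n^{\Omega(\lg \ell)}$. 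This direction should be routine once the right earlier theorem is cited.

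For $m(\sigma(G)) = n^{O(1)} \ell^{O(\ell)}$: the upper bound comes from Theorem \ref{thm:bbox} / Corollary \ref{cor:bbox}. The key observation is that in $G$, every vertex that is connected to anything at all is within distance $\ell$ of $s$ (the layers $C_1, \dots, C_{\ell-1}$, plus $s$ and $t$), and the path from $s$ to $t$ has length exactly $\ell$. The issue is that the $C_i$ vertices are \emph{not} lollipops in general — only the vertices of $C_1$ (children of $s$) and $C_{\ell-1}$ (parents of $t$) are. So I would bound the number of non-lollipop vertices: there are at most $(\ell - 3)\lfloor (n-2)/(2(\ell-1))\rfloor = O(n)$ of them, which is far too many for a good bound directly. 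The trick must instead be to use the layered structure: apply Theorem \ref{thm:repl}-style reasoning or a merge to collapse each layer. Actually, the cleanest route is: by Theorem \ref{thm:merge}, merging is allowed only with $s$ or $t$, so instead I would use the "box" network of Theorem \ref{thm:bbox} with parameter $k$ equal to the width $\lfloor (n-2)/(2(\ell-1)) \rfloor = O(n/\ell)$ after an argument that effectively only one vertex per layer matters at a time — but this needs justification. The lower bound $m(\sigma(G)) \ge$ something should also be checked to confirm the $\Theta$; here merging all the disconnected vertices and each $C_i$ down appropriately, or applying Theorem \ref{thm:repl} to turn layer edges into $s \to \cdot$ edges, reduces $G$ to a graph covered by Corollary \ref{cor:bbox} with $k = O(n/\ell)$, giving $m(\sigma(G)) = n^{O(1)} (n/\ell)^{O(\lg \ell)}$; one then checks $(n/\ell)^{O(\lg\ell)} = n^{O(1)}\ell^{O(\ell)}$ under the hypotheses $n > 20$, $n/4 > \ell$.

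The main obstacle I expect is pinning down the upper bound on $m(\sigma(G))$ precisely — specifically, arguing carefully that the grid-like structure of $G$ (with wide layers but no intra-layer edges) is handled by the lollipop machinery of Theorem \ref{thm:bbox} with the right value of $k$, rather than naively counting all $C_i$ vertices as non-lollipops. The resolution is presumably that one transforms $G$ (via Theorem \ref{thm:repl}, replacing each edge $a \to b$ with $a$ in layer $C_i$, $i \ge 2$, by $s \to b$) into a graph where all of $C_2, \dots, C_{\ell-1}$ become children of $s$, hence lollipops, so only $O(n/\ell)$ vertices of $C_1$ remain non-lollipops on the length-$\ell$ path; then Corollary \ref{cor:bbox} gives $m(\sigma(G)) \le n^{O(1)}(n/\ell)^{O(\lg \ell)}$, and since $\lg \ell = O(\ell)$ and $n/\ell < n$ this is $\le n^{O(1)}\ell^{O(\ell)}$ — but one must double-check Theorem \ref{thm:repl} gives an \emph{upper} bound of this shape, which it does not directly (it only gives lower bounds), so the actual upper bound argument must instead go through the construction in Theorem \ref{thm:bbox} applied to $G$ itself with $k$ counting genuinely non-lollipop vertices, which is all of $C_1 \cup \dots \cup C_{\ell-1}$; I would then need the finer fact that $G$ has at most $O(n)$ non-lollipop vertices but they lie in only $\ell - 1$ "columns," and re-examine whether Theorem \ref{thm:bbox}'s bound $n^{O(1)}k^{O(\lg\ell)}$ with $k = O(n)$ already suffices — it gives $n^{O(\lg \ell)}$, which is compatible with $n^{O(1)}\ell^{O(\ell)}$ only if $\lg \ell = O(\ell)$, i.e. always, times an extra $n^{O(1)}$; so in fact $n^{O(\lg \ell)} \le n^{O(\ell)}$, and this already gives $m(\sigma(G)) = n^{O(1)}\ell^{O(\ell)}$ provided $n^{O(\ell)}$ absorbs $\ell^{O(\ell)}$, which it does. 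So the upper bound is actually immediate from Theorem \ref{thm:bbox} with $k = n$, and the real content is just the lower bound on $m(\sigma(H))$ and verifying the two expressions are genuinely different under $n/4 > \ell$, $n > 20$.
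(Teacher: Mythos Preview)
Your treatment of $m(\sigma(H))$ is fine and matches the paper: invoking Theorem~\ref{thm:all_len} (or Theorem~\ref{thm:many_paths}) on a length-$\ell$ path among $\Theta(n)$ vertices gives $m(\sigma(H)) \ge n^{\Omega(\lg \ell)}$.

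The upper bound on $m(\sigma(G))$, however, has a genuine gap. Your final move is to apply Theorem~\ref{thm:bbox} with $k = n$, obtain $m(\sigma(G)) \le n^{O(\lg \ell)}$, and then claim this implies $m(\sigma(G)) \le n^{O(1)}\ell^{O(\ell)}$. That last implication is false: in the regime that matters here ($\ell$ a fixed large constant, $n \to \infty$), the target $n^{O(1)}\ell^{O(\ell)}$ is a \emph{fixed-degree} polynomial in $n$, whereas $n^{O(\lg \ell)}$ has degree growing with $\lg \ell$. Your argument that ``$n^{O(\ell)}$ absorbs $\ell^{O(\ell)}$'' shows only $\ell^{O(\ell)} \le n^{O(\ell)}$, which is the wrong direction; it does not let you pass from $n^{O(\lg \ell)}$ down to $n^{O(1)}\ell^{O(\ell)}$. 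In fact, a bound of $n^{O(\lg \ell)}$ is useless for the whole point of the theorem, since it is the \emph{same} order as the lower bound on $m(\sigma(H))$ and therefore gives no separation $m(\sigma(G)) < m(\sigma(H))$.

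None of the black-box reductions you tried (Theorem~\ref{thm:bbox} with $k = n$ or $k = n/\ell$, or Theorem~\ref{thm:repl}) can recover the bound, because they all ignore the crucial feature of $G$: consecutive layers $C_i, C_{i+1}$ are joined by a \emph{complete} bipartite graph. The paper exploits this directly with a probabilistic construction. Build a switching network $G'$ consisting of $C = n^2/p$ internally disjoint paths of length $\ell$ from $s'$ to $t'$, each labeled $s\to w_1,\, w_1\to w_2,\,\ldots,\, w_{\ell-1}\to t$ with the $w_i$ chosen independently and uniformly from $V$. Because every vertex of $C_i$ points to every vertex of $C_{i+1}$, a single such random path accepts a fixed $\bar G \in \sigma(G)$ whenever each $w_i$ happens to land in the correct layer, an event of probability at least $p = (4(\ell-1))^{-(\ell-1)}$. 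A union bound over $|\sigma(G)| \le n!$ then shows some choice of $C$ paths accepts all of $\sigma(G)$, and the resulting network has size $C\ell = n^{O(1)}\ell^{O(\ell)}$. This construction, not the lollipop machinery, is the missing idea.
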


\begin{proof}
Since $\sigma(H)$ is the set of minimal elements of $\mathcal P_{\ell}$, Theorem \ref{thm:all_len} implies that $$m(\sigma(H)) = \left(\frac{n}{2} + \ell - 1\right)^{\Omega(\lg \ell)} = n^{\Omega(\lg \ell)}.$$
To obtain the bound for $m(\sigma(G))$, we will construct the monotone switching network $G'$ as follows.  Let $$p = \left(\frac{1}{4(\ell - 1)}\right)^{\ell - 1}.$$  Construct $C = n^2/p$ internally-disjoint undirected paths of length $\ell$ between $s'$ and $t'$.  Each such path has consecutive edge labels of the form
$$s\to w_1, \,\, w_1\to w_2, \,\, \hdots, \,\, w_{\ell - 1}\to t.$$
Where each vertex $w_i$ is selected uniformly and at random.  For any particular graph $\bar{G} \in \sigma(G)$, the probability that a particular path of $G'$ accepts $\bar G$ is, if $\ell$ is sufficiently large,
$$\left(\frac{\left\lfloor \frac{n-2}{2(\ell-1)}\right\rfloor}{n}\right)^{\ell - 1} \ge \left(\frac{1}{4(\ell - 1)}\right)^{\ell - 1} = p.$$
Thus, the probability that $\bar G$ is rejected is at most $1 - p$.  This implies that the expected number of elements of $\sigma(G)$ rejected by $G'$ is at most
$$|\sigma(G)|(1-p)^C \le n!e^{-Cp} = n!e^{-n^2} < 1$$
Thus, by the probabilistic method, there exists a choice of $C = n^2/p$ paths which accept all the elements of $\sigma(G)$.  The size of $G'$ in this case is
$$|V(G')| \le C \ell = n^2 \ell / p = n^2 \ell (4(\ell - 1))^{(\ell - 1)} = n^{O(1)}\ell^{O(\ell)}$$
as desired. 
\end{proof}

Theorem \ref{thm:merge_fail} shows that in the case $n$ is arbitrarily large,
and $\ell$ is a sufficiently large constant,
$m(\sigma(G))<m(\sigma(H))$.  This demonstrates that arbitrary merging
does not necessarily decrease the value of $m(\sigma(G))$.

\section{Improved Bounds on Lemma 5.2} \label{app:float}

Although the bound of $\lceil \sqrt{V(H)}\rceil $ on the path length given in Lemma
\ref{lem:float} is strong enough for proof of Theorems
\ref{thm:out_tree} and \ref{thm:all_tree}, a stronger bound has been found.  This
results improves $\lceil \sqrt{V(H)}\rceil $ to approximately $V(H)/(2\lg V(H))$.  First, we give a definition of the quantity we are bounding.

\begin{df}
Given a graph $H$, define a \textit{\textbf{disconnected-path family}} $\mathcal P = \{P_1, \hdots, P_k\}$ to be a family of disjoint paths of $H$ such that for any two paths $P_i$ and $P_j$, there is no unidirectional path connecting these two paths.   
\end{df}

\begin{df}
Define the \textit{\textbf{size}} of a disconnected-path family $\mathcal P$ to be the total number of vertices in the family.  This quantity is denoted by $V(\mathcal P)$.
\end{df}

\begin{df} \label{df:dp_len}
Define the \textit{\textbf{disconnected-path length}} of a directed tree $H$ to be the maximal-size disconnected-path family of $H$.  We denote this quantity by $p(H)$.
\end{df}

We now prove a theorem which connects Definition \ref{df:dp_len} to Lemma \ref{lem:float}.

\begin{lem} \label{lem:msn_to_dp_len}
Let $G$ be a directed graph and $H$ be a directed tree disconnected from $G$.  Let $P$ be a path with $p(H)$ vertices.  Then, $$m(\sigma(G\cup H))\ge m(\sigma(G\cup P))$$
\end{lem}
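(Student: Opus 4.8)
The plan is to transform $G\cup H$ into $G\cup P$ by a sequence of moves none of which can increase $m(\sigma(\cdot))$: merging a set of vertices into $s$ or into $t$ (Theorem~\ref{thm:merge}), deleting useless edges (Theorem~\ref{thm:useless}), and adding edges (Corollary~\ref{thm:add2}). First I would fix a disconnected-path family $\mathcal P=\{P_1,\dots,P_k\}$ of $H$ with $V(\mathcal P)=p(H)$; here the members $P_i$ are taken to be directed paths, so the $P_i$ are vertex-disjoint and, for $i\neq j$, there is no directed path in $H$ from a vertex of $P_i$ to a vertex of $P_j$ (in particular no edge of $H$ joins two distinct $P_i$). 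Put $\mathcal V=V(P_1)\cup\dots\cup V(P_k)$, so $|\mathcal V|=p(H)$. The aim is to reduce $G\cup H$ to $G$ together with a single directed path on $p(H)$ vertices disjoint from $G$.

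The crucial step deletes every vertex of $H$ outside $\mathcal V$ by absorbing it into $s$ or into $t$. I would let $S^*$ be the set of vertices of $V(H)\setminus\mathcal V$ reachable in $H$ by a directed path starting at some vertex of $\mathcal V$, set $T^*=(V(H)\setminus\mathcal V)\setminus S^*$, and form the merge graph $(G\cup H)_{(\{s\}\cup S^*,\,\{t\}\cup T^*)}$ of Definition~\ref{def:merge}. Then I would check three non-interference facts. (ii) There is no edge of $H$ from $\mathcal V$ to $T^*$: such an edge is a directed path from $\mathcal V$ into $T^*$, putting its head into $S^*$, a contradiction; (iii) there is no edge of $H$ from $S^*$ to $T^*$, for the same reachability reason. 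Fact (i), that there is no edge of $H$ from $S^*$ to $\mathcal V$, is the substantive one: prepending to such an edge $v\to p$ the directed path from $\mathcal V$ to $v$ witnessing $v\in S^*$ yields a directed walk in $H$ from a family vertex $q$ to $p$; if $q$ and $p$ lie on different $P_i$ this contradicts the disconnected-path property, while if they lie on the same $P_i$ the configuration forces either a second path between two vertices of $P_i$ or a directed closed walk, both impossible in the tree $H$. Granting (i)--(iii): in the merged graph every edge between $\mathcal V$ and an absorbed vertex has the form $p\to\bar s$ or $\bar t\to p$, hence is useless; edges inside $S^*$ or inside $T^*$ are already dropped by the merge; and no $\bar s\to\bar t$ edge is created, so the only $s$--$t$ path runs through $G$. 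Deleting the useless edges (Theorem~\ref{thm:useless}) leaves precisely $G$ together with the vertex-disjoint directed paths $P_1,\dots,P_k$, each disconnected from $G$ and from the others.

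Finally, for $1\le i<k$ I would add the edge from the last vertex of $P_i$ to the first vertex of $P_{i+1}$; these additions concatenate the $P_i$ into one directed path $P$ on $|V(P_1)|+\dots+|V(P_k)|=p(H)$ vertices, still disjoint from $G$, and by Corollary~\ref{thm:add2} each one only lowers $m(\sigma(\cdot))$. Hence
$$m(\sigma(G\cup H))\ \ge\ m(\sigma(G\cup P_1\cup\dots\cup P_k))\ \ge\ m(\sigma(G\cup P)),$$
and since $m(\sigma(G\cup P))$ depends only on the number of vertices of $P$, this is the desired bound with $|V(P)|=p(H)$.

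The step I expect to be the main obstacle is establishing fact (i) (together with verifying that the merged graph really collapses to $G$ plus isolated directed paths): that is, that the $S^*/T^*$ split never forces one non-family vertex both toward $s$ and toward $t$, equivalently that no vertex of $V(H)\setminus\mathcal V$ lies on a directed path of $H$ between two vertices of $\mathcal V$. This is exactly where acyclicity of $H$ and the disconnected-path property of $\mathcal P$ must be combined, and carrying out the case analysis cleanly --- in-neighbours versus out-neighbours, the same $P_i$ versus different ones, and the degenerate closed-walk case --- is the real content of the argument.
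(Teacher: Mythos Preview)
Your proof is correct and follows essentially the same route as the paper: fix a maximum disconnected-path family, merge the remaining vertices of $H$ into $s$ or $t$, observe that every resulting edge is useless, and then link the surviving paths by adding edges. The paper phrases the partition slightly differently---calling a non-family vertex \emph{inbound} if it has a directed path to some $P_i$ and \emph{outbound} if some $P_i$ has a directed path to it, then using maximality and the family definition to show each such vertex is exactly one of the two---which collapses your fact~(i) into a one-line ``both inbound and outbound is impossible'' observation; but the substance is identical.
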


\begin{proof}
In this proof, we construct sets $S$ and $T$ such that the vertices of $S$ are to be merged with $s$ and the vertices of $T$ are to be merged with $t$.  Let $\mathcal P = \{P_1, \hdots, P_k\}$ be a disconnected-path family of size $p(H)$.  We let $v$ be any vertex of $H$ not part of any path of $\mathcal P$.  We define $v$ to be \textit{inbound} if there exists a path $P_i$ such that there is a path from $v$ to some vertex of $P_i$.  Analogously, we define $v$ to be \textit{outbound} if there exists a path $P_i$ such that there is a path from some vertex of $P_i$ to $v$.  We now have four cases to consider

\emph{Case 1: $v$ is neither inbound nor outbound.} 

This is not possible, as $\mathcal P \cup \{\{v\}\}$ would be a disconnected-path family of size greater than $p(H)$.

\emph{Case 2: $v$ is both inbound and outbound.}

This is also not possible, as then there would exist paths $P_i$ and $P_j$ such that there is a path from one vertex of $P_i$ to some other vertex of $P_j$ via $v$.  This path contradicts the the definition of a disconnected-path family.

\emph{Case 3: $v$ is only inbound.}

Let $v$ be an element of $T$.

\emph{Case 4: $v$ is only outbound.}

Let $v$ be an element of $S$.

From these cases, we have constructed the sets $S$ and $T$.  If we performed the mergings, by definition of inbound and outbound, the only new edges connected to $s$ are directed toward $s$ and the only new edges connected to $t$ are directed from $t$.  Hence, all created edges are useless.  Removing these edges, we are left with the paths of $\mathcal P$ disconnected from $G$.  Linking these paths with additional edges, we obtain a path $P$ with $p(H)$ vertices.  Thus it follows from Theorems \ref{thm:useless} and \ref{thm:merge} and Corollary \ref{thm:add2}, $$m(\sigma(G\cup H))\ge m(\sigma(G\cup P)).$$
\end{proof} 

\begin{lem} \label{lem:awesome}
$$p(H)\ge \left\lceil \frac{V(H)}{\lg V(H)+1}\right\rceil.$$
\end{lem}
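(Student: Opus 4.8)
The plan is to prove the bound $p(H)\ge \lceil V(H)/(\lg V(H)+1)\rceil$ by induction on $V(H)$, exhibiting an explicit large disconnected-path family. The key structural observation is the same depth labeling used in the proof of Lemma \ref{lem:float}: assign each vertex $v$ a depth $d(v)$ so that every edge $v_1\to v_2$ satisfies $d(v_2)-d(v_1)=1$, hence any two vertices at the same depth have no directed path between them. So each ``level set'' $D_j=\{v: d(v)=j\}$ is automatically an antichain, and moreover a level set can be decomposed into disjoint paths only trivially (each vertex is its own path), but level sets interact nicely: paths living at disjoint ranges of depths can still fail the disconnected condition because of ancestor/descendant relations, so some care is needed.

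First I would set up the induction. If $H$ has a single vertex the bound is trivial. For the inductive step, pick a longest directed path $Q$ in $H$, say with $r$ vertices, spanning depths $d_{\min}$ through $d_{\min}+r-1$. Removing $V(Q)$ from $H$ splits $H$ into a forest $H'$ of at most $r$ components (a tree minus a path), and crucially every component of $H'$ attaches to $Q$ at exactly one vertex; by maximality of $Q$, a component hanging off a vertex of $Q$ at depth $j$ has all its vertices... actually the cleanest version: since $Q$ is a maximum-length path, every other path in $H$ has at most $r$ vertices, so $V(H)\le r\cdot(\text{number of ``columns''})$ — this is the wrong direction. Instead I would argue: the forest $H-V(Q)$ has $V(H)-r$ vertices; apply the inductive hypothesis to it to get a disconnected-path family $\mathcal P'$ of $H-V(Q)$ of size $\ge \lceil (V(H)-r)/(\lg(V(H)-r)+1)\rceil$; then I need to combine $\mathcal P'$ with $Q$ (or a sub-path of $Q$) into a disconnected-path family of $H$, paying only a mild loss.

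The combination step is the main obstacle. Paths in $\mathcal P'$ may have directed paths to or from $Q$ through the vertices of $Q$ we deleted, so we cannot simply add all of $Q$. The fix is to keep only a carefully chosen antichain-respecting piece: since $Q$ has $r$ vertices at $r$ distinct depths, and each path in $\mathcal P'$ occupies a contiguous range of depths, I would discard from $\mathcal P'$ every path whose depth range meets the depth range of $Q$, and then add $Q$ itself. The danger is that this discards too much. To control it, observe the standard trick: either $Q$ is long, in which case $\{Q\}$ alone already has $r$ vertices and $r\ge V(H)/(\lg V(H)+1)$ would follow if $r$ is a constant fraction of $\lg V(H)$-weighted... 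The honest route, and the one I expect the paper takes, is a divide-and-conquer on depth: split the depth interval $[d_{\min},d_{\max}]$ of $H$ at its midpoint $\mu$; the single level $D_\mu$ is an antichain, removing it disconnects $H$ into an ``upper'' part $H^+$ (depths $>\mu$) and ``lower'' part $H^-$ (depths $<\mu$) with \emph{no} directed paths between $H^+$ and $H^-$ once $D_\mu$ is gone. Recurse on $H^+$ and $H^-$, union the resulting families (legal, since no cross paths), and also throw in $D_\mu$ viewed as $|D_\mu|$ singleton paths — but those singletons may have paths to/from $H^\pm$, so instead just absorb the cost of dropping $D_\mu$. This gives a recursion $p(H)\ge p(H^+)+p(H^-)$ with $V(H^+)+V(H^-)\ge V(H)-|D_\mu|$ and recursion depth $O(\lg(\text{depth range}))\le O(\lg V(H))$; summing the $|D_\mu|$ losses over all $O(\lg V(H))$ levels of recursion, and using that at each level the dropped level-sets are disjoint so their total size is $\le V(H)$, I would get that we lose at most a $(\lg V(H)+1)$ factor overall, yielding $p(H)\ge \lceil V(H)/(\lg V(H)+1)\rceil$. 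Making this telescoping bookkeeping precise — in particular justifying the exact denominator $\lg V(H)+1$ rather than a constant multiple of it — is the delicate part, and I would handle it by strengthening the induction hypothesis to a weighted statement of the form ``$p(H)\ge \lceil V(H)/(\lfloor\lg V(H)\rfloor+1)\rceil$'' and checking the two halves satisfy it with their smaller logarithms.
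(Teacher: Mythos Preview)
Your divide-and-conquer on depth has a real gap at the step ``union the resulting families (legal, since no cross paths)'' and the ensuing claim $p(H)\ge p(H^+)+p(H^-)$. The absence of cross paths holds only in $H\setminus D_\mu$; in $H$ itself, vertices of $H^-$ can still reach vertices of $H^+$ via the deleted level $D_\mu$, so the union of a disconnected-path family of $H^-$ with one of $H^+$ need not be a disconnected-path family of $H$. Concretely, take the broom $r\to a$ together with $a\to b_1,\dots,a\to b_k$ for $k\ge 3$. The depth range is $\{0,1,2\}$, so $\mu=1$, $D_\mu=\{a\}$, $H^-=\{r\}$, $H^+=\{b_1,\dots,b_k\}$, and $p(H^-)+p(H^+)=1+k$. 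But $p(H)=k$: any family containing $r$ or $a$ collapses to a single path of size at most $3$ (since $r$ and $a$ have directed paths to every $b_i$), and otherwise the best is the $k$ leaves. Thus $p(H)<p(H^+)+p(H^-)$ and the recursion fails at the very first split. The same example shows $p(H)<p(H\setminus D_\mu)=k+1$, so ``absorbing the cost of dropping $D_\mu$'' does not rescue the inequality either: removing vertices can \emph{increase} $p$ by severing connections, and that is exactly the wrong direction for a lower bound on $p(H)$.

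The paper avoids this obstruction by never trying to combine families across a cut. For a flow-out tree it assigns each vertex a Strahler-type label $j(v)$: leaves get $1$; an internal vertex gets the maximum child label, bumped by $1$ when that maximum is attained by two or more children. Then (i) $j(v)=k$ forces at least $2^{k-1}$ descendants, so $j\le \lg V(H)+1$ everywhere; and (ii) for each fixed $i$ the set $\{v:j(v)=i\}$ is itself a disconnected-path family in $H$ (since $j$ is nonincreasing along root-to-leaf paths and each vertex has at most one child with the same label, the level sets of $j$ break into maximal chains that are pairwise incomparable). Pigeonhole over the at most $\lg V(H)+1$ label classes then yields one class of size at least $\lceil V(H)/(\lg V(H)+1)\rceil$. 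The key idea you are missing is this labeling: it partitions $V(H)$ into $O(\lg V(H))$ pieces each of which is \emph{already} a legal family in $H$, so no merging step---and hence no reappearance of cross paths---is ever required.
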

\begin{proof}
Let $r$ be the root of $H$.  Give the vertices of $H$ a labeling $j : V(H) \to \mathbb Z^+$ such that for each vertex $v$:
\begin{itemize}
\item If $v$ has no children, then $j(v) = 1$.
\item If $v$ has children $w_1,\hdots, w_c$, and $j(w_i)$ has a unique maximum, then let $$j(v) = \max_{1\le i \le c}j(w_i).$$
\item If $v$ has children $w_1,\hdots, w_c$, but multiple children have the same maximal $j$, then let $$j(v) = 1+\max_{1\le i\le c}j(w_i).$$
\end{itemize}

We will now show that $j(v) \le \lg V(H) +1$ for all $v$.  From the definition of $j$, we have that if $j(v) > 1$, then there exist at least two descendants of $v$, $w_1$ and $w_2$ such that $$j(w_1) =j(w_2)= j(v)-1.$$  By induction, if $j(v) > k$, then there exist at least $2^k$ descendants of $v$, $w_1,\hdots, w_2^k$ such that $j(w_i) = j(v)-k$ for all $i$.  We can thus see that
$$V(H) \ge 2^{j(v)-1}.$$
Therefore, $j(v) \le \lg V(H)+1$.

Define $k_i$ to be the number of vertices $w$ such that $j(w) = i$.  Note that for each $i$, the vertices such that $j(w) = i$ form a disconnected-path family.  Thus $p(H)\ge k_i$ for all $i$. By the pigeonhole principle, there exists an $i$ such that
$$p(H) \ge k_i \ge \left\lceil\frac{V(H)}{\lg V(H) + 1}\right\rceil,$$
as desired. 
\end{proof}

\begin{cor}
Let $G$ be a graph and $H$ be a flow-out tree disconnected from $G$.
Let $P$ be a path of length $\lceil V(H)/(\lg V(H)+1)\rceil$.  Then
$$m(\sigma(G\cup H))\ge m(\sigma(G\cup P)).$$
\end{cor}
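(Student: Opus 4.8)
The plan is to read this off as an immediate combination of the two preceding lemmas. Applying Lemma \ref{lem:awesome} to the flow-out tree $H$ gives $p(H) \ge \lceil V(H)/(\lg V(H)+1)\rceil$; note that the rooted structure needed to define the labeling $j$ in that lemma is exactly what the flow-out hypothesis supplies. Applying Lemma \ref{lem:msn_to_dp_len} then gives $m(\sigma(G\cup H)) \ge m(\sigma(G\cup P'))$, where $P'$ is any path disconnected from $G$ with $p(H)$ vertices.

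The only thing left is to trade the (possibly longer) path $P'$ for the path $P$ of the stated length. I would do this by a merge-and-clean step of the kind used repeatedly above: writing $P' = v_1 \to v_2 \to \cdots \to v_{p(H)}$ and $j = p(H) - \lceil V(H)/(\lg V(H)+1)\rceil$, merge $v_1, \ldots, v_j$ with $t$. By Definition \ref{def:merge} all edges internal to this set vanish, and the lone surviving edge $v_j \to v_{j+1}$ is replaced by an edge $\bar t \to v_{j+1}$, which is useless; deleting it via Theorem \ref{thm:useless} leaves $G$ together with the disconnected path $v_{j+1}\to\cdots\to v_{p(H)}$ on exactly $\lceil V(H)/(\lg V(H)+1)\rceil$ vertices (plus the now-isolated vertex $\bar t = t$), i.e.\ a copy of $G\cup P$. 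Theorem \ref{thm:merge} and Theorem \ref{thm:useless} then give $m(\sigma(G\cup P')) \ge m(\sigma(G\cup P))$, and chaining the three inequalities completes the argument.

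I do not expect any genuine obstacle here: Lemma \ref{lem:awesome} has already done all of the combinatorial work, and what remains is purely the bookkeeping of matching vertex counts. The one point to be careful about is keeping the convention ``path of length $L$'' versus ``path on $L$ vertices'' consistent across Lemmas \ref{lem:msn_to_dp_len}, \ref{lem:awesome} and this corollary; once the number of vertices is fixed, any two disconnected paths with that many vertices yield the same value of $m(\sigma(G\cup \cdot))$, so the statement is unambiguous and the proof is just the two-step chain described above.
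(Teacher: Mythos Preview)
Your proposal is correct and matches the paper's intent: the corollary is stated without proof precisely because it is the immediate concatenation of Lemma \ref{lem:awesome} (giving $p(H)\ge \lceil V(H)/(\lg V(H)+1)\rceil$ for a rooted tree) with Lemma \ref{lem:msn_to_dp_len}. Your extra merge-and-clean step to trim $P'$ down to $P$ is sound and is the kind of routine adjustment the paper would leave implicit; your caveat about the ``length'' versus ``number of vertices'' convention is well placed but does not affect the argument.
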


\begin{cor}
Let $H$ be a flow-in tree.  Then $$p(H) \ge \left\lceil \frac{V(H)}{\lg V(H)+1}\right\rceil.$$
\end{cor}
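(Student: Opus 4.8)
The final statement to prove is the corollary asserting that for a flow-in tree $H$, $p(H) \ge \lceil V(H)/(\lg V(H)+1)\rceil$.

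\medskip

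The plan is to reduce this to Lemma \ref{lem:awesome}, which already establishes the bound $p(H) \ge \lceil V(H)/(\lg V(H)+1)\rceil$ for flow-out trees. The key observation is that the notion of a disconnected-path family is invariant under reversing all edge directions: a family $\mathcal{P} = \{P_1,\dots,P_k\}$ is disconnected in $H$ precisely when it is disconnected in the reversed graph $\bar H$, since "there is no unidirectional path connecting $P_i$ and $P_j$" is a symmetric condition under reversal (a directed path from $P_i$ to $P_j$ in $H$ becomes a directed path from $P_j$ to $P_i$ in $\bar H$, and vice versa), and each $P_i$ is still a path after reversal. Consequently $p(H) = p(\bar H)$.

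\medskip

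First I would note that if $H$ is a flow-in tree with sink $r$, then the reversed tree $\bar H$ is a flow-out tree with root $r$: the defining condition that every vertex has a path to $r$ in $H$ becomes the condition that $r$ has a path to every vertex in $\bar H$. Also $V(\bar H) = V(H)$ since reversal changes no vertices. Then I would apply Lemma \ref{lem:awesome} to the flow-out tree $\bar H$ to get $p(\bar H) \ge \lceil V(\bar H)/(\lg V(\bar H)+1)\rceil = \lceil V(H)/(\lg V(H)+1)\rceil$. Combining with $p(H) = p(\bar H)$ finishes the proof.

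\medskip

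I do not expect any real obstacle here; the only point requiring a line of justification is the invariance $p(H) = p(\bar H)$, which hinges on checking that both defining properties of a disconnected-path family — the paths being disjoint paths, and the absence of a connecting unidirectional path between any two — are preserved under edge reversal. This is immediate from the symmetry of the definitions, so the corollary follows directly.
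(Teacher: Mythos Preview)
Your proof is correct and is precisely the argument the paper intends: the corollary is stated in the paper without proof, and the implicit justification is exactly the edge-reversal observation you give, reducing the flow-in case to Lemma~\ref{lem:awesome} via the invariance $p(H)=p(\bar H)$.
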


\begin{cor} \label{lem:pH_disjoint}
Let $H$ be a disjoint collection of flow-in and flow-out trees. Then $$p(H) \ge \left\lceil \frac{V(H)}{\lg V(H)+1}\right\rceil.$$
\end{cor}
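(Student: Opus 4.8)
The plan is to bootstrap from the single-tree estimate of Lemma~\ref{lem:awesome}, using the observation that disconnected-path families of mutually disconnected pieces can simply be concatenated. Write $H = H_1 \sqcup \cdots \sqcup H_m$ for the decomposition of $H$ into its flow-in and flow-out tree components, and put $n_j = V(H_j)$, so that $\sum_{j=1}^m n_j = V(H)$.

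First I would produce, for each component, a large disconnected-path family. If $H_j$ is a flow-out tree, Lemma~\ref{lem:awesome} supplies a disconnected-path family $\mathcal P_j$ of $H_j$ with $V(\mathcal P_j) \ge \lceil n_j/(\lg n_j + 1)\rceil$; if $H_j$ is a flow-in tree, reversing all of its edges turns it into a flow-out tree to which Lemma~\ref{lem:awesome} applies, and reversing the edges back does not affect the defining property of a disconnected-path family, which is symmetric under reversal of all edges, so the same bound holds. Next I would check that $\mathcal P := \bigcup_{j=1}^m \mathcal P_j$ is a disconnected-path family of $H$: its paths are pairwise disjoint, and for any two of them, either they lie in a common $\mathcal P_j$, in which case no directed path joins them by the hypothesis on $\mathcal P_j$, or they lie in different components $H_i, H_j$, which are mutually disconnected, so again no directed path joins them. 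Hence $p(H) \ge V(\mathcal P) = \sum_{j=1}^m V(\mathcal P_j)$.

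It then remains to run the elementary estimate
$$p(H) \;\ge\; \sum_{j=1}^m \left\lceil \frac{n_j}{\lg n_j + 1}\right\rceil \;\ge\; \sum_{j=1}^m \frac{n_j}{\lg V(H) + 1} \;=\; \frac{V(H)}{\lg V(H) + 1},$$
where the middle step uses $\lg n_j + 1 \le \lg V(H) + 1$ for each $j$. Since $p(H)$ is an integer that is at least $V(H)/(\lg V(H)+1)$, it is at least $\lceil V(H)/(\lg V(H)+1)\rceil$, which is the claim. There is no genuinely hard step here; the only points requiring care are the degenerate components (an isolated vertex, where $\lg n_j + 1 = 1$) and making explicit that "disjoint collection" means the components share no vertices and admit no edges between them, so that the concatenation of the $\mathcal P_j$ really does satisfy the definition of a disconnected-path family. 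The substance of the argument is entirely contained in Lemma~\ref{lem:awesome}.
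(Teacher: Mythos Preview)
Your proof is correct and follows essentially the same approach as the paper, which simply asserts that the bound follows because the function $x \mapsto \lceil x/(\lg x + 1)\rceil$ is subadditive. Your chain of inequalities is exactly a proof of this subadditivity spelled out in detail, together with the (implicit in the paper) verification that disconnected-path families of disjoint components can be combined.
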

\begin{proof}
This inequality follows from the facts that the function $$\left\lceil\frac{x}{\lg x + 1}\right\rceil$$ is subadditive.
\end{proof}

It turns out it is easy to find $p(H)$ exactly, when $H$ is a flow-out tree.  For each vertex $v$, define $d(v)$ to be the maximum number of nodes on
a path from $v$ to some other node.  Define a another function $b(v)$
with the following properties:
\begin{itemize}
\item If there are no edges leading out of $v$, then $b(v)=d(v)=1$.
\item Otherwise, let $w_1,\hdots, w_c$ be the nodes leading out of $v$,
  then
$$b(v)=\max\left(\sum_{i=1}^{c}b(w_i),d(v)\right).$$
\end{itemize}
Let $r$ be the vertex from which there is a path to every other vertex.
We now prove the following three lemmas concerning the value of
$b(r)$.
\begin{lem} \label{lem:big}
$b(r) \ge p(H)$.
\end{lem}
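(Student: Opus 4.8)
The plan is to show $b(r) \ge p(H)$ by proving a more general statement for every vertex $v$ of the flow-out tree $H$: namely that the subtree rooted at $v$ contains a disconnected-path family of size $b(v)$ in which one of the paths \emph{starts at $v$} and has at least $d(v)$ vertices (i.e., is a longest downward path from $v$). I would set this up as an induction on the height of the subtree rooted at $v$, processing vertices from the leaves upward, so that the definition of $b$ unfolds naturally.

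For the base case, if $v$ has no children then $b(v) = d(v) = 1$ and the family $\{\{v\}\}$ works. For the inductive step, let $w_1, \dots, w_c$ be the children of $v$. By the induction hypothesis, each subtree rooted at $w_i$ admits a disconnected-path family $\mathcal{P}_i$ of size $b(w_i)$ containing a path $Q_i$ that begins at $w_i$ and has at least $d(w_i)$ vertices. The key point is that since $H$ is a tree and the $w_i$ lie in distinct subtrees, there is no directed path between any vertex of the subtree of $w_i$ and any vertex of the subtree of $w_j$ for $i \ne j$; hence the union $\bigcup_i \mathcal{P}_i$ is itself a valid disconnected-path family, of size $\sum_i b(w_i)$. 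This handles the case $b(v) = \sum_i b(w_i) \ge d(v)$ directly — we just need to exhibit \emph{some} path in this family starting at a child, which we can then prepend $v$ to; but actually a cleaner move is: if we want a path starting at $v$, take the family $\bigcup_{i \ne i^*} \mathcal{P}_{i^*}$ for the index $i^*$ achieving the longest downward path, and replace $Q_{i^*}$ by $v \to Q_{i^*}$, which still has size $\sum_i b(w_i)$ (we added one vertex $v$ but... no). I would instead argue the two cases of the $\max$ separately: when $\sum_i b(w_i) \ge d(v)$, use the union family and note one of its paths can be extended up through $v$ to witness a path of length $\ge d(v)$ from $v$ after absorbing $v$ into it (the count only goes up); when $d(v) > \sum_i b(w_i)$, the single longest downward path from $v$ already has $d(v) = b(v)$ vertices and forms a family by itself meeting the requirement.

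Applying the strengthened statement to the root $r$ yields a disconnected-path family of $H$ of size $b(r)$, and since $p(H)$ is defined as the maximum size of any such family, $p(H) \ge b(r)$ — wait, that is the reverse inequality, so I must be careful: the lemma claims $b(r) \ge p(H)$, which is the \emph{upper} bound direction, so the inductive construction above actually proves $p(H) \ge b(r)$, the wrong direction. The correct approach for $b(r) \ge p(H)$ is the dual: show that \emph{every} disconnected-path family $\mathcal{P}$ of the subtree rooted at $v$ has size at most $b(v)$. The main obstacle — and the crux of the argument — is bounding how a disconnected-path family interacts with the root $v$ of a subtree: at most one path of $\mathcal{P}$ can pass through $v$ (or have $v$ as an endpoint), and that path, restricted below $v$, lies entirely in one child's subtree and has length at most $d(v)$; meanwhile every other path of $\mathcal{P}$ lies wholly within a single child subtree, and paths in \emph{different} child subtrees are automatically mutually disconnected, so by the induction hypothesis the total from the children is at most $\sum_i b(w_i)$. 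Combining, $|\mathcal{P}| \le \max(\sum_i b(w_i), d(v)) = b(v)$, where the $\max$ (rather than sum) arises because if some path uses $v$ it ``uses up'' a whole child subtree's worth of room. Taking $v = r$ gives $p(H) \le b(r)$ as desired. The one delicate case to check is when a path of $\mathcal{P}$ has $v$ itself as an interior or endpoint vertex versus when $v$ is unused, and I expect verifying the $\max$ bookkeeping in the former case — that we cannot simultaneously get $\sum_i b(w_i)$ from the children \emph{and} a separate long path through $v$ — to be where the real content lies.
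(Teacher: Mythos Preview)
Your eventual approach---induction on the subtree, splitting on whether the root lies in the family $\mathcal{P}$---is exactly the paper's. But the crucial step is not handled correctly. You write that if a path of $\mathcal{P}$ passes through $v$ it ``uses up a whole child subtree's worth of room,'' and from this you claim $|\mathcal{P}| \le \max\bigl(\sum_i b(w_i),\, d(v)\bigr)$. That justification does not give the $\max$: if the path through $v$ occupied only the subtree of one child $w_{i^*}$, the remaining paths in the other subtrees could still contribute up to $\sum_{i \ne i^*} b(w_i)$, yielding at best $d(v) + \sum_{i \ne i^*} b(w_i)$, which need not be bounded by $b(v)$.

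The missing observation is much stronger: $H$ is a \emph{flow-out} tree, so the root $r$ has a directed path to \emph{every} vertex of $H$. Consequently, if $r$ belongs to some path $P$ of $\mathcal{P}$, then any other path $P'$ in $\mathcal{P}$ would be reachable from $P$ via $r$, violating the disconnected-path condition. Hence $\mathcal{P}$ consists of that single path alone, and $V(\mathcal{P}) \le d(r)$. If instead $r \notin \mathcal{P}$, each path lies entirely in one child's subtree and induction gives $V(\mathcal{P}) \le \sum_i b(w_i)$. The $\max$ then falls out as a genuine dichotomy rather than a bookkeeping balance. (The first half of your write-up, constructing a family of size $b(r)$, proves $p(H) \ge b(r)$ and belongs to the companion lemma; you noticed this yourself, but it should be excised here.)
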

\begin{proof} 
We will prove this by induction on $|V(H)|$.  If $|V(H)| = 1$, then $b(r) = p(H) = 1$.  Assume that $|V(H)| > 1$. We shall show for any disconnected-path family $\mathcal P$, that $b(r)\ge V(\mathcal P)$. Let the children of $r$ be $w_1, \hdots, w_c$.  We have two cases to consider.

\emph{Case 1: $r$ is an element of $\mathcal P$}

In this case, $\mathcal P$ must be a path.  Thus, $b(v) \ge d(v) = V(\mathcal P).$

\emph{Case 2: $r$ is not an element of $\mathcal P$}

Since all the subtrees of $r$ are disconnected from the each other, by the inductive hypothesis, the maximal size of $\mathcal P$ is at most
$$\sum_{i=1}^{c}b(w_i) \le b(v),$$
as desired.

In either case $b(r) \ge V(\mathcal P)$; therefore $b(r) \ge p(H)$.
\end{proof}

\begin{lem}\label{lem:small}
$p(H) \ge b(r)$.
\end{lem}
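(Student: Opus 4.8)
The plan is to prove Lemma \ref{lem:small} by induction on $|V(H)|$, in each case exhibiting an explicit disconnected-path family of size $b(r)$ (which, together with Definition \ref{df:dp_len}, gives $p(H)\ge b(r)$). The base case $|V(H)|=1$ is immediate, since then $b(r)=p(H)=1$. For the inductive step I would split into two cases according to which term realizes the maximum defining $b(r)$.

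If $b(r)=d(r)$, then by definition of $d$ there is a directed path in $H$ starting at $r$ and using exactly $d(r)$ vertices. A single path is trivially a disconnected-path family, so $p(H)\ge d(r)=b(r)$. If instead $b(r)=\sum_{i=1}^{c}b(w_i)$, where $w_1,\dots,w_c$ are the children of $r$, let $H_i$ be the subtree of $H$ rooted at $w_i$. Each $H_i$ is a flow-out tree (with root $w_i$) having strictly fewer vertices than $H$, so the inductive hypothesis supplies a disconnected-path family $\mathcal P_i$ of $H_i$ with $V(\mathcal P_i)=b(w_i)$. I would then take $\mathcal P=\bigcup_{i=1}^{c}\mathcal P_i$ and verify it is a disconnected-path family of $H$: the paths are pairwise disjoint; within each $H_i$ the family $\mathcal P_i$ is already disconnected; and for $i\ne j$ there is no unidirectional path in $H$ between a vertex of $H_i$ and a vertex of $H_j$, because in a flow-out tree every vertex reachable from a vertex of $H_i$ lies inside $H_i$. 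Hence $p(H)\ge V(\mathcal P)=\sum_i b(w_i)=b(r)$, completing the induction.

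The step I expect to require the most care is this cross-subtree argument: confirming that merging the families from the distinct subtrees of $r$ never produces a unidirectional path linking two paths of the combined family. This is precisely where the flow-out structure is essential — since all edges point away from the root and every vertex has a unique parent, a directed path never escapes the subtree rooted at its initial vertex. Everything else (the disjointness, the handling of $r$ itself, and combining with Lemma \ref{lem:big} to conclude $p(H)=b(r)$) is routine bookkeeping.
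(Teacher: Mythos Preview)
Your inductive argument is correct: the base case and the two cases on which term attains the maximum in $b(r)$ are exactly the right split, and your cross-subtree check (that in a flow-out tree no directed path can leave the subtree it starts in) is the only nontrivial point, which you handle correctly.

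The paper proves the same lemma via essentially the same underlying idea---exhibit a family by following the recursion defining $b$---but packages it differently: it fixes a topological ordering $r=v_1,\dots,v_{|V(H)|}$ and scans the list, skipping $v_i$ whenever $b(v_i)=\sum_j b(w_j)$ or $v_i$ is already reached from a previously chosen path, and otherwise adding the longest path from $v_i$ to $\mathcal P$. Unwinding that scan gives precisely your induction: the ``skip because $b(v_i)=\sum_j b(w_j)$'' rule is your second case (defer to the children), and the ``add the longest path from $v_i$'' rule is your first case. Your presentation is arguably cleaner, since the paper asserts ``from the recursion, we know that $b(r)$ vertices are in $\mathcal P$'' without spelling out the inductive count; the paper's version, on the other hand, is more explicitly algorithmic and produces the family in a single pass.
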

\begin{proof}
We prove this lemma by constructing a disconnected-path family $\mathcal P$ of size $b(r)$.
Consider a topological ordering of the vertices of $H$,
$$r=v_1,~v_2,~\hdots~, ~v_{V(H)}$$
with the property that if $j<i$ then there is no path from $v_i$
to $v_j$.  We now scan through the list and determine which
elements go in $\mathcal P$.  This is decided as follows.

\begin{itemize}
\item If there is a path from any element already in $\mathcal P$ to $b_i$ or $b(v_i)=\displaystyle\sum_{i=1}^{c}b(w_i)$, then do not add $v_i$ to $\mathcal P$.
\item Otherwise, take the longest path $P_i$ from $v_i$ and add this path to $\mathcal P$. 
\end{itemize}
\begin{figure}
\begin{center}
\includegraphics[width=2in, bb=0 0 293 384]{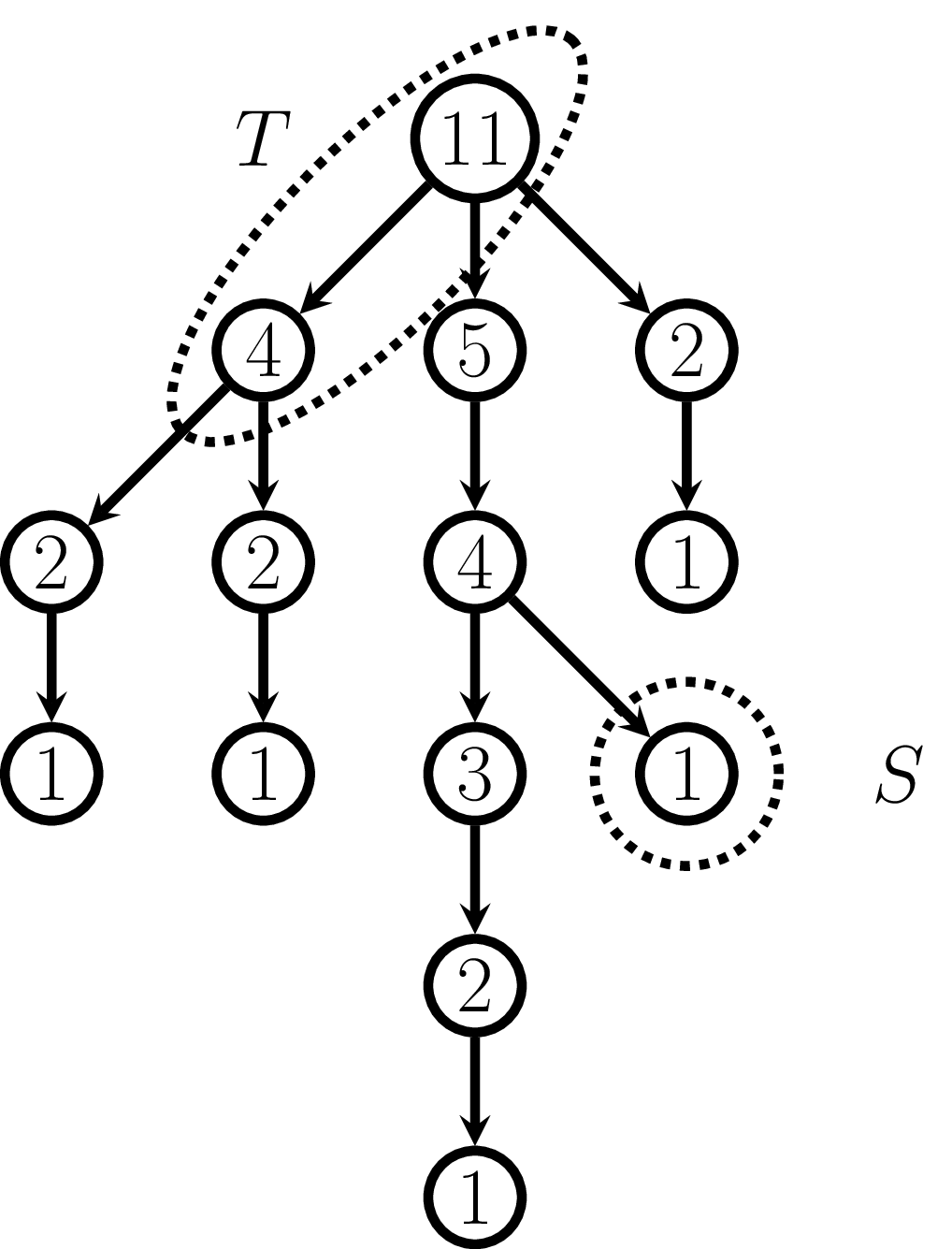}
\end{center}
\caption{Values of $b(v)$ for an example tree.  Also noted are the
  vertices which are elements of $S$ and $T$ from Lemma \ref{lem:msn_to_dp_len}.}
\label{fig:best}
\end{figure}
See Figure \ref{fig:best} for an example.  
From the recursion, we know that $b(r)$ vertices are in $\mathcal P$.  For each path of $H$ which is preserved, all the
vertices leading out of it and leading into it are not in $\mathcal P$.  Hence, $\mathcal P$ is a disconnected-path family and $p(H) \ge b(v)$.
\end{proof}

\begin{note}
By Lemmas \ref{lem:big} and \ref{lem:small}, we have that $b(r) = p(H)$.
\end{note}

From these Lemmas, $p(H)$ can be computed exactly in linear time using the recursive of $b(r)$. 
\begin{rem}
Utilizing this recursion to compute $p(H)$, we can construct flow-out trees which show that Lemma \ref{lem:awesome} is asymptotically optimal.  See figure \ref{fig:optimal} for an example.
\end{rem}
\begin{figure}
\begin{center}
\includegraphics[height = 3in, bb=0 0 282 482]{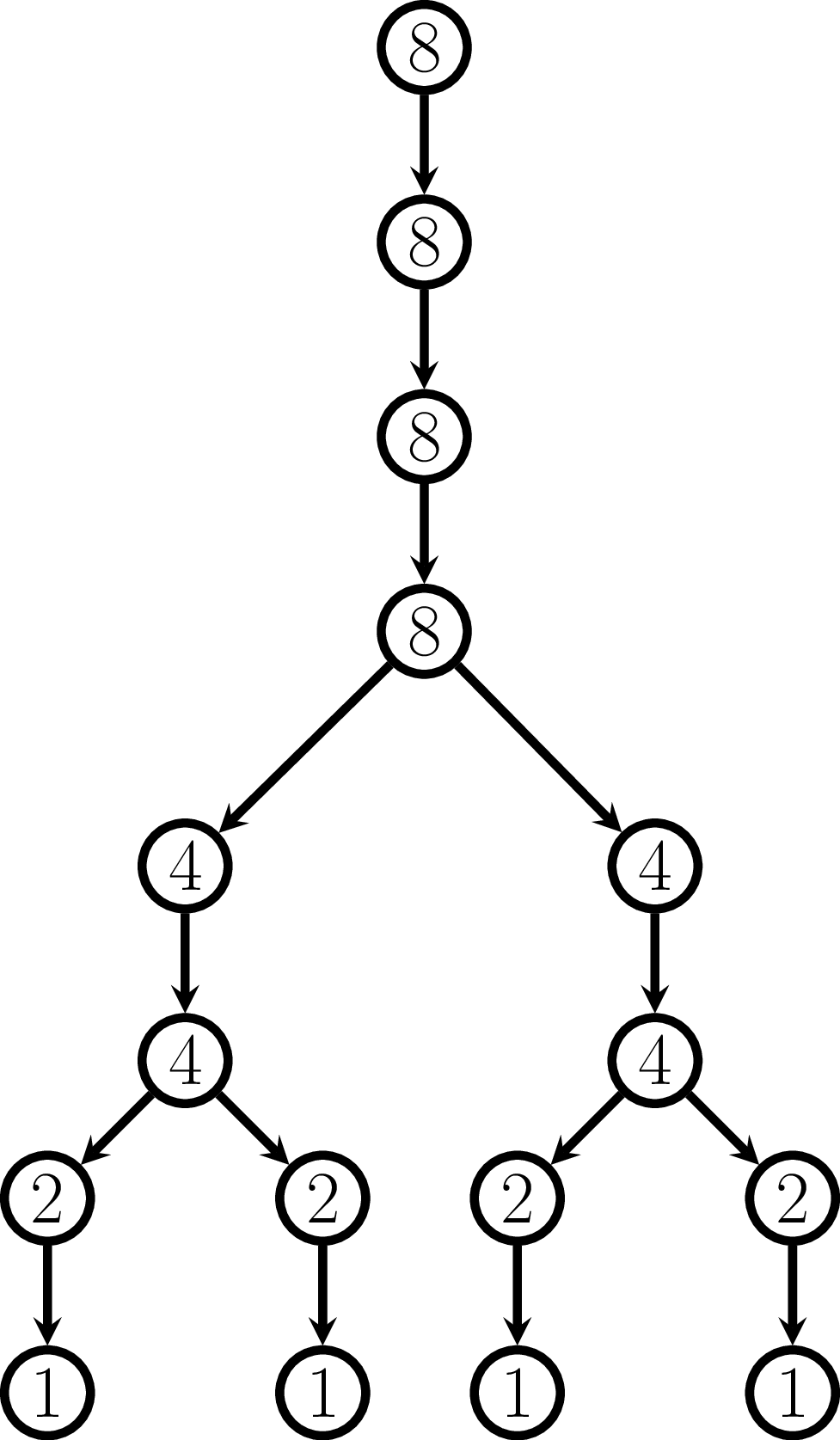}
\end{center}
\caption{Example of a flow-out tree $H$ (with $b(v)$ values added) such that a generalized construction of $H$ demonstrates Lemma \ref{lem:awesome} is asymptotically optimal.}
\label{fig:optimal}
\end{figure}

Now we consider the case that $H$ is a general directed tree.

\begin{lem} \label{lem:awesome2}
Let $H$ be a directed tree.  Then, $$p(H)\ge \left\lceil\frac{V(H)}{2(\lg V(H) + 1)}\right\rceil.$$
\end{lem}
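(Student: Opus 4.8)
The plan is to reduce the general directed tree $H$ to a disjoint union of flow-in and flow-out trees, each of which we already understand via Corollary~\ref{lem:pH_disjoint}, and then argue that passing to this union loses at most a factor of $2$ in the total number of vertices. First I would pick the root $r$ of $H$ and orient everything relative to it: for each vertex $v$, the edge to its parent is either directed toward $v$ (an "out" edge) or toward the parent (an "in" edge). I want to delete from $H$ a carefully chosen set of edges so that each remaining connected component is genuinely a flow-out tree or a flow-in tree (i.e.\ has a single source reaching all of it, or a single sink reachable from all of it).

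The key step is the edge-classification argument. Consider the natural map assigning to each edge of $H$ one of two colors according to its orientation relative to the parent. For a component to be a flow-out tree it suffices that all its edges are "out" edges emanating from its topmost vertex; similarly for flow-in. So I would remove every edge whose color disagrees with the color of the edge above it (or, more carefully, every edge at a vertex $v$ where the incoming edge from the parent points the "wrong way" relative to $v$'s edges to children). This cuts $H$ into maximal monochromatic pieces, each of which is a flow-out or flow-in tree. Since every vertex of $H$ survives (we only delete edges, not vertices) and there are $V(H)$ vertices distributed among these pieces, Corollary~\ref{lem:pH_disjoint} applied to the disjoint union $H'$ of these pieces gives a disconnected-path family of size at least $\lceil V(H)/(\lg V(H)+1)\rceil$. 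But a disconnected-path family of $H'$ need not be one of $H$, because paths that were disconnected in $H'$ might be connected by a deleted edge in $H$.

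This is where the factor of $2$ comes in, and it will be the main obstacle: I need to show that a disconnected-path family in $H'$ yields one in $H$ of at least half the size. The idea is to two-color the components of $H'$ — say by whether they are flow-out or flow-in trees — and observe that any directed path in $H$ between two vertices in the same color class either stays inside one component (fine) or must alternate direction, which is impossible for a path in a tree; more precisely, I would argue that the set of $\mathcal P$-paths lying in flow-out components already forms a disconnected-path family of $H$ by themselves (since a directed $H$-path from one flow-out component to another would have to traverse a flow-in component against its orientation), and likewise for the flow-in components. Taking whichever of these two sub-families is larger gives a disconnected-path family of $H$ with at least half of $V(\mathcal P)/(\lg V(H)+1) \cdot \tfrac{1}{1}$... er, at least $\tfrac12\lceil V(H)/(\lg V(H)+1)\rceil$ vertices, which rounds up to the claimed $\lceil V(H)/(2(\lg V(H)+1))\rceil$. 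The delicate points are verifying that same-color components are pairwise "path-disconnected" inside $H$ — which should follow from the monochromaticity of the cuts and the tree structure — and being careful with the ceilings so the halving genuinely produces $\lceil \cdot / (2(\lg V(H)+1))\rceil$ rather than something slightly weaker.
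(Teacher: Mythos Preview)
Your approach is essentially the paper's: it too picks a root (with indegree $0$), partitions $V(H)$ into two classes $\mathcal B_1,\mathcal B_2$ according to edge orientation relative to the root (phrased as BFS-style layers $B_1,B_2,\ldots$ rather than edge-coloring, but the resulting partition is the same), notes that each class is a disjoint union of flow-out, respectively flow-in, trees whose disconnected-path families lift to $H$, and applies Corollary~\ref{lem:pH_disjoint} to the larger class. The one cosmetic difference is order of operations---the paper bounds $p(\mathcal B_i)$ for each $i$ and then takes the max, whereas you bound $p(H')$ and then halve---and your ``delicate point'' (that same-color components are pairwise directed-path-disconnected in $H$, because every edge on the boundary of an out-component points \emph{into} it and every edge on the boundary of an in-component points \emph{out of} it) is exactly the fact the paper invokes when it asserts ``if $i-j\ge 2$ then the vertices of $B_i$ and $B_j$ are disconnected.''
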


\begin{proof}
Let $r$ be an arbitrary vertex with an indegree of $0$.  Let $B_0 = \{r\}$.  Let $B_1$ be the set of all vertices such that there exists a path from $r$ to any vertex of $B_1$.  Let $B_2$ be the set of all vertices not in $B_0\cup B_1$ such that there exists a unidirectional path from $B_2$ to some vertex of $B_1$.  More generally, let $B_k$ be the set of all vertices not in $B_0\cup B_1\cup \cdots \cup B_{k-1}$ such that there exists a unidirectional path from each vertex of $B_k$ to $B_{k-1}$.  Thus, if $i - j\ge 2$, then the vertices of $B_i$ and $B_j$ are disconnected. Define the two sets \begin{align*}\mathcal B_1 &= \bigcup _{i\ge 0}B_{2i+1}\text{\;\;\;and}\\\mathcal B_2 &= \bigcup _{i\ge 0}B_{2i}.\end{align*}  It is apparent that $\mathcal B_1$ is a disjoint collection of flow-out trees and that $\mathcal B_2$ is a disjoint collection of flow-in trees.  From Corollary \ref{lem:pH_disjoint}, we have that
$$p(\mathcal B_i)\ge \left\lceil\frac{V(\mathcal B_i)}{\lg V(\mathcal B_i) + 1}\right\rceil.$$
Hence,
\begin{align*}p(H) &\ge \max(p(\mathcal B_1), p(\mathcal B_2))\\ &\ge \left\lceil\frac{V(H)/2}{\lg (V(H)/2) + 1}\right\rceil\\ &\ge \left\lceil\frac{V(H)}{2(\lg V(H) + 1)}\right\rceil,\end{align*}
as desired.
\end{proof}

\begin{thm}\label{thm:algo}
There exists a linear-time algorithm, given a directed tree $H$ as input, which computes $p(H)$.
\end{thm}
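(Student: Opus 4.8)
The plan is to turn the structural recursions proved earlier in the appendix into a single linear-time traversal. The key observation is that everything needed to compute $p(H)$ for a general directed tree $H$ is already present in the stratification $B_0, B_1, B_2, \ldots$ of Lemma~\ref{lem:awesome2}, together with the exact formula $p(H') = b(r')$ for flow-out trees $H'$ from Lemmas~\ref{lem:big} and~\ref{lem:small}. So the first step is to compute this stratification: root $H$ at an arbitrary indegree-$0$ vertex $r$, and assign to each vertex $v$ its \emph{level} $\lambda(v) = k$ where $v \in B_k$. Because $H$ is a tree, each edge joins vertices whose levels differ by exactly one, and $\lambda$ can be computed by a single depth-first or breadth-first pass from $r$: traversing an edge in its forward direction keeps or increases the level appropriately while traversing it backward does the opposite, and the level of a vertex is one more than the level of its unique neighbor on the path back toward $r$. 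This is $O(V(H))$ time.

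The second step is to observe that the odd levels $\mathcal B_1 = \bigcup_i B_{2i+1}$ form a disjoint union of flow-out trees and the even levels $\mathcal B_2 = \bigcup_i B_{2i}$ form a disjoint union of flow-in trees, exactly as in Lemma~\ref{lem:awesome2}; moreover, by the definition of a disconnected-path family, any such family lies entirely within the union of levels of a single parity (paths in levels differing by $\geq 2$ are disconnected, but adjacent levels are connected so a family spanning both parities would contain a connecting unidirectional path). Hence
$$p(H) = \max\bigl(p(\mathcal B_1),\, p(\mathcal B_2)\bigr).$$
Wait — this identity needs care: a disconnected-path family need not respect the even/odd split because a single path could sit inside one $B_k$ only. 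Since each individual path of a disconnected-path family is entirely contained in one $B_k$ (a path within a tree lies inside a single level, as consecutive path-vertices are connected), and two paths in $B_i, B_j$ with $|i-j| \geq 2$ are automatically disconnected while those in $B_i, B_{i+1}$ are connected, the family decomposes as a disjoint union of families within each $\mathcal B_j$, giving $p(H) = \max(p(\mathcal B_1), p(\mathcal B_2))$ after all. Then each $\mathcal B_j$ is a forest of flow-out (or flow-in) trees, so $p(\mathcal B_j) = \sum p(T)$ over its component trees, and by subadditivity concerns this is just the sum of the $b(\cdot)$ values at the roots — each computable by the linear-time recursion for $b$ noted after Lemma~\ref{lem:small} (with the obvious dualization for flow-in trees: reverse all edges).

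The third step is simply to assemble the running time: the level assignment is one linear pass, extracting the two parity-forests is linear, running the $b$-recursion on each component flow-out or flow-in tree is linear in that component's size, and summing plus taking a maximum is linear. Total: $O(V(H))$. The main obstacle I anticipate is the second step's correctness claim — specifically arguing rigorously that a maximum-size disconnected-path family of $H$ never needs to combine paths from adjacent levels $B_i$ and $B_{i+1}$, and that restricting to a single parity loses nothing. This requires checking that within $\mathcal B_1$ (respectively $\mathcal B_2$) the ``no unidirectional connecting path'' condition is inherited from $H$, i.e. that a connecting path between two odd-level paths in the original tree cannot be routed through only odd-level vertices — which follows because any path in a tree between vertices at the same level must, by the $\pm 1$ level-change property, dip through an adjacent level, making the two paths disconnected in $\mathcal B_1$ as a standalone graph. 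Once this bookkeeping is settled the algorithm and its analysis are routine.
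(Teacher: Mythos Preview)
Your central claim, that $p(H) = \max\bigl(p(\mathcal B_1),\, p(\mathcal B_2)\bigr)$, is false, and with it the whole approach collapses. The simplest counterexample is a two-vertex directed path $v_1 \to v_2$: taking $r = v_1$ gives $B_0 = \{v_1\}$, $B_1 = \{v_2\}$, so $p(\mathcal B_1) = p(\mathcal B_2) = 1$, yet $p(H) = 2$. A slightly richer example: vertices $\{r,a,b,c\}$ with edges $r\to a$, $r\to b$, $c\to a$. Here $B_0=\{r\}$, $B_1=\{a,b\}$, $B_2=\{c\}$; both $\mathcal B_1$ and $\mathcal B_2$ are pairs of isolated vertices with $p=2$, but the family $\{(c\to a),\,(b)\}$ in $H$ has size $3$. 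The error is in the supporting assertion that ``each individual path of a disconnected-path family is entirely contained in one $B_k$'': the directed path $c\to a$ straddles $B_2$ and $B_1$. Lemma~\ref{lem:awesome2} only ever uses the parity split to get a \emph{lower bound} on $p(H)$; it was never claimed to be tight, and it is not.

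The paper's proof takes a completely different route. It roots $H$ at an arbitrary vertex and runs a bottom-up dynamic program with \emph{six} values $a_1(v),\ldots,a_6(v)$ at each vertex, distinguishing the cases where $v$ is excluded from the optimal family with no descendants (in the rooted sense) reachable from it, excluded with none reaching it, included as an endpoint of its path on the out-side, included as an endpoint on the in-side, included as an interior vertex, and the unconstrained optimum. The recursions combine children's values via sums and a single $\max$ correction term per direction, and $a_6$ at the root equals $p(H)$. The total work is $\sum_v O(\deg v) = O(|E(H)|) = O(V(H))$. Nothing like the $b(\cdot)$ recursion for flow-out trees alone suffices, because a single path in an optimal family can weave through vertices that would land in different $B_k$'s; the six-state DP is what tracks this correctly.
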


\begin{proof}
Let $r$ be an arbitrary vertex of $H$.  Also, let $n=V(H)$. Consider an arbitrary ordering
$$v_1 = r,v_2,\hdots, v_n$$
of the vertices of $H$ such that for any undirected path from $r$, $$(r, v_{i_1}, w_{i_2}, \hdots, w_{i_k}),$$ the indices $i_1$ to $i_k$ are in increasing numerical order.  For example, this ordering could be a level-order traversal of $H$ rooted at $r$, where the direction of the edges are ignored.  Notice that this implies at most one vertex connected to a vertex $v_j$ has an index less than $j$.

Define $V_1,V_2,\hdots, V_n$ to be subgraph of $H$ such that $V_i$ has as vertices the $v_j$ such that $j\ge i$ and there exists an undirected path from $v_j$ to $v_i$ using only the vertices $v_k$ such that $k \ge i$.  Notice that $V_1 = H$.

We now define the six functions $a_1, a_2, a_3, a_4, a_5, a_6$ from the set of vertices of $H$ to the positive integers.  There definitions are as follows.
\begin{itemize}
\item $a_1(v_i)$ is the maximal-size disconnected-path family $\mathcal P^1_i$ of $V_i$ such that $v_i$ is not a vertex of a path of $\mathcal P^1_i$ and there is no directed path from $v_i$ to any element of $\mathcal P^1_i$. 
\item $a_2(v_i)$ is the maximal-size disconnected-path family $\mathcal P^2_i$ of $V_i$ such that $v_i$ is not a vertex of a path of $\mathcal P^2_i$ and there is no directed path from any element of $\mathcal P^2_i$ to $v_i$. 
\item $a_3(v_i)$ is the maximal-size disconnected-path family $\mathcal P^3_i$ of $V_i$ such that $v_i$ is a vertex of a path of $\mathcal P^3_i$ and that this path has no edge directed from $v_i$.
\item $a_4(v_i)$ is the maximal-size disconnected-path family $\mathcal P^4_i$ of $V_i$ such that $v_i$ is a vertex of a path of $\mathcal P^4_i$ and that this path has no edge directed toward $v_i$.
\item $a_5(v_i)$ is the maximal-size disconnected-path family $\mathcal P^5_i$ of $V_i$ such that $v_i$ is a vertex of a path of $\mathcal P^5_i$.
\item $a_6(v_i)$ equals $p(V_i)$
\end{itemize}
Thus, $a_6(v_1)$ equals $p(H)$.
\begin{lem}
These six functions satisfy the following recursion.
\begin{align*}
a_1(v_i)={}&\sum_{\substack{(v_i,v_j)\in E(H)\\i\le j}}a_1(v_j)+\sum_{\substack{(v_j,v_i)\in E(H)\\i\le j}}a_6(v_j)\\
a_2(v_i)={}&\sum_{\substack{(v_j,v_i)\in E(H)\\i\le j}}a_2(v_j)+\sum_{\substack{(v_i,v_j)\in E(H)\\i\le j}}a_6(v_j)\\
a_3(v_i)={}&1+\sum_{\substack{(v_i,v_j)\in E(H)\\i\le j}}a_1(v_j)+\sum_{\substack{(v_j,v_i)\in E(H)\\i\le j}}a_2(v_j)\\ &+ \max\left(0,\max_{\substack{(v_i,v_j)\in E(H)\\i\le j}}(a_3(v_j)-a_1(v_j))\right)\\
a_4(v_i)={}&1+\sum_{\substack{(v_j,v_i)\in E(H)\\i\le j}}a_2(v_j)+\sum_{\substack{(v_i,v_j)\in E(H)\\i\le j}}a_1(v_j)\\ &+ \max\left(0,\max_{\substack{(v_j,v_i)\in E(H)\\i\le j}}(a_4(v_j)-a_2(v_j))\right)\\
a_5(v_i)={}&1+\sum_{\substack{(v_i,v_j)\in E(H)\\i\le j}}a_1(v_j)+\sum_{\substack{(v_j,v_i)\in E(H)\\i\le j}}a_2(v_j)\\ &+ \max\left(0,\max_{\substack{(v_i,v_j)\in E(H)\\i\le j}}(a_3(v_j)-a_1(v_j))\right)\\ &+ \max\left(0,\max_{\substack{(v_j,v_i)\in E(H)\\i\le j}}(a_4(v_j)-a_2(v_j))\right)\\
a_6(v_i)=&\max(a_1(v_i),a_2(v_i),a_5(v_i))
\end{align*}
\end{lem}
\begin{proof}
For each vertex $v_i$, we let $D_i^+$ be the set of vertices connected to $v_i$ by an edge directed away from $v_i$ and $D_i^-$ be the set of vertices connected to $v_i$ by an edge directed toward $v_i$.
\begin{itemize}
\item[$a_1$:] Every element $v_j$ of $D_i^+$ cannot be in $\mathcal P^1_i$, nor can vertices $v_k$ for which their is a directed path from $v_j$ to $v_k$ be in $\mathcal P^1_i$.  Hence, there can be at most $a_1(v_j)$ vertices of $V_j$ in $\mathcal P^1_i$.  For every vertex $v_j$ in $D_i^-$, any maximal-size disconnected-path family of $V_j$ can be chosen.  Thus, there can be at most $a_6(v_j)$ vertices of $V_j$ in $\mathcal P^1_i$. Thus,
$$a_1(v_i)\le \sum_{\substack{(v_i,v_j)\in E(H)\\i\le j}}a_1(v_j)+\sum_{\substack{(v_j,v_i)\in E(H)\\i\le j}}a_6(v_j)$$
Because the union of these maximal-size disconnected-path families is a disconnected-path family of $V_i$, we have that equality holds.
\item[$a_2$:] This is analogous to the proof for $a_1$, except the directions of the edges are reversed.
\item[$a_3$:] There are two cases to consider, depending on whether the path through $v_i$ in $\mathcal P^3_i$ contains any vertices besides $v_i$.  If the path does have additional vertices, let $v_j^+$ be the element of $D_i^+$ through which the path from $v_i$ in $\mathcal P^3_i$ continues. At most $a_3(v_{j^+})$ vertices of $\mathcal P^3_i$ can be from the tree $V_{j^+}$.  For any other vertex $v_{k^+}$ in $D_i^+$, at most $a_1(v_{k^+})$ vertices of $\mathcal P^3_i$ can be from the tree $V_{k^+}$.  If the path through $v_i$ does not encompass any other vertices, then each subtree $V_{k^+}$ in $D_i^+$ contributes at most $a_1(v_{k^+})$ vertices to $\mathcal P^3_i$  Each vertex of $v_{j^-}$ in $D_i^-$ can contribute up to $a_2(v_{j^-})$ vertices to $\mathcal P^3_i$.  Because there may be choices for which vertex the path from $v_i$ passes through, we must take the maximum of all the possibilities
\begin{align*}a_3(v_i)\le{}&1+\sum_{\substack{(v_i,v_j)\in E(H)\\i\le j}}a_1(v_j)+\sum_{\substack{(v_j,v_i)\in E(H)\\i\le j}}a_2(v_j)\\&+ \max\left(0,\max_{\substack{(v_i,v_j)\in E(H)\\i\le j}}(a_3(v_j)-a_1(v_j))\right)\end{align*}
As the vertices of these maximal-size disconnected-path families form a disconnected-path family of $V_i$, equality is attainable.
\item[$a_4$:] This is analogous to the proof for $a_3$, except the directions of the edges are reversed.
\item[$a_5$:] If the path through $v_i$ in $\mathcal P^5_i$ has $v_i$ itself as one of its endpoints, then we can use $\max(a_3(v_i), a_4(v_i))$ as an upper bound.  Thus, let us assume the contrary.  Let $v_{j^+}$ be the element of $D_i^+$ through which the path from $v_i$ through $\mathcal P^5_i$ continues. At most $a_3(v_{j^+})$ vertices of $\mathcal P^5_i$ can be from the tree $V_{j^+}$.  For any other vertex $v_{k^+}$ in $D_i^+$, at most $a_1(v_{k^+})$ vertices of $\mathcal P^5_i$ can be from the tree $V_{k^+}$.  Let $v_{j^-}$ be the element of $D_i^-$ from which the path in $\mathcal P^5_i$ through $v_i$ enters $v_i$. At most $a_4(v_{j^-})$ vertices of $\mathcal P^5_i$ can be from the tree $V_{j^-}$.  For any other vertex $v_{k^-}$ in $D_i^-$, at most $a_2(v_{k^-})$ vertices of $\mathcal P^5_i$ can be from the tree $V_{k^-}$. Because there may be multiple elements in $D^+_i$ or $D^-_i$, we must take the maximum of all possibilities.  These facts can be combined to yield the inequality
\begin{align*}a_5(v_i)={}&1+\sum_{\substack{(v_i,v_j)\in E(H)\\i\le j}}a_1(v_j)+\sum_{\substack{(v_j,v_i)\in E(H)\\i\le j}}a_2(v_j)\\ &+ \max\left(0,\max_{\substack{(v_i,v_j)\in E(H)\\i\le j}}(a_3(v_j)-a_1(v_j))\right)\\ &+ \max\left(0,\max_{\substack{(v_j,v_i)\in E(H)\\i\le j}}(a_4(v_j)-a_2(v_j))\right)\end{align*}
As from the previous cases, equality is attainable because the vertices of these maximal-size disconnected-path families form a disconnected-path family of $V_i$.
\item[$a_6$:] In the tree $V_i$, either there is a path through $v_i$ in $\mathcal P^i_6$, which is accounted for in $a_5$, there is no path through $v_i$ and no descendants of $v_i$ are in $\mathcal P^6_i$, which is accounted for in $a_1$, or there is no path through $v_i$ and no ancestors of $v_i$ are in $\mathcal P^6_i$, which is accounted for in $a_2$.
\end{itemize}

\end{proof}
Utilizing this recursion, we can construct a linear-time algorithm for computing $p(H)$.  First, it takes constant time to compute the values of the six functions for $v_n$.  Then, given the values of these six functions for the vertices $v_j,\hdots, v_n$, we can compute these values for $v_{j-1}$ in $O(d(v_{j-1}))$ time.  Thus, we can compute $a_6(v_1)$ in
$$O\left(\sum_{i=1}^nd(v_i)\right)=O(|E(H)|)=O(n)$$
time, as desired.
\end{proof}

\end{document}